\newcommand{\bK}{\boldsymbol{K}}
\newcommand{\bL}{\boldsymbol{L}}
\newcommand{\br}{\boldsymbol{\textbf{r}}}
\newcommand{\bz}{\boldsymbol{z}}
\newcommand{\bx}{\boldsymbol{\textbf{x}}}
\newcommand{\bA}{\textbf{A}}
\newcommand{\bAbar}{\bar{\textbf{A}}}
\newcommand{\bAtilde}{\tilde{\textbf{A}}}
\newcommand{\bB}{\textbf{B}}
\newcommand{\bH}{\textbf{H}}
\newcommand{\bHbar}{\bar{\textbf{H}}}
\newcommand{\Dbar}{\bar{D}}
\newcommand{\bGbar}{\bar{\textbf{G}}}
\newcommand{\bQ}{\textbf{Q}}
\newcommand{\bU}{\textbf{U}}
\newcommand{\bUbar}{\bar{\textbf{U}}}
\newcommand{\bV}{\textbf{V}}
\newcommand{\bVbar}{\bar{\textbf{V}}}
\newcommand{\bM}{\textbf{M}}
\newcommand{\bR}{\boldsymbol{\textbf{R}}}
\newcommand{\bS}{\boldsymbol{\textbf{S}}}
\newcommand{\bT}{\textbf{T}}
\newcommand{\bX}{\textbf{X}}
\newcommand{\bY}{\textbf{Y}}
\newcommand{\bphi}{\boldsymbol{\phi}}
\newcommand{\psibar}{\bar{\psi}}
\newcommand{\bpsi}{\boldsymbol{\psi}}
\newcommand{\bpsibar}{\bar{\boldsymbol{\psi}}}
\newcommand{\bgamma}{\boldsymbol{\gamma}}
\newcommand{\bxi}{\boldsymbol{\xi}}
\newcommand{\dr}{\,d\br}
\newcommand{\order}{\mathcal{O}}
\newcommand{\norm}[1]{\left\lVert#1\right\rVert}
\newcommand{\modulus}[1]{\left\vert#1\right\vert}
\newcommand{\Lone}{{L^1(\Omega)}}
\newcommand{\Ltwo}{{L^2(\Omega)}}
\newcommand{\Linf}{{L^\infty(\Omega)}}
\newcommand{\Hone}{{H^1(\Omega)}}
\newcommand{\al}{_{\alpha}}
\newcommand{\be}{_{\beta}}
\newtheorem{prop}{Proposition}
\newtheorem{corollary}{Corollary}[prop]
\definecolor{hellgruen}{rgb}{0.2,0.7,0.2}
\newcolumntype{M}[1]{>{\centering\arraybackslash}m{#1}}
\newcolumntype{N}{@{}m{0pt}@{}}
\DeclareMathOperator*{\argmax}{arg\,max}
\begin{document}
\raggedbottom
\title{Real-time time-dependent density functional theory using higher-order finite-element methods}
\author{Bikash Kanungo}
\affiliation{Department of Mechanical Engineering, University of Michigan, Ann Arbor, Michigan 48109, USA}
\author{Vikram Gavini}
\affiliation{Department of Mechanical Engineering, University of Michigan, Ann Arbor, Michigan 48109, USA}
\affiliation{Department of Materials Science and Engineering, University of Michigan, Ann Arbor, Michigan 48109, USA}
\begin{abstract}
    We present a computationally efficient approach to solve the time-dependent Kohn-Sham equations in real-time using higher-order finite-element spatial discretization, applicable to both pseudopotential and all-electron calculations. To this end, we develop an \emph{a priori} mesh adaption technique, based on the semi-discrete (discrete in space but continuous in time) error estimate on the time-dependent Kohn-Sham orbitals, to construct an efficient finite-element discretization. Subsequently,
    we obtain the full-discrete error estimate to guide our choice of the time-step. We employ spectral finite-elements along with special reduced order quadrature to render the overlap matrix diagonal, thereby simplifying the inversion of the overlap matrix that features in the evaluation of the discrete time-evolution operator. We use the second-order Magnus operator as the time-evolution operator, wherein the action of the discrete Magnus operator, expressed as
    exponential of a matrix, on the Kohn-Sham orbitals is obtained efficiently through an adaptive Lanczos iteration. We observe close to optimal rates of convergence of the dipole moment with respect to spatial and temporal discretization, for both pseudopotential and all-electron calculations. We demonstrate a staggering 100-fold reduction in the computational time afforded by higher-order finite-elements over linear finite-elements, for both pseudopotential and all-electron calculations. Further, for similar level of accuracy, we obtain significant computational savings by our approach as compared to state-of-the-art finite-difference methods. We also demonstrate the competence of higher-order finite-elements for all-electron benchmark systems. Lastly, we observe good parallel scalability of the proposed method on many hundreds of processors. 
\end{abstract}
\maketitle
\section{Introduction}\label{sec:Intro}
Time-dependent density functional theory (TDDFT) extends the keys ideas of ground-state density functional theory (DFT) to electronic excitations and time-dependent processes. It relies on the Runge-Gross theorem ~\cite{Runge1984} to establish, for a given initial state, a one-to-one correspondence between the time-dependent external potential and the time-dependent electronic density, thereby making the electronic density the fundamental variable to define other physical quantities. Subsequently, one invokes the Kohn-Sham \emph{ansatz} ~\cite{Kohn1965} to reduce the many-electron time-dependent Schr\"odinger equation to a set of effective single electron equations, called the time-dependent Kohn-Sham (TDKS) equations. For all practical purposes, it requires the use of approximate exchange-correlation functionals, analogous to the ground-state case. However, TDDFT offers a great balance of accuracy and computational efficiency which have enabled the study of a wide-range of time-dependent phenomena---optical ~\cite{Appel2003} and higher-order responses ~\cite{Gonze1989, vanGisbergen1997}, electron transport ~\cite{Stefanucci2004, Kurth2005}, charge-transfer excitations ~\cite{Jamorski2003, Stein2009}, dynamics of chemical bonds ~\cite{Burnus2005}, multi-photon ionization ~\cite{Tong1998, Tong2001, Telnov2009}, to name a few. 

Given the practical significance of TDDFT calculations, there has been a growing interest in developing faster and more accurate numerical methods for solving the TDKS equations, over the past two decades. Broadly, these numerical methods can be classified into two categories, characterized by the strength of the light-matter interaction, namely, linear-response time-dependent density functional theory (LR-TDDFT) ~\cite{Casida1995,Petersilka1996} and real-time time-dependent density functional theory (RT-TDDFT) ~\cite{Theilhaber1992, Yabana1996,Baer2001}. The LR-TDDFT pertains to the case of weak interaction between the external field and the system, wherein the field induces a small perturbation from the ground-state. In such perturbative regime, one can compute the linear density response from the ground-state itself, which in turn can be used for the calculation of first-order response functions such as the absorption spectra. The RT-TDDFT, on the other hand, is a more generic framework which captures the electronic dynamics in real-time, thereby, allowing to handle both perturbative and non-perturbative regimes (e.g., harmonic generation, electron transport) in a unified manner. This involves propagating the TDKS equations in real-time without any restriction to the external field in terms of its frequency, shape or intensity. This work pertains to the more general RT-TDDFT.

Despite its generality in dealing with various time-dependent processes, there are two major challenges associated with RT-TDDFT. The first stems from the quality of the time-dependent exchange-correlation approximation used in the TDKS equations. The exact exchange-correlation functional is, in general, nonlocal in both space and time ~\cite{Vignale1995, Maitra2002,Marques2006} and has an initial-state dependence ~\cite{Maitra2001}. However, the lack of insight into its time nonlocality and initial-state dependence has necessitated the use of the adiabatic approximation, wherein the exchange-correlation functional is defined in terms of the instantaneous electronic density. Although the applicability of the adiabatic approximation to various systems and materials properties are yet to be understood, they have shown remarkable agreement in estimating the transition frequencies ~\cite{Appel2003}, and, in most cases, is the underlying approximation in existing RT-TDDFT softwares. As with most of the numerical implementations in RT-TDDFT, this work is restricted to the adiabatic approximation. The second challenge stems from the huge computational cost associated with the non-linear TDKS equations. Numerical simulations for large length- and time-scales are still computationally challenging, and warrant systematically improvable, accurate, efficient and scalable spatio-temporal discretization. Addressing these numerical challenges constitutes the main subject of this work. 

Significant efforts have been made towards efficient RT-TDDFT numerical schemes as extensions to popular ground-state DFT packages, borrowing from their respective spatial discretization. These include planewave basis in QBox ~\cite{QBox2008,Schleife2012}; linear combination of atomic orbitals (LCAO) in Siesta ~\cite{Siesta2002,Takimoto2007} and GPAW ~\cite{Kusima2015}; Gaussian basis in NWChem ~\cite{NWChem2010,Lopata2011}; and finite-difference based approaches in Octopus ~\cite{Octopus2006}
and GPAW ~\cite{GPAW2005,Walter2008}. The planewave basis, owing to its completeness, provides systematic convergence, and affords an efficient treatment of the electrostatic interactions through fast Fourier transforms. However, they remain restricted to only periodic geometries and boundary conditions, thereby ill-equipped to describe systems with defects, and non-periodic systems like isolated molecules and nano-clusters. Additionally, the nonlocality of the basis greatly hinders its parallel
scalability. Atomic-type orbitals, such as LCAO and Gaussian basis, owing to their atom-specific basis, are well-suited to describe molecules and nano-clusters for both pseudopotential as well as all-electron calculations. However, owing to the incompleteness of such basis, systematic convergence for all materials systems remains a concern. 
The finite difference discretization (FD) provides systematic convergence, can handle a broad range of boundary conditions, and exhibits improved parallel scalability in comparison to planewave and atomic-type orbital basis. However, incorporating adaptive spatial resolution in FD through a non-uniform grid remains non-trivial. This lack of adaptive spatial resolution in FD also renders it inefficient for an accurate treatment of singular potentials (as in the case of all-electron calculations), thereby restricting the applicability of FD to only pseudopotential calculations. On the other hand, the finite-element basis ~\cite{BrennerScott2007,Hughes2012}, being a local-piecewise polynomial basis, offers several key advantages---it provides systematic convergence; is amenable to adaptive spatial resolution, and thereby suitable for both pseudopotential and all-electron calculations; exhibits excellent parallel scalability owing to the locality of the basis; and admits arbitrary geometries and boundary conditions. We add that many of these advantages of finite-element basis are also shared by the wavelets basis~\cite{Genovese2008}. While, at present, the use of wavelets basis has been restricted LR-TDDFT~\cite{Natarajan2012}, we expect them to be a competent basis for RT-TDDFT as well.  

The efficacy of the finite-element basis in terms of its accuracy, efficiency, scalability and relative performance with other competing methods (e.g., planewaves, Gaussian basis, FD), have been thoroughly studied in the context of ground-state DFT, for both pseudopotential ~\cite{White1989, Tsuchida1998, Pask1999, Pask2001, Pask2005, Fattebert2007, Zhang2008, Suryanarayana2010, Fang2012, Bao2012, Chen2013, Motamarri2013, Chen2014, Motamarri2018, Motamarri2019} and all-electron calculations ~\cite{White1989, Motamarri2013, Motamarri2018, Yamakawa2005, Bylaska2009, Lehtovaara2009, Schauer2013, Motamarri2014, Kanungo2017, Motamarri2017, Motamarri2019}. A similarly comprehensive study on the efficacy of the finite-element basis for RT-TDDFT is, however, lacking. While two recent studies ~\cite{Lehtovaara2011, Bao2015} demonstrate the accuracy of finite-elements for RT-TDDFT, they remain restricted to only linear and quadratic finite-elements. As known from prior studies in ground-state DFT ~\cite{Hermansson1986, Bylaska2009, Motamarri2013}, lower-order (linear and quadratic) finite-elements require a large number of basis functions ($50,000-500,000$ per atom for pseudopotential calculations) to achieve chemical accuracy, and hence, perform poorly in comparison to planewaves and other real-space based methods. However, this shortcoming of linear and quadratic finite-elements for ground-state DFT calculations has been shown to be alleviated by the use of higher-order finite-elements ~\cite{Motamarri2013}. In this work, we extend the use of higher-order finite-elements to RT-TDDFT calculations and demonstrate the resulting advantages over lower-order finite-elements as well as finite-difference based methods.

The keys ideas in this work can be summarized as: (i) developing an \emph{a priori} mesh-adaption based on semi-discrete (discrete in space, continuous in time) error analysis of the TDKS equations, and subsequently, obtaining an efficient finite-element discretization for the problem; (ii) use of spectral finite-elements in conjunction with Gauss-Legendre-Lobatto quadrature to render the overlap matrix diagonal, thereby simplifying the evaluation of the inverse of the overlap matrix that features in the discrete time-evolution operator; (iii) obtaining an efficient temporal discretization using a full-discrete error analysis of the TDKS equations, in the context of second-order Magnus time-evolution operator; and (iv) using an adaptive Lanczos iteration to efficiently compute the action of the Magnus propagator on the Kohn-Sham orbitals. 
The \emph{a priori} mesh-adaption in this work is performed by minimizing the discretization error in the observable of importance, subject to fixed number of elements in the finite-element mesh. In particular, we minimize the semi-discrete error in the dipole moment of the system with respect to the mesh-size distribution, $h(\br)$, to obtain an efficient \emph{a priori} spatial discretization. Having obtained the spatial discretization, an efficient temporal discretization is obtained through a full-discrete error analysis, in the context of second-order Magnus time-evolution operator. This is, to the best of our knowledge, the first work that guides the spatio-temporal discretization for the RT-TDDFT problem using error estimates.  

We study the key numerical aspects of the proposed higher-order finite-element discretization for benchmark systems involving both nonlocal pseudopotential and all-electron calculations. To begin with, we study the numerical rates of convergence of the dipole moment with respect to spatial and temporal discretization. We use two benchmark systems: (i) a pseudopotential calculation on methane molecule; and (ii) an all-electron calculation on lithium hydride molecule, to demonstrate the rates of convergence for linear, quadratic and fourth-order finite-elements. We observe numerical rates of convergence in the dipole moment close to the optimal rates obtained from our error analysis. Next, we assess the computational advantage afforded by higher-order finite-elements over linear finite-element, using the same benchmark systems. We observe an extraordinary $100$-fold speedup in terms of the total computational time for the fourth-order finite-element over linear finite-element, for calculations in the regime of chemical accuracy. We also compare the relative performance of the finite-element discretization against finite-difference method for pseudopotential calculations. We use aluminum clusters ($\text{Al}_2$ and $\text{Al}_{13}$), and the Buckminsterfullerene (C60) molecule as our benchmark pseudopotential systems. The finite-difference based calculations are done using the Octopus package ~\cite{Octopus2006}. Depending on the benchmark system, the finite-element discretization shows a $3$- to $60$-fold savings in computational time as compared to the finite-difference approach, for pseudopotential calculations. We also demonstrate the efficacy of finite-elements for systems subjected to strong perturbation by studying higher harmonic generation in $\text{Mg}_2$. Additionally, we demonstrate the competence of finite-elements for all-electron calculations on two benchmark systems---methane and benzene molecule. Lastly, we study the strong scaling of our implementation and observe good parallel scalability with $\sim75\%$ efficiency at 768 processors for a benchmark system of a Buckminsterfullerene molecule containing $3.5$ million degrees of freedom. 

The rest of the paper is organized as follows. In Section ~\ref{sec:Formulation}, we briefly discuss the TDKS equations and the form of the exact time-evolution operator. In Section ~\ref{sec:Discrete}, we introduce the notion of semi- and full-discrete solution to the TDKS equation. In Section ~\ref{sec:Error}, we provide formal spatial and time discretization error estimates in the Kohn-Sham orbitals. Sec. ~\ref{sec:OptimalDiscretization} provides an efficient spatio-temporal discretization scheme guided by the error estimates. In Section ~\ref{sec:Numerics}, we describe the various numerical implementation aspects pertaining to spectral finite-elements and the discrete second-order Magnus
operator. Section ~\ref{sec:Results} details the convergence, accuracy, efficiency and parallel scalability of the higher-order finite-elements along with its relative performance against the finite-difference method. Finally, we summarize our findings and outline the future scope in Section ~\ref{sec:Summary}. 

\section{Time-dependent Kohn-Sham Equations}\label{sec:Formulation}
TDDFT relies on the Runge-Gross theorem ~\cite{Runge1984} and the Kohn-Sham \textit{ansatz} ~\cite{Kohn1965} to reduce the many-electron time-dependent Schr\"odinger equation to a set of effective single electron equations, called the time-dependent Kohn-Sham (TDKS) equations. These equations prescribe the evolution of an auxiliary system of non-interacting electrons that yield the same time-dependent electronic charge density, $\rho(\br,t)$, as that of the interacting system. The TDKS equations,
in atomic units, are given as
\begin{equation} \label{eq:TDKS}
\begin{split}
    i\frac{\partial{\psi\al(\br,t)}}{\partial{t}} &= H_{KS}[\rho](\br,t;\bR)\psi\al(\br,t) \\
    & \coloneqq \left[-\frac{1}{2}\nabla^2+ V_{KS}[\rho](\br,t;\bR)\right]\psi\al(\br,t) \,,
\end{split}
\end{equation}
where $H_{KS}[\rho](\br,t;\bR)$, $V_{KS}[\rho](\br,t;\bR)$ and $\psi\al(\br,t)$ represent the time-dependent Kohn-Sham Hamiltonian, potential and orbitals, respectively, with the index $\alpha$ spanning over all the $N_e$ electrons in the system. $\bR=\{\bR_1,\bR_2,...,\bR_{N_a}\}$ denotes the collective representation for the positions of the $N_a$ atoms in the system. The electron density, $\rho(\br,t)$, is given in terms of the Kohn-Sham orbitals as
\begin{equation}
    \rho(\br,t)=\sum_{\alpha=1}^{N_e}|\psi\al(\br,t)|^2\,.
\end{equation}
In the present work, we restrict ourselves to only non-periodic (clusters and molecules) as well as spin-unpolarized systems. However, we note that all the ideas discussed subsequently can be generalized to spin-polarized systems as well. \\

The time-dependent Kohn-Sham potential, $V_{KS}[\rho](\br,t;\bR)$ in Eq. ~\ref{eq:TDKS}, is given by 
\begin{equation}\label{eq:VKS}
\begin{split}
    V_{KS}[\rho](\br,t;\bR)&=V_{ext}(\br,t;\bR)+V_{H}[\rho](\br,t)+V_{XC}[\rho](\br,t)\,,
\end{split}
\end{equation}
where $V_{ext}(\br,t;\bR)$ denotes the external potential, $V_{H}[\rho](\br,t)$ denotes the Hartree potential, and $V_{XC}[\rho](\br,t)$ represents the exchange-correlation potential. The exchange-correlation potential, $V_{XC}[\rho](\br,t)$, in general, is nonlocal in both space and time ~\cite{Vignale1995, Maitra2002,Marques2006}, and has a dependence on the initial many-electron wavefunction ~\cite{Maitra2001}. However, in absence of the knowledge of its true form, most of the existing approximations use locality in time (adiabatic exchange-correlation) and non-dependence on the initial many-electron wavefunction. This allows for direct use of the existing exchange-correlation approximations used in ground-state DFT. In the present work, we use the adiabatic local-density approximation (ALDA) ~\cite{Gross1985}, which is local in both space and time. Specifically, we use the Ceperley-Alder form \cite{Ceperley1980}. \\

In Eq.~\ref{eq:VKS}, the Hartree potential is given by
\begin{equation} \label{eq:VH}
    V_{H}[\rho](\br,t)=\int{\frac{\rho(\br',t)}{|\br-\br'|}\dr'}\,.
\end{equation}
The external potential comprises of the nuclear potential $V_N(\br;\bR)$ and the external field $V_{field}(\br,t)$. The nuclear potential is given by
\begin{equation}
    V_N(\br;\bR) = \begin{cases}
        V_N^{ae} = -\sum\limits_{I=1}^{N_a}\frac{Z_I}{|\br-\bR_I|}, \quad \text{for all-electron}\,,  \\ \\
        V_N^{psp}(\bR), \quad \text{for pseudopotential} \,, 
    \end{cases}
\end{equation}
where $Z_I$ and $\bR_I$ represent the atomic charge and position of the $I^{th}$ nucleus. For a typical pseudopotential calculation, $V_N^{psp}$ comprises of a local part, $V_{psp}^{loc}$, and a nonlocal part, $V_{psp}^{nl}$. For the nonlocal part, the action on a function $\phi(\br)$, written in the Kleinman-Bylander form ~\cite{Kleinman1982}, is given by 
\begin{widetext}
\begin{equation} \label{eq:VPSP}
    V_{psp}^{nl}(\bR)\phi(\br)=\sum_{I=1}^{N_a}\sum_{l=0}^{L_I}\sum_{m=-l}^l\left(\frac{\int{u_{lm}^I(\br')\delta V^I_l(\br')\phi(\br')\,d\br'}}{\int{u_{lm}^I(\br')\delta V_l^I(\br')u_{lm}^I(\br')}\,d\br'}\right)\delta V_l^I(\br) u_{lm}^I(\br)\,,
\end{equation}
\end{widetext}
where $l$ and $m$ denote the angular and magnetic quantum number, respectively. $u_{lm}^I(\br)$ is a pseudo-atomic eigenfunction for the atom at $\bR_I$, $\delta V^I_l(\br)$ is the specified $l$ angular component short-ranged potential for the atom at $\bR_I$, and $L_I$ is the maximum angular quantum number specified for the atom at $\bR_I$.
The external field, $V_{field}(\br,t)$, is typically provided as a monochromatic laser pulse of the form 
\begin{equation} \label{eq:VField}
    V_{field}(\br,t)=-\mathbf{E_0}(t)\cdot\br\,,
\end{equation}
where $\mathbf{E_0}(t)$ represents the time-dependent electric field. 

We note that both the electrostatic potentials---Hartree and nuclear (all-electron)---are extended in real space. However, using the fact that the $\frac{1}{|\br|}$ kernel in these extended interactions is the Green's function of the Laplace operator, one can recast their evaluation as the solutions to the following Poisson equations:
\begin{subequations} \label{eq:Poisson}
  \begin{equation} \label{eq:Hartree_possion}
    -\frac{1}{4\pi}\nabla^2V_{H}(\br,t)=\rho(\br,t)\,,\quad V_H(\br,t)\rvert_{\partial\Omega}=f(\br,\bR) \,,
  \end{equation}    
  \begin{equation} \label{eq:Nuclear_poisson}
      -\frac{1}{4\pi}\nabla^2V_{N}^{ae}(\br;\bR)=b(\br,\bR)\,, \quad  V_N^{ae}(\br)\rvert_{\partial\Omega}=-f(\br,\bR)\,.
  \end{equation}
\end{subequations}
In the above equation, $b(\br;\bR)=-\sum\limits_{I=1}^{N_a}{Z_I\delta(\br;\bR_I)}$, where $\delta(\br;\bR_I)$ is a bounded regularization of the Dirac-delta distribution with compact support in a small ball around $\bR_I$ and satisfies $\int \delta(\br;\bR_I)\dr=1$; $f(\br,R)=\sum_{I=1}^{N_a}\frac{Z_I}{\modulus{\br-\bR_I}}$; and $\partial \Omega$ denotes the boundary of a sufficiently large bounded domain $\Omega \in \mathcal{R}^3$. We refer to previous works on finite-elements based ground-state DFT calculations ~\cite{Pask1999, Pask2005, Suryanarayana2010, Motamarri2013, Motamarri2012} for a detailed treatment of the local reformulation of the electrostatic potentials into Poisson equations.

Formally, the solution to Eq.~\ref{eq:TDKS} can be written as
\begin{equation} \label{eq:Propagator}
\begin{split}
        &\psi\al(\br,T)=U(T,t_0)\psi\al(\br,t_0)\\
        &=\mathcal{T}\text{exp}\left\{-i\int_{t_0}^T{H_{KS}[\rho](\br,\tau)d\tau}\right\}\psi\al(\br,t_0)\,,
\end{split}
\end{equation}
where $U(T,t_0)$ represents the time-evolution operator (propagator) and $\mathcal{T}$ denotes the time-ordering operator. Although the above equation provides a formal way to directly evaluate the orbitals at any time, $t$, resolving the implicit time-dependence of the Kohn-Sham Hamiltonian through the density is too difficult. However, one can exploit the following composition property of the propagator,
\begin{equation} \label{eq:PropagatorComposition1}
    U(t_2,t_0)=U(t_2,t_1)U(t_1,t_0)\,, \quad \quad \quad t_0<t_1<t_2\,,
\end{equation}
to accurately resolve the implicit time-dependence in $H_{KS}[\rho](\br,t)$. To elaborate, the above property allows us to rewrite the propagator $U(T,t_0)$ as
\begin{equation}\label{eq:PropagatorComposition2}
    U(T,t_0)=\prod_{i=0}^{N-1}{U(t_{i+1},t_i)}\,, 
\end{equation}
where $t_N=T$ and $t_{i+1}-t_i=\Delta t_i$, with $\Delta t_i$ denoting the variable time step. Consequently, one can divide the evaluation of the orbitals at $T$ into $N$ short-time propagation, given by
\begin{equation} \label{eq:ShortTimePropagation}
\begin{split}
    &\psi\al(\br,t+\Delta t)=U(t+\Delta t)\psi\al(\br,t)\\
    &=\mathcal{T}\text{exp}\left\{-i\int_t^{t+\Delta t}{H_{KS}[\rho](\br,\tau)d\tau}\right\}\psi\al(\br,t)\,.
\end{split}
\end{equation}
In addition to resolving the implicit time-dependence in $H_{KS}[\rho](\br,t)$, the short time propagation provides the numerical advantage of containing the norm of the exponent in Eq. ~\ref{eq:Propagator}. To elaborate, any efficient numerical scheme to compute the action of the propagator on a wavefunction involves either a power series expansion or a subspace projection of the propagator, wherein the number of terms in the power series or the dimension of the subspace required for a given accuracy are dependent on norm of the exponent.
Moreover, there is a physical upper bound imposed on the time step based on the maximum frequency, $\omega_{max}$, that one wants to resolve in their calculations, i.e., $\Delta t_{max}=\frac{1}{\omega_{max}}$. Typically, $\omega_{max}$ is determined by the
eigen-spectrum of the ground-state Hamiltonian or by the frequency of the applied field, $V_{field}$. We note that, in practice, one uses a time step $\Delta t \ll \Delta t_{max}$ owing to the need of containing time-discretization errors that arise in approximating the continuous propagator, $\mathcal{T}\text{exp}\left\{-i\int_t^{t+\Delta t}{H_{KS}[\rho](\br,\tau)d\tau}\right\}$. We discuss these approximations and their associated time-discretization errors in greater detail in Sec. ~\ref{sec:Discrete} and Sec.~\ref{sec:Error}.                  

\section{Semi- and full-discrete solutions} \label{sec:Discrete}
In this section, we introduce the notion of semi-discrete (discrete in space but continuous in time) and full-discrete solution to the TDKS equation. The full-discrete solution is provided in the context of second-order Magnus propagator. 

To begin with, we provide some of the finite-element essentials. In the finite-element method, the spatial domain of interest ($\Omega \in \mathbb{R}^3$) is divided into non-overlapping sub-domains, known as finite-elements. Each finite-element ($e$) is characterized by its spatial extent ($\Omega_e$) and size ($h_e$). Subsequently, the finite-element basis is constructed from piecewise Lagrange interpolating polynomials that have a compact support on the finite elements (i.e., on $\Omega_e$), thus rendering locality to these basis functions. We note that there is an abundance of choice in terms of the form and order of the polynomial functions that can be used in constructing the finite-element basis. We refer to Refs. ~\cite{Hughes2012} and ~\cite{Bathe2006} for a comprehensive discourse on the subject.

\subsection{Semi-discrete solution} \label{sec:SemiDiscrete}
To begin with, we express the semi-discrete time-dependent Kohn-Sham orbitals, $\psi\al^h(\br,t)$, as  
\begin{equation} \label{eq:PsiExpansion}
\psi\al^h(\br,t) = \sum_{j=1}^{n_h}N_j(\br)\psi\al^j(t)\,,\,\, \text{s.t. } \psi\al^h(\br,t)\rvert_{\partial\Omega}=0\,\, \forall t\geq 0\,,
\end{equation}
where $\{N_j(\br)\}$ represents the set of finite-element basis functions, each of polynomial order $p$; and  $\psi\al^j(t)$ denote the time-dependent expansion coefficient corresponding to the $N_j$ basis function. We refer to the Appendix for a formal discussion on the appropriate function space for $\psi\al^h(\br,t)$. Using the discretization of Eq.~\ref{eq:PsiExpansion} in the TDKS equation (Eq. ~\ref{eq:TDKS}) results in following discrete equation,
\begin{equation} \label{eq:TDKSMatVec}
i\bM\dot{\bpsi}\al(t) = \bH\bpsi\al(t),
\end{equation}
where $\bH$ and $\bM$ denote the discrete Hamiltonian and overlap matrix, respectively, and $\bpsi\al(t)$ denotes the vector containing the coefficients $\psi\al^j(t)$. The solutions to the above equation are called the \emph{semi-discrete} solutions to the TDKS equation. In the above equation, the discrete Hamiltonian $H_{jk}$ is given by
\begin{equation}\label{eq:DiscreteHamiltonian}
\begin{split}
    H_{jk}&= \frac{1}{2}\int_{\Omega} \nabla N_j(\br) \cdot \nabla N_k(\br) \dr \\
    \quad &+ \int_{\Omega} V_{KS}^h[\rho^h](\br,t)N_j(\br) N_k(\br) \dr\,, 
\end{split}
\end{equation}
where $V_{KS}^h[\rho^h](\br,t)$ is the discrete Kohn-Sham potential corresponding to the semi-discrete density, $\rho^h(\br,t)$ (i.e., evaluated from the solutions of Eq. ~\ref{eq:TDKSMatVec}). $V_{KS}^h[\rho^h](t)$ is, in turn, given by 
\begin{equation}\label{eq:VKSSemiDiscrete}
\begin{split}
    V_{KS}^h[\rho^h](\br,t) &= V_H^h[\rho^h](\br,t) + V_N^{h}(\br) \\
    \quad & + V_{XC}[\rho^h](\br,t) + V_{field}(\br,t) \,,
    \end{split}
\end{equation}
where $V_H^h[\rho^h](\br,t)$ and $V_N^h(\br)$ denote the discrete Hartree and nuclear potential, respectively. We note that for the pseudopotential case, $V_N^h$ is same as the continuous potential $V_N^{psp}$ and hence $V_N^h$ is relevant only in the all-electron case. Similar to $\psi\al^h(\br,t)$, the discrete electrostatic potentials ($V_H^h[\rho^h](\br,t)$ and $V_N^h(\br)$) are obtained by discretizing them in the finite-element basis, i.e., 
\begin{equation} \label{eq:VHartreeExpansion}
V_H^h[\rho^h](\br,t) = \sum_{j=1}^{n_h}N_j(\br)\phi_H^j(t)\,,
\end{equation} 
\begin{equation} \label{eq:VNExpansion}
V_N^{ae,h}(\br) = \sum_{j=1}^{n_h}N_j(\br)\phi_N^j\,,
\end{equation}
satisfying the boundary conditions presented in Eq.~\ref{eq:Poisson} in a discrete sense. We refer to the Appendix for a formal discussion on the appropriate function spaces for $V_H^h[\rho^h](\br,t)$ and $V_N^h(\br)$. Subsequently, the expansion coefficients $\phi_H^j(t)$ and $\phi_N^j$ can be obtained by using the above expressions in Eq. ~\ref{eq:Poisson}, and solving the resulting system of linear equations
\begin{equation}\label{eq:VHDiscretePoisson}
    \bK\bphi_H(t) = 4\pi\textbf{c}(t)\,, ~\text{and}
\end{equation}
\begin{equation}\label{eq:VNDiscretePoisson}
    \bK\bphi_N = 4\pi\textbf{d}\,,
\end{equation}
where $K_{jk}=\int_{\Omega}\nabla N_j(\br).\nabla N_k(\br)\dr$; $\bphi_H$ and $\bphi_N$ are the vectors containing the coefficients $\phi_H^j(t)$ and $\phi_N^j$, respectively; $c_j(t)=\int_{\Omega}\rho^h(\br,t)N_j(\br)\dr$; and $d_j=\int_{\Omega}b(\br,\bR)N_j(\br)\dr$. 

\subsection{Full-discrete solution}\label{sec:FullDiscrete}
We now discuss the full-discrete solution to the TDKS equations, in the context of second-order Magnus propagator. To begin with, we note that the overlap matrix $\bM$, being positive definite, guarantees the existence of a unique positive definite square root, $\bM^{1/2}$. This allows us to rewrite Eq. ~\ref{eq:TDKSMatVec} as
\begin{equation} \label{eq:TDKSMatVecStandard}
    i\dot{\bpsibar}\al(t) = \bHbar \bpsibar\al(t)\,,
\end{equation}
where $\bpsibar\al(t) = \bM^{1/2}\bpsi\al(t)$ and $\bHbar = \bM^{-1/2}\bH\bM^{-1/2}$. To put it differently, $\bpsibar\al(t)$ is the representation of $\psi\al^h(\br,t)$ in a L\"owdin orthonormalized basis ~\cite{Lowdin1950}. We remark that the practicality of the above reformulation in terms of $\bpsibar\al$ is contingent upon the efficient evaluation of $\bM^{-1/2}$. To that end, we present an efficient scheme for computing $\bM^{-1/2}$ in Sec. ~\ref{sec:Numerics}. 

We invoke the Magnus \emph{ansatz} to write the solution of Eq. ~\ref{eq:TDKSMatVecStandard} as
\begin{equation} \label{eq:MagnusAnsatz} 
    \bpsibar\al(t) = \text{exp}(\bA(t)) \bpsibar\al(0)\,, \qquad \forall t \geq 0 \,.
\end{equation}
The operator $\text{exp}\left(\bA(t)\right)$ is termed as the \emph{Magnus propagator}, where $\bA(t)$ is given explicitly as ~\cite{Blanes2009, Hochbruck2003}
\begin{equation} \label{eq:MagnusExpansion}
\begin{split}
    	\bA(t) &= \int_0^t -i\bHbar(\tau)d\tau \\
    	\quad &- \frac{1}{2}\int_0^t\left[\int_0^{\tau} -i\bHbar(\sigma)d\sigma,-i\bHbar(\tau)\right]d\tau 
    	+ \ldots \quad \,,
\end{split}
\end{equation}
where $[\bX,\bY]=\bX\bY-\bY\bX$ denotes the commutator. Although known explicitly, the above equation is not practically useful, given the difficulty in resolving the implicit dependence of $\bHbar(t)$ on $\rho^h(\br,t)$. As mentioned in Section~\ref{sec:Formulation}, we resolve the implicit dependence by using the composition property of a propagator (cf. Eq. ~\ref{eq:PropagatorComposition2}). This allows us to rewrite the Magnus propagator as 
\begin{equation} \label{eq:MagnusAnsatzSplit}
    \text{exp}(\bA(t)) = \prod_{n=1}^{N}\exp(\bA_n) \,,
\end{equation}
where $\bA_n$ is given by Eq. ~\ref{eq:MagnusExpansion}, albeit with the limits of integration modified to $[t_{n-1},t_{n}]$.

In practice, one replaces the exact $\bA_n$ with an approximate operator $\bAtilde_n$, which involves, first, a truncation of the Magnus expansion (defined in Eq. ~\ref{eq:MagnusExpansion}), and second, an approximation for the time integrals in the truncated Magnus expansion. Truncating the Magnus expansion after the first $p$ terms results in a time-integration scheme of order $2p$. In this work, we restrict ourselves to the second-order Magnus propagator, i.e.,
obtained by truncating the Magnus expansion after the first term. Furthermore, we use a mid-point integration rule to evaluate $\int_{t_{n-1}}^{t_n}-i\bHbar(\tau)d\tau$. In particular, the action of the second-order Magnus propagator, with a mid-point integration rule, on the set of Kohn-Sham orbitals $\{\bpsibar_1,\bpsibar_2,\ldots,\bpsibar_{N_e}\}$ which defines the density $\rho^h(\br,t)$, is given by
\begin{equation} \label{eq:ApproxMagnus}
    \begin{split}
        e^{\bAtilde_n}\bpsibar\al(t) &= e^{-i \bHbar\left[\rho^h\left(t_{n-1}+\frac{\Delta t}{2}\right)\right]\Delta t}\bpsibar\al(t)\,,
    \end{split}
\end{equation}
where $\Delta t = t_n-t_{n-1}$ and $\bHbar\left[\rho^h\left(t_{n-1}+\frac{\Delta t}{2}\right)\right]$ is the time-continuous Kohn-Sham Hamiltonian described by $\rho^h(\br,t)$ at the future time instance $t_{n-1}+\Delta t/2$. We remark that $\bHbar\left[\rho^h\left(t_{n-1}+\frac{\Delta t}{2}\right)\right]$, being dependent on a future instance of the density, is evaluated either by an extrapolation of $\bHbar$ using $m (> 2)$ previous steps or by a second (or higher) order predictor-corrector scheme. 

Thus, time-discrete approximation to $\bpsibar\al(t_n)$, denoted by $\bpsibar\al^{n}$, is given by 
\begin{equation} \label{eq:PsiTimeDiscrete}
    \bpsibar\al^{n} = \text{exp}(\bAtilde_n)\bpsibar\al^{n-1}.
\end{equation}
Consequently, the orbitals $\psi\al^{h,n}(\br)$ defined by the coefficient vectors $\bpsi\al^n=\bM^{-1/2}\bpsibar\al^n$ represent the \emph{full-discrete} solution to the TDKS equation. 

\section{Discretization errors} \label{sec:Error}
In this section, we provide the discretization error in the Kohn-Sham orbitals which will later on form the basis of our efficient spatio-temporal discretization. To begin with, we decompose the discretization error in the Kohn-Sham orbitals into two parts, one arising due to spatial discretization and the other due to temporal discretization. 
To elaborate, if $\psi\al^h(\br,t_n)$ and $\psi\al^{h,n}(\br)$ represent the semi-discrete (discrete in space but continuous in time) and full-discrete solution to $\psi\al(\br,t_n)$, respectively, then one decompose the discretization error in $\bpsi\al(\br,t_n)$ as 
\begin{equation} \label{eq:ErrorSplit}
\begin{split}
    \psi\al(\br,t_n)-\psi\al^{h,n}(\br) &=\left(\psi\al(\br,t_n)-\psi\al^{h}(\br,t_n)\right) \\ 
    \quad & + \left(\psi\al^h(\br,t_n)-\psi\al^{h,n}(\br)\right)  
\end{split}
\end{equation}
In the following two subsections, we present error estimates on the two right-hand terms in the above equation,  respectively.

\subsection{Spatial discretization error}
If $\psi\al^h(\br,t)$ denotes the semi-discrete solution to $\psi\al(\br,t)$, then the following bound holds true (see Appendix for the derivation)
\begin{widetext}
        \begin{equation} \label{eq:SemiDiscreteErrorH1GS}
            \begin{split}
                \sum_{\alpha}^{N_e}\norm{\psi\al-\psi\al^h}_{\Hone} (t) & \leq  C(t) \sum_e h_e^{p}\sum_{\alpha=1}^{N_e}\left(\modulus{\psi\al}_{p+1,\Omega_e}(s_{1,\alpha}) + \modulus{\psi\al}_{p+1,\Omega_e}(s_{2,\alpha})  +  \modulus{\psi\al}_{p+3,\Omega_e}(s_{2,\alpha})\right) \\
                & \quad + C(t)\sum_e h_e^{p} \left(\modulus{V_H[\rho^h]}_{p+1,\Omega_e}(s_3) + \modulus{V_N}_{p+1,\Omega_e}\right)\,,
            \end{split}
        \end{equation}
        \begin{equation*}
            \text{for some}\, \{s_{1,\alpha}\} \,, \{s_{2,\alpha}\}\,,\text{and}\, s_3 \in [0,t]\,.
        \end{equation*}
\end{widetext}
In the above equations, $h_e$ and $\Omega_e$ denote the size and spatial-extent of the $e-$th finite-element, respectively. $C(t)$ is a time-dependent constant independent of the finite-element mesh. $\modulus{.}_{p,\Omega_e}$ is the semi-norm in $H^{p}(\Omega_e)$.
The importance of the above equations lies in relating the semi-discrete error to the mesh parameters (i.e., $h_e$ and $p$), and hence, is instrumental in obtaining an efficient spatial discretization (discussed in Sec. ~\ref{sec:OptimalDiscretization}). In particular, the above equation informs that the semi-discrete error in  $\norm{\psi\al-\psi\al^h}_{\Hone}$ decays as $\mathcal{O}({h_e^p})$.

\subsection{Time discretization error}
We now present the formal bounds on the time discretization error in $\psi\al(\br,t)$. Assuming each time interval $[t_{n-1},t_n]$ to be of length $\Delta t$, we obtain the following bound for the time-discretization error for a second-order Magnus propagator with a mid-point integration rule (see Appendix for the derivation)
\begin{equation} \label{eq:PsiTimeDiscreteError}
\begin{split}
        \norm{\psi\al^h(t_n)-\psi\al^{h,n}}_{\Ltwo} \leq 
        \quad C (\Delta t)^2 t_n \max_{0\leq t \leq t_n}\norm{\psi\al^h(t)}_{\Hone}\,.
\end{split}
\end{equation}
The essence of the above relation lies in relating the time-discretization error to $\norm{\psi\al^h(t)}_{\Hone}$, which in turn is related to the spatial discretization. Thus, the above equation, is crucial in selecting an efficient $\Delta t$, for a given finite-element mesh (see Sec. ~\ref{sec:OptimalDiscretization}).

\section{Efficient Spatio-temporal Discretization} \label{sec:OptimalDiscretization}
We now utilize our spatial and temporal discretization error estimates (Eqs. ~\ref{eq:SemiDiscreteErrorH1GS} and ~\ref{eq:PsiTimeDiscreteError}) to obtain an efficient spatio-temporal discretization. We follow along the lines of ~\cite{Radovitzky1999,Motamarri2012, Motamarri2013} to obtain an efficient spatial discretization by minimizing the semi-discrete error in the dipole moment at a given time, subject to a fixed number of finite-elements. We remark that the choice of dipole moment as an observable for this exercise is solely a matter of convenience, and any observable which can be inexpensively evaluated in terms of the density or the Kohn-Sham orbitals can be used instead. Representing the x-component of the continuous and the semi-discrete dipole moment at time $t$ as $\mu_x(t)$ and $\mu_x^h(t)$, respectively, we have 
\begin{equation} \label{eq:DipoleMomentSemiDiscrete1}
\begin{split}
|\mu_x(t)-\mu_x^h(t)| &\leq \norm{x}_{\Ltwo} \norm{\rho(t)-\rho^h(t)}_{\Ltwo} \\
&\leq C \norm{\rho(t)-\rho^h(t)}_{\Ltwo} \\
&\leq C \sum_{\alpha=1}^{N_e}\norm{\psi\al-\psi\al^h}_{\Hone}(t)\,.
\end{split}
\end{equation} 
Now using Eq. ~\ref{eq:SemiDiscreteErrorH1GS} in the above equation, results in   
\begin{widetext}
\begin{equation} \label{eq:DipoleMomentSemiDiscrete2}
    \begin{split}
        |\mu_x(t)-\mu_x^h(t)|  & \leq  C_{1}(t) \sum_e h_e^{p}\sum_{\alpha=1}^{N_e}\left(\modulus{\psi\al}_{p+1,\Omega_e}(s_{1,\alpha}) + \modulus{\psi\al}_{p+1,\Omega_e}(s_{2,\alpha})  +  \modulus{\psi\al}_{p+3,\Omega_e}(s_{2,\alpha})\right) \\
                & \quad + C_1(t)\sum_e h_e^{p} \left(\modulus{V_H[\rho^h]}_{p+1,\Omega_e}(s_3) + \modulus{V_N}_{p+1,\Omega_e}\right)\,,
    \end{split}
\end{equation}
for some $\{s_{1,\alpha}\},\{s_{2,\alpha}\},~\text{and}~s_3 \in [0,t]$.
We now use the definition of the semi-norm (in terms of partial spatial derivative) and introduce an element size distribution, $h(\br)$, to rewrite the above equation as 
\begin{equation} \label{eq:DipoleMomentSizeDist}
    \begin{split}
        |\mu_x(t)-\mu_x^h(t)|  & \leq  C_{1}(t) \int_{\Omega} h^{p}(\br) \left[\sum_{\alpha=1}^{N_e}\left(\Dbar^{p+1}\psi\al(s_{1,\alpha}) + \Dbar^{p+1}\psi\al(s_{2,\alpha})  + \Dbar^{p+3}\psi\al(s_{2,\alpha})\right)\right] \dr \\
                & \quad + C_1(t) \int_{\Omega} h^{p}(\br) \left(\Dbar^{p+1}V_H[\rho^h] + \Dbar^{p+1}V_N\right)\dr\,,
    \end{split}
\end{equation}
where $\Dbar^{k}f=\sum_{|l|=k}|D^lf|$.
Thus, obtaining the optimal element size distribution, for a fixed number of elements ($N_{elem}$), reduces to the following minimization problem,
\begin{equation} \label{eq:OptimalMeshMin}
    \begin{split}
        & \min_{h(\br)} \int_{\Omega} h^{p}(\br) \left[\sum_{\alpha=1}^{N_e}\left(\Dbar^{p+1}\psi\al(s_{1,\alpha}) + \Dbar^{p+1}\psi\al(s_{2,\alpha})  +  \Dbar^{p+3}\psi\al(s_{2,\alpha})\right)+\Dbar^{p+1}V_H[\rho^h](s_3) + \Dbar^{p+1}V_N\right] \dr \\
    &\qquad \qquad \qquad \qquad \qquad\qquad \qquad \text{subject to}: \int_{\Omega}\frac{\dr}{h^3(\br)}=N_{elem}\,.
    \end{split}
\end{equation}
Solving the Euler-Lagrange equation corresponding to the above optimization problem yields the following element size distribution,
\begin{equation} \label{eq:OptimalMeshSize}
    h(\br) = E \left[\sum_{\alpha=1}^{N_e}\left(\Dbar^{p+1}\psi\al(s_{1,\alpha}) + \Dbar^{p+1}\psi\al(s_{2,\alpha})  +  \Dbar^{p+3}\psi\al(s_{2,\alpha})\right)+\Dbar^{p+1}V_H[\rho^h](s_3) + \Dbar^{p+1}V_N\right]^{-1/p+3}\,,
\end{equation}
\end{widetext}
where the constant $E$ is evaluated from the constraint on the number of elements. We remark that although the above discretization approach requires \emph{a priori} knowledge of the unknown $\psi\al(s_{1,\alpha})$, $\psi\al(s_{2,\alpha})$, and $V_H[\rho^h](s_3)$, we use the ground-state atomic solutions to the Kohn-Sham orbitals and the electrostatic potentials to construct the adaptive finite-element mesh for all calculations. Although, this does not inform us about the optimal discretization required near the nuclei, this affords an efficient strategy to accurately handle the regions away
from the nuclei, wherein the time-dependent Kohn-Sham orbitals, typically, have similar decay properties as their ground-state counterparts. We note that, in practice, the finite-element mesh obtained can deviate from $h(\br)$ due to conformity and quality requirements, especially in the context of hexahedral elements that are employed in this work. However, the resulting finite-element mesh broadly captures the optimal coarse-graining rate, and has the general adaptive characteristics that significantly enhance the computational efficiency, as demonstrated in Sec.~\ref{sec:Results}. Figure~\ref{fig:MeshAl2} shows an adaptive mesh for all-electron $\text{Al}_2$ generated using this approach.

\begin{figure}[h!]
\centering     
\includegraphics[scale=0.24]{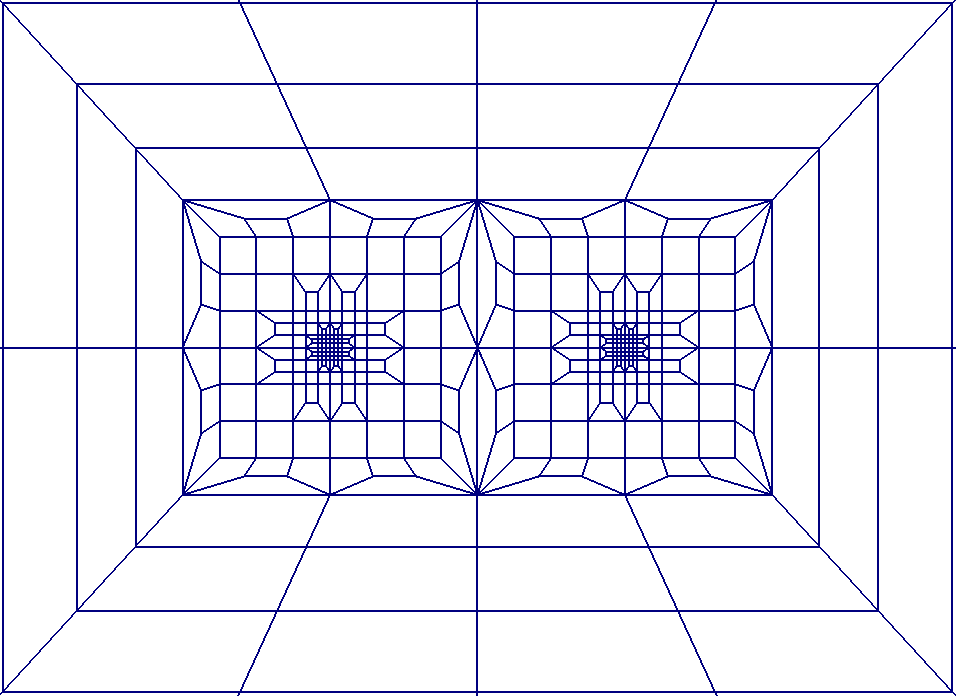}
\caption{\small{Adaptive finite-element mesh for all-electron $\text{Al}_2$ (slice shown on the plane of the molecule)}  
\label{fig:MeshAl2}}
\end{figure}

Having determined the required spatial discretization, a suitable temporal discretization for the given finite-element mesh can be estimated by using the time-discrete error bound for the dipole moment. To elaborate, if $\mu^{h,n}_{x}$ denotes the x-component of the full-discrete dipole moment, then using the result in Eq. ~\ref{eq:PsiTimeDiscreteError} it is straightforward to show,
\begin{equation} \label{eq:FullDiscreteDipoleMoment}
|\mu^h_x(t_n)-\mu^{h,n}_{x}|\leq C t_n (\Delta t)^2 \sum_{\alpha=1}^{N_e}\max_{0\leq t \leq t_n}\norm{\psi\al^h(t)}_{\Hone}\,.
\end{equation} 
As is evident from the above relation, our choice of $\Delta t$ is intrinsically tied to the spatial discretization through $\norm{\psi\al^h(t)}_{\Hone}$. Furthermore, we remark that although the presence of $t_n$ in the above inequality indicates increasing time-discretization error with time, it does not pose a limitation in containing the errors due to the fact that $t_n \leq T$, where, typically, $T$ lies between $10-30$ fs. Now, in order to evaluate a suitable $\Delta
t$ that can contain the above error bound to a fixed tolerance, we need to estimate the values of $\norm{\psi\al^h(t)}_{\Hone}$ and the value of the constant, $C$, featuring in it. The value of $\norm{\psi\al^h(t)}_{\Hone}$ can be reliably approximated from its ground-state value, i.e., $\norm{\psi\al^h(0)}_{\Hone}$. The characteristic value of the constant, $C$, is determined from atomic RT-TDDFT calculations at different $\Delta t$. To
elaborate, the constant can be evaluated from a linear fit to the log-log plot of the error $|\mu^h_x(t_n)-\mu^{h,n}_x|$ with respect to $\Delta t$. For a multi-atom system, we use the least $\Delta t$ obtained for each of its constituent atomic species.

\section{Numerical Implementation} \label{sec:Numerics}
We now discuss some of the key numerical aspects involved in our implementation of the finite-element discretization of the TDKS equations. 
\subsection{Higher-order spectral finite-elements} \label{sec:SpectralFE}
Finite-elements, with their varied choices of forms and orders ~\cite{Bathe2006,Hughes2012}, have been widely used in several engineering applications. While the use of linear finite-elements remains popular in engineering applications that warrant moderate levels of accuracy, it remains computationally inefficient to attain chemical accuracy in electronic structure calculations. To highlight, the use of linear finite-elements have been shown to require large number of basis functions per atom
($\sim 100,000$) to achieve chemical accuracy in ground-state DFT calculations ~\cite{Hermansson1986, Bylaska2009}. However, this shortcoming has been demonstrably mitigated by the use of higher-order finite-elements ~\cite{Motamarri2012, Motamarri2013}. In this work, we explore the possibility of similar gains from using higher-order finite-elements for RT-TDDFT calculations. Unlike  conventional finite-elements, we employ spectral finite-elements for the spatial discretization of the TDKS
equations. To elaborate, while the conventional finite-elements are constructed from a tensor product of Lagrange polynomials interpolated through equidistant nodal points in an element, spectral finite elements employ a distribution of nodes obtained from the roots of the derivative of Legendre polynomials or the Chebyshev polynomials ~\cite{Boyd2001}. In our work, we use the roots of the derivative of Legendre polynomials along with boundary nodes, so as to maintain $C^0$ continuity. The
resulting distribution is known as the Gauss-Legendre-Lobatto node distribution. The spectral finite-elements are known to afford better conditioning with increasing polynomial degree ~\cite{Boyd2001}, and have been used to gain significant computational efficiency in ground-state DFT calculations ~\cite{Motamarri2012}. Moreover, a major advantage of the spectral finite-elements is realized when used in conjunction with the Gauss-Legendre-Lobatto (GLL) quadrature rule for evaluating the integrals
involved in the overlap matrix $\mathbf{M}$, wherein the quadrature points are coincident with the nodal points in the spectral finite-elements. Such a combination renders $\mathbf{M}$ diagonal, thereby greatly simplifying the evaluation of $\mathbf{M}^{-1/2}$ that features in the discrete TDKS equations (cf. Eq. ~\ref{eq:TDKSMatVecStandard}). We note that while an $n$ point rule in the conventional Gauss quadrature rule integrates polynomials exactly up to degree $2n-1$, an $n$ point GLL
quadrature rule integrates polynomials exactly only up to degree $2n-3$. Therefore, we employ the GLL quadrature rule only in the construction of $\mathbf{M}$, while the more accurate Gauss quadrature rule is used for all other integrals featuring in the discrete TDKS equations. We refer to Motamarri et. al. ~\cite{Motamarri2013} for a discussion on the accuracy and sufficiency of GLL quadrature in the evaluation of overlap matrix $\mathbf{M}$. For the sake of brevity, we use the term finite-elements instead of spectral finite-elements in all subsequent discussions.

\subsection{Approximating the second-order Magnus operator}
The form of the Magnus operator, as shown in Eq. ~\ref{eq:PsiTimeDiscrete}, calls for efficient means of evaluating $\text{exp}(\bAtilde_n)\bpsibar\al^n$. Direct means of evaluating the matrix $\text{exp}(\bAtilde_n)$ remain computationally prohibitive beyond small sizes of $\bAtilde_n$. Alternatively, one can attempt to evaluate the action of $\text{exp}(\bAtilde_n)$ on $\bpsibar\al^n$ in an iterative fashion. Several such schemes are in use in RT-TDDFT calculations, namely, Taylor series expansion,
Chebyshev polynomial expansion, split-operator techniques, and Krylov subspace projection method. We refer to Castro et. al. ~\cite{Castro2004} and references there-within for a detailed discussion on each of these schemes. In this work, we adopt the Krylov subspace projection method for its superior efficiency and robustness compared to the other methods. To elaborate, the Krylov subspace projection allows for an \emph{a posteriori} error control mechanism based on error estimates that will be
described below. On the other hand, polynomial expansion methods such as Taylor series or Chebyshev polynomial expansion offer no such \emph{a posteriori} mechanism. While one can use \emph{a priori} estimates, based on the spectral radius of $\bAtilde_n$, the number of terms required in the polynomial expansion, for a desired accuracy, remain highly over-estimated. Furthermore, in the case of split-operator, the efficacy of it rests on operating back and forth between Fourier and real
space, so as to diagonalize the kinetic and the potential part of the Kohn-Sham Hamiltonian in succession. Thus, it involves the use of fast Fourier transforms (FFTs) which pose serious challenges to the parallel scalability of the code. The Krylov subspace projection, on the other hand involves operations only in real space and affords good parallel scalability (as will be shown in Sec. ~\ref{sec:Scalability}). 

We now discuss the details of the Krylov subspace projection approach for the second-order Magnus operator. To begin with, a $k$-dimensional Krylov subspace for the matrix $\bAtilde_n$ and the vector $\bpsibar$ (a generic representation for the Kohn-Sham vectors $\bpsibar\al^n$) is given by
\begin{equation} \label{eq:KrylovSubspace}
\mathcal{K}_k(\bAtilde_n,\bpsibar)=span\{\bpsibar,\bAtilde_n\bpsibar,\bAtilde_n^2\bpsibar,\ldots,\bAtilde_n^{k-1}\bpsibar\}.
\end{equation}
The Lanczos iteration provides a recipe for generating an orthonormal set of vectors $\bQ_k=\{q_1,q_2,\ldots,q_k\}$, with $q_1=\bpsibar/\norm{\bpsibar}$, that spans the same space as $\mathcal{K}_k(\bAtilde_n,\bpsibar)$. In particular, the Lanczos iteration, allows for the following approximation to $e^{\bAtilde_n}\bpsibar$, denoted by $\bz_k \in  \mathcal{K}_k(\bAtilde_n,\bpsibar)$, given by  
\begin{equation} \label{zk}
    \bz_k=\norm{\bpsibar}\bQ_k e^{\bQ_k^T \bAtilde_n \bQ_k} e_1 = \norm{\bpsibar}\bQ_k e^{\bT_k} e_1\,,
\end{equation}
where $\bT_k=\bQ_k^T\bAtilde_n\bQ_k$ is a tridiagonal matrix, and $e_1$ is the $i$-th unit vector in $\mathbb{C}^k$. As is evident from the above form, the problem is now reduced to the evaluation of $\text{exp}(\bT_k)$, wherein $\bT_k$ is a small matrix of size $k \times k$, and hence, $\text{exp}(\bT_k)$ can be evaluated inexpensively either through Taylor series expansion or exact eigenvalue decomposition of $\bT_k$. The error, $\epsilon_k$, incurred in the above approximation is given by ~\cite{Hochbruck1998}
\begin{equation}\label{eq:AdaptiveLanczos}
\epsilon_k=\norm{e^{\bAtilde_n}\bpsibar - \norm{\bpsibar}\bQ_k e^{\bT_k}e_1} \approx \beta_{k+1,k}\norm{\bpsibar}\modulus{\left[e^{\bT_k}\right]_{k,1}}\,,
\end{equation} 
where $\beta_{k+1,k}$ is the $(k+1,k)$ entry of $\bT_{k+1}=\bQ_{k+1}^T\bAtilde_n\bQ_{k+1}$.
Thus, the above relation provides a robust and inexpensive scheme to adaptively determine the dimension of the Krylov subspace by checking if $\epsilon_k$ is below a set tolerance. An economic choice for the tolerance for $\epsilon_k$ is determined from atomic RT-TDDFT calculations, such that it achieves $<10$ meV accuracy in the excitation energies. For a multi-atom system, we employ the lowest such tolerance obtained for each of the constituent atomic species.

Finally, we comment upon the numerical details of the second-order Magnus propagator with midpoint integration rule. As discussed in Sec. ~\ref{sec:FullDiscrete}, the use of second-order Magnus propagator with midpoint integration rule, i.e., $e^{\bAtilde_n}$, requires the knowledge of $\bHbar$ at a future time instant i.e., $\bHbar[t_{n-1}+\Delta t/2]$, which is \emph{a priori} unknown. A fully consistent approach involves, for a given $\bpsibar^{n-1}$, the following steps: (i) approximate
$\bHbar[t_{n-1}+\Delta t/2]$ through extrapolation over previous instants of $\bHbar$; (ii) use it to obtain $\bpsibar^{n}$, and then evaluate $\bHbar[t_{n}]$; (iii) re-evaluate $\bHbar[t_{n-1}+\Delta t/2]$ by interpolating between $\bHbar[t_{n-1}]$ and $\bHbar[t_{n}]$; and (iv) repeat steps (ii)--(iii) until convergence. Although robust and accurate, this approach comes at a huge computational cost arising out of the Lanczos procedure at each iterate of the
self-consistent iteration. An efficient and sufficiently accurate approach is to use a predictor-corrector method to, first, predict $\bHbar[t_{n-1}+\Delta t/4]$ through an extrapolation (linear or higher-order) from previous instants of $\bHbar$, use it to propagate $\bpsibar^{n-1}$ to $\bpsibar^{n-1/2}$, which is then used to evaluate $\bHbar[t_{n-1}+\Delta t/2]$. We refer to ~\cite{Cheng2006} for the details of the predictor-corrector scheme. We remark that this predictor-corrector scheme is accurate to $\mathcal{O}(\Delta t^2)$, and hence, does not affect the results of our time-discretization error estimates. 


\section{Results} \label{sec:Results}
In this section, we discuss the accuracy, rate of convergence, computational efficiency and the parallel scalability of higher-order finite-element discretization in conjunction with second-order Magnus propagator, for both pseudopotential and all-electron RT-TDDFT calculations. Based on the system, we use hexahedral spectral finite-elements of polynomial order $1$ to $5$, denoted as HEX8, HEX27, HEX64SPEC, HEX125SPEC, and HEX216SPEC, respectively. For the pseudopotential calculations, we provide
comparison, in terms of accuracy and
performance, of the higher-order finite-elements against the finite-difference method. The finite-difference based calculations are performed using the Octopus ~\cite{Octopus2006} software package. In all our finite-difference based calculations, we have used a stencil of order $4$ in each direction (default stencil order in Octopus). All the pseudopotential calculations are done using the norm-conserving Troullier-Martins 
pseudopotentials~\cite{Troullier1991}. For all calculations, the ground-state Kohn-Sham orbitals are used as the initial states. We use the Chebyshev filter acceleration technique (refer ~\cite{Zhou2006a, Zhou2006b, Motamarri2013}) to efficiently compute the ground-state, for all the calculations done using finite-elements. All our scalability as well as benchmark studies demonstrating the computational efficiency are conducted on a parallel computing cluster with the following configuration: Intel Xeon
Platinum 8160 (Skylake) CPU nodes with 48 processors
(cores) per node, 192 GB memory per node, and Infiniband networking between all nodes for fast MPI communications.

\subsection{Rates of Convergence} \label{sec:Convergence}
In this section, we study the rates of convergence of the dipole moment with respect to both finite-element mesh-size, $h$, as well as time-step, $\Delta t$. We use methane and lithium hydride molecules as our benchmark systems for
studying the rates of convergence, for pseudopotential and all-electron calculations, respectively.

We note that in order to study the convergence with respect to mesh-size, the dominant error must arise from spatial-discretization. To this end, we contain other sources of error, namely, time-discretization error and Krylov subspace projection error, by choosing a very small time-step of $\Delta t= 10^{-4}$, and using a small tolerance of $10^{-12}$ for the Krylov subspace error (cf. Eq. ~\ref{eq:AdaptiveLanczos}). In effect, we mimic a semi-discrete (discrete in space but continuous in time) error analysis. We employ finite-elements of three different orders ($p$)---HEX8, HEX27, and HEX125SPEC---in all our convergence studies. For each $p$, we start with a given $N_{elem}$ and use the superposition of ground-state atomic orbitals and electrostatic potentials, to determine the coarsening rate of the mesh, as per Eq.~\ref{eq:OptimalMeshSize}. The resultant mesh forms the coarsest mesh. Subsequently, we increase the value of $N_{elem}$ to obtain a sequence of increasingly refined meshes.
We remark that although we propose for using the ground-state electronic fields to determine the mesh coarsening rate, it, nevertheless, forms a reliable and cost effective way for discretizing the mesh as opposed to any ad-hoc coarsening or using uniform discretization. Furthermore, it allows us to use large computational domain sizes without significantly increasing the number of elements. This, in turn, allows us to circumvent the need for artificial absorbing boundary conditions, otherwise essential to tackle wave reflection effects that are observed while dealing with small computational domains.

In order to perform the convergence study with respect to mesh-size, and compare to the semi-discrete error estimate obtained in Eq. ~\ref{eq:DipoleMomentSemiDiscrete2}, we require the knowledge of the continuous value of the dipole moment, $\mu_x(t)$. To this end, we use the discrete dipole moment $\mu_x^h(t)$, obtained from a sequence of increasingly refined HEX125SPEC finite-element meshes to obtain a least-square fit of the form
\begin{equation} \label{eq:DipoleSemiDiscreteLeastSquare}
    \frac{\modulus{\mu_{x}(t)-\mu_x^h(t)}}{\modulus{\mu_{x}(t)}}=Ch_{min}^q\,.
\end{equation}
to determine $\mu_{x}(t)$, $C$, and $q$. In the above equation $h_{min}$ represents the minimum element size, $h_e$, present in the mesh. The obtained $\mu_{x}(t)$ represents the extrapolated continuum limit (continuous in space) for the dipole moment computed using HEX125SPEC element, and is used as the reference value to compute $\frac{\modulus{\mu_{x}(t)-\mu_x^h(t)}}{\modulus{\mu_{x}(t)}}$ for HEX8 and HEX27 finite-elements.

Next, we consider the convergence with respect to temporal discretization, i.e., $\Delta t$. To this end, we use a sufficiently refined HEX125SPEC finite-element mesh and use increasingly refined $\Delta t$ to obtain a least-square fit of the form
\begin{equation} \label{eq:DipoleFullDiscreteLeastSquare}
    \frac{\modulus{\mu^{h}_x(t_n)-\mu^{h,n}_{x}}}{\modulus{\mu^{h}_x(t_n)}} = C (\Delta t)^q \,.
\end{equation}
to determine  $\mu^{h}_x(t_n)$, $C$, and $q$. The value of $\mu^{h}_x(t_n)$ obtained from the above equation represents the extrapolated continuum  
limit (continuous in time) for the dipole moment at $t_n$.

\subsubsection{All-electron calculations: Lithium Hydride} \label{sec:LiH}
In this example, we conduct all-electron RT-TDDFT study on a lithium hydride molecule (LiH) with Li-H bond-length of 3.014 a.u.\,. A large cubical domain of length of 50 a.u. is chosen to ensure that the electron density decays to zero on the domain boundary, thereby, allowing us to impose Dirichlet boundary condition on the time-dependent Kohn-Sham orbitals and the Hartree potential. We use the aforementioned adaptive mesh generation strategy to construct a sequence of HEX8, HEX27 and HEX125SPEC meshes.
For all the meshes under consideration, we first obtain the ground-state and employ a weak delta-kick to excite the system. To elaborate, we
use an electric field of the form $\textbf{E}_0(t)=\kappa\delta(t)\hat{x}$, with $\kappa=10^{-3}$ a.u., where $\delta(t)$ is the Dirac-delta distribution and $\hat{x}$ is the unit vector along x-axis. This amounts to perturbing the ground-state Kohn-Sham orbitals, $\psi\al^{GS}$, by a factor $e^{-i\kappa x}$. Thus, our initial-states are defined as $\psi\al(0)=e^{-i\kappa x}\psi\al^{GS}$. Figure ~\ref{fig:LiHSpatial} depicts the rates of convergence for the dipole moment at $t=1.0$ a.u. for different orders of finite-elements. For all the three types of finite-elements under consideration, we observe close to optimal rates of convergence, $\mathcal{O}(h^p)$, where $p$ is the
degree of the finite-element interpolating polynomial. As is evident from Figure ~\ref{fig:LiHSpatial}, much higher accuracies are obtained with HEX125SPEC when compared to HEX8 and HEX27 of the same mesh size. In particular, $200,000$ HEX8 elements ($210,644$ degrees of freedom) are required to achieve relative errors of $10^{-2}$, whereas we achieve relative error of $10^{-3}$ with just $3,000$ HEX125SPEC elements ($83,156$ degrees of freedom). Next, we study the rate of convergence of the dipole moment with respect to temporal discretization. To
this end, we used a sufficiently refined HEX125SPEC mesh which affords $10^{-4}$ relative error with respect to spatial discretization. We then propagate the initial-states using second-order Magnus propagator with different $\Delta t$. Figure ~\ref{fig:LiHTemporal} depicts the rate of convergence of the dipole moment with respect to $\Delta t$ at $t_n=1.0$ a.u.. We obtain a rate of convergence of $q=1.96$ (defined in Eq. ~\ref{eq:DipoleFullDiscreteLeastSquare}), which agrees remarkably well with the quadratic rate of convergence for second-order Magnus propagator (cf. Eq. ~\ref{eq:FullDiscreteDipoleMoment}). 
\begin{figure}[h!]
    \begin{minipage}{.48\textwidth}
        \begin{center}
            \includegraphics[scale=0.6]{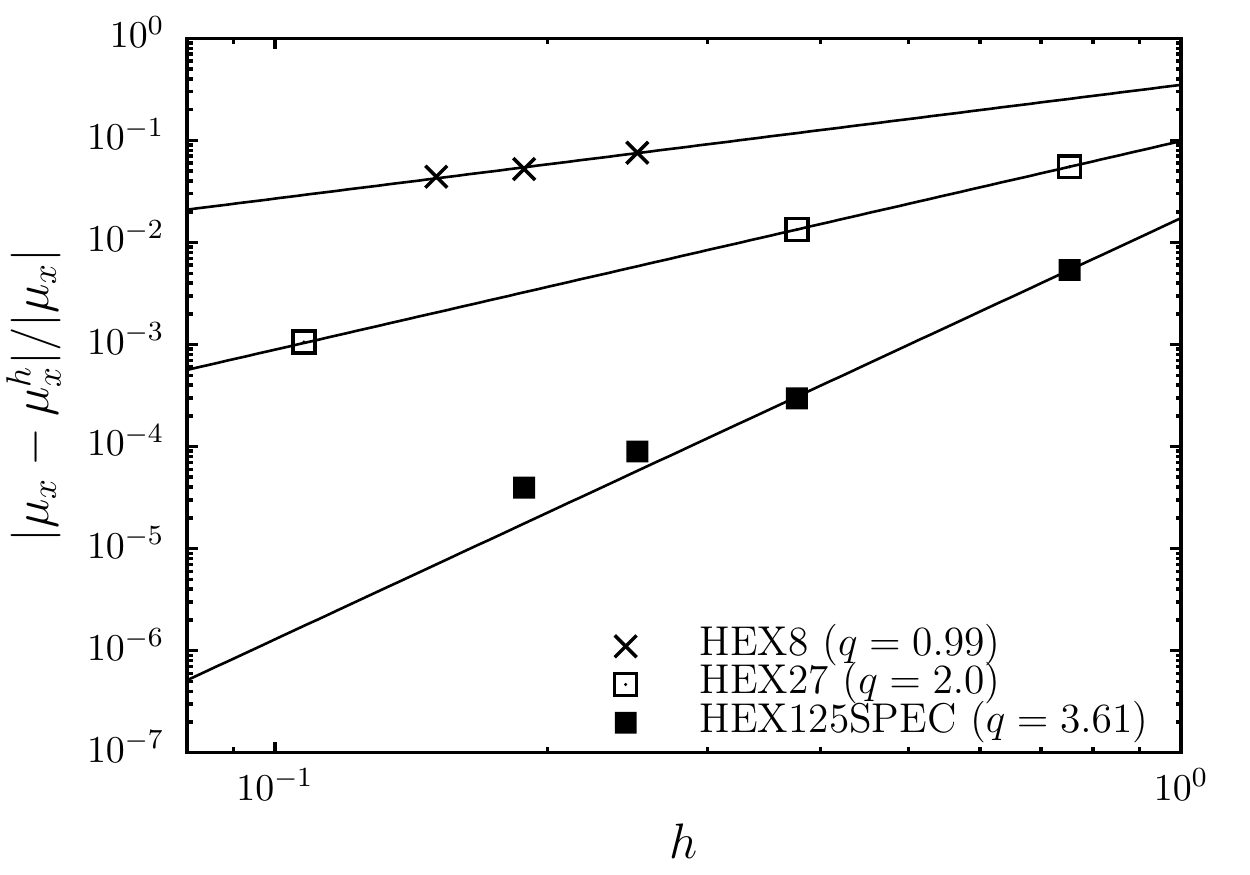}
              \caption{\small Rates of convergence with respect to spatial discretization for LiH}
                \label{fig:LiHSpatial}
            \end{center}
    \end{minipage}%
    \hfill
    \begin{minipage}{.48\textwidth}
        \begin{center}
            \includegraphics[scale=0.6]{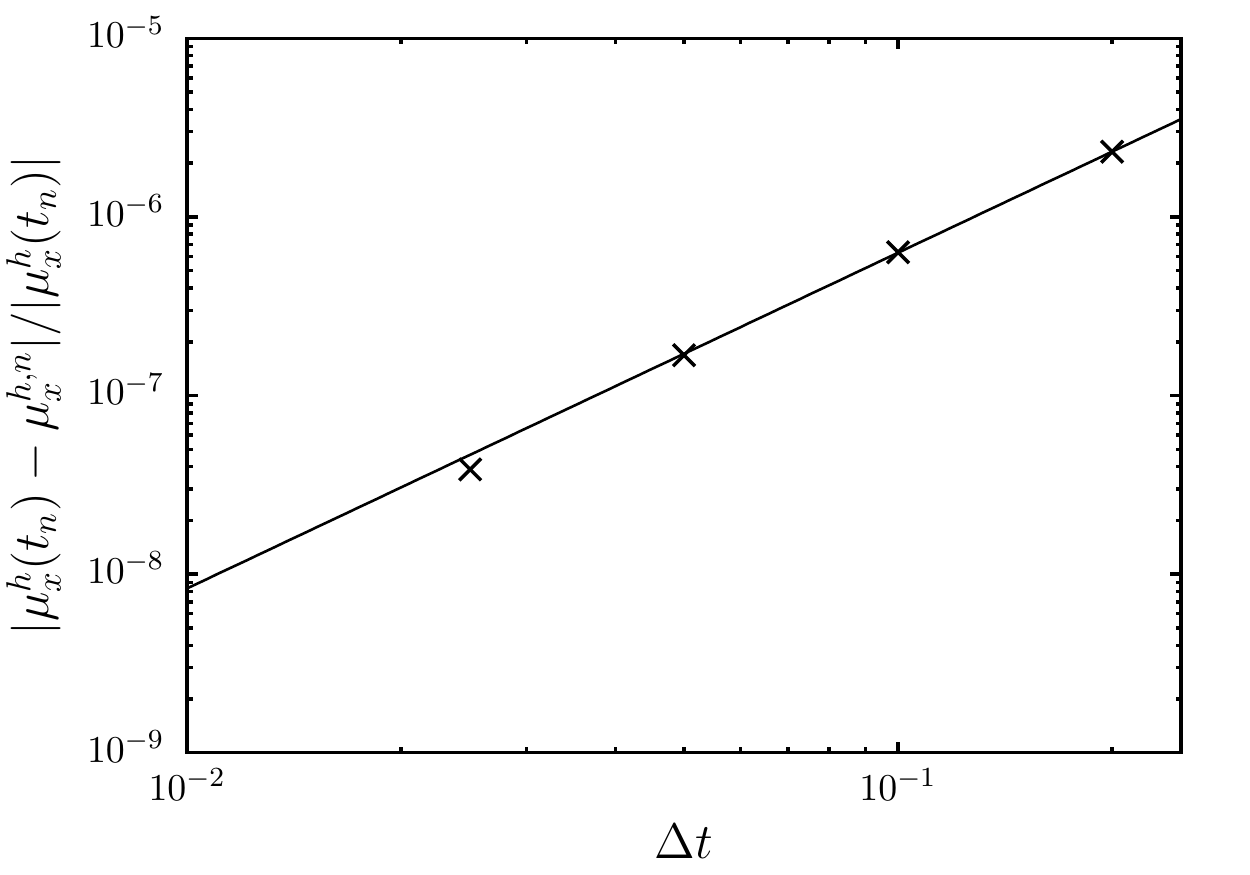}
              \caption{\small Rate of convergence with respect to temporal discretization for LiH}
                \label{fig:LiHTemporal}
            \end{center}
    \end{minipage}
\end{figure}

\subsubsection{Pseudopotential calculations: Methane ($\text{CH}_4$)}\label{sec:CH4}
We now turn to examining rates of convergence for the pseudopotential case. We use a methane molecule with C-H bond-length of 2.07846 a.u. and a H-C-H tetrahedral angle of $109.4712^{\circ}$ as our benchmark system. Similar to lithium hydride, we use the ground-state single-atom electronic fields to obtain a sequence of adaptively refined HEX8, HEX27, and HEX125SPEC meshes.
We, once again, make use of a large cubical domain of length 50 a.u. to mimic simulations in $\mathbb{R}^3$. For all the meshes, we first, obtain the ground-state and then excite the system using a Gaussian electric field of the form $\textbf{E}_0(t)=\kappa e^{(t-t_0)^2/w^2} \hat{x}$, with
$\kappa=2\times 10^{-5}$ a.u., $t_0=3.0$ a.u., and $w=0.2$ a.u.\,. Figure ~\ref{fig:CH4Spatial} illustrates the rates of convergence of the dipole moment at $t=5.0$ a.u. for different orders of finite-elements. As in the case of lithium hydride, we obtain close to optimal rates of convergence, and observe significantly higher accuracies for HEX125SPEC over HEX8 and HEX27. Next, we study the rate of convergence afforded by the second-order Magnus propagator with respect to the time-step using a sufficiently refined HEX125SPEC mesh. We propagate the ground-state Kohn-Sham orbitals under the  influence of the same Gaussian electric field using different $\Delta t$. Figure ~\ref{fig:CH4Temporal} shows the rate of convergence of the dipole moment with respect to $\Delta t$ at $t_n=5.0$ a.u.\,. As was the case with lithium hydride, we obtain a convergence rate of $q=1.98$, which is remarkably close to the optimal (i.e., quadratic) rate of convergence (cf. Eq.~\ref{eq:DipoleFullDiscreteLeastSquare}). 
\begin{figure}[h!]
    \begin{minipage}{.48\textwidth}
        \begin{center}
            \includegraphics[scale=0.6]{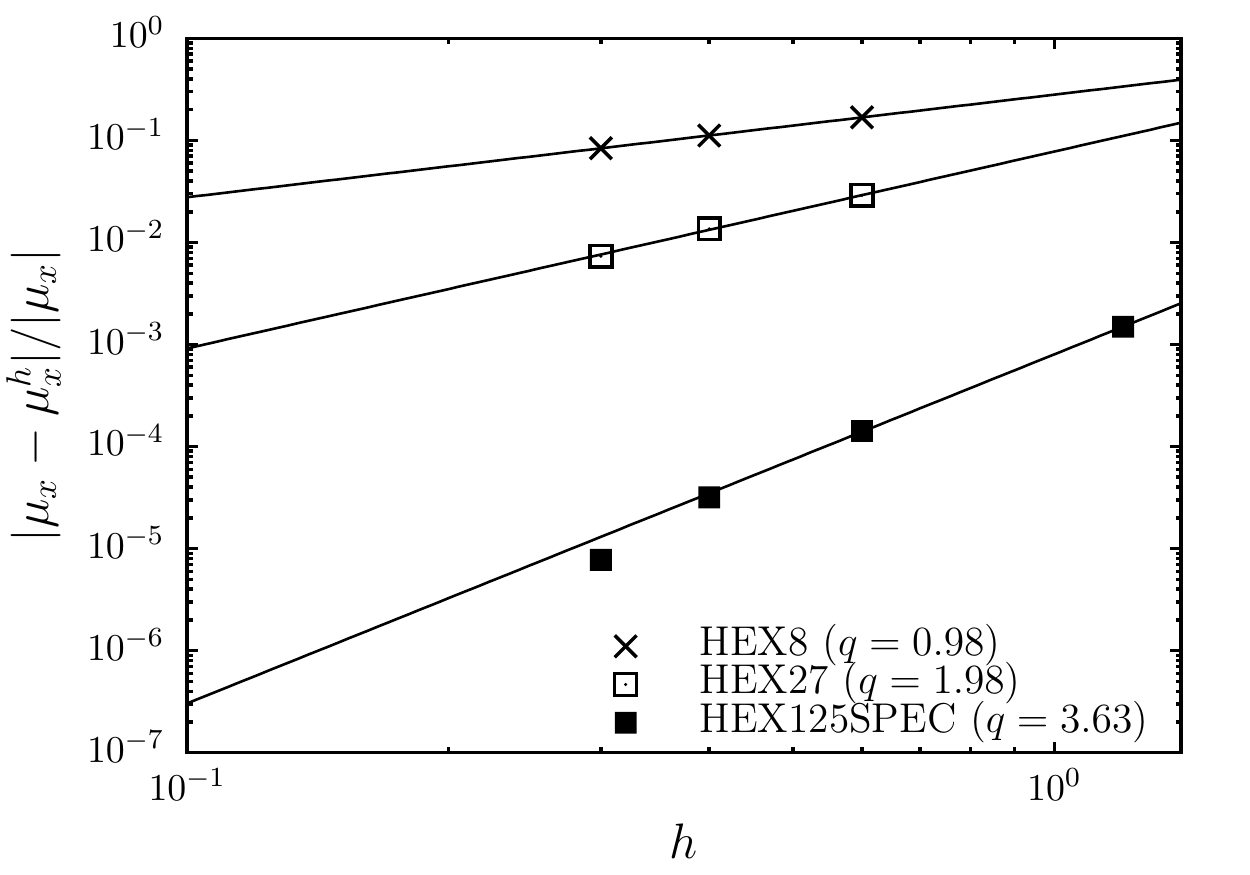}
              \caption{\small Rates of convergence with respect to spatial discretization for $\text{CH}_4$}
                \label{fig:CH4Spatial}
            \end{center}
    \end{minipage}%
    \hfill
    \begin{minipage}{.48\textwidth}
        \begin{center}
            \includegraphics[scale=0.6]{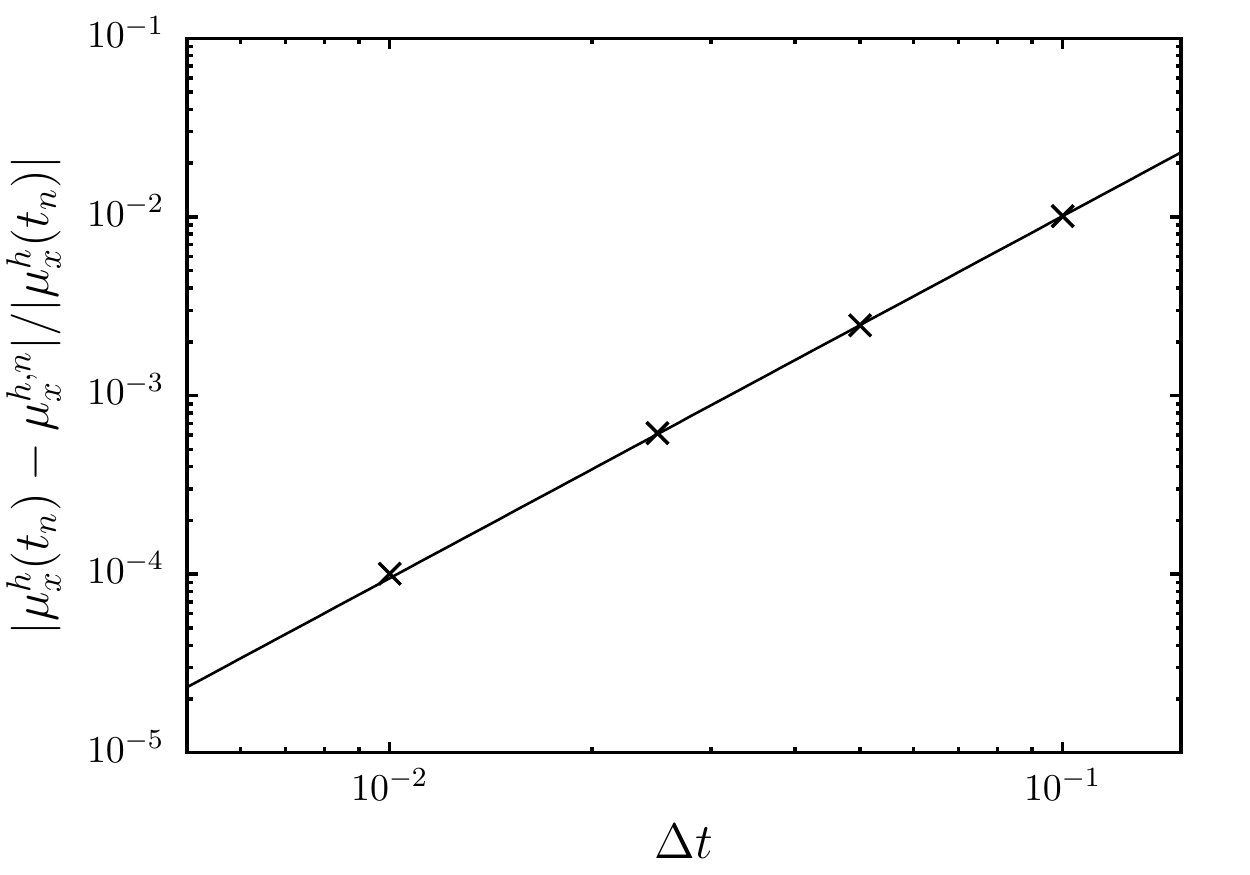}
              \caption{\small Rate of convergence with respect to temporal discretization for $\text{CH}_4$}
                \label{fig:CH4Temporal}
            \end{center}
    \end{minipage}
\end{figure}

\subsection{Computational Cost} \label{sec:ComputationalCost}
In this section, we investigate the relative computational efficiency afforded by higher-order finite-elements over linear finite-element. We consider the previous two systems, lithium hydride and methane, for all-electron and pseudopotential calculations, respectively. We use the same mesh adaption strategy as detailed in Sec. ~\ref{sec:Convergence}. Since the objective of this study is to compare the relative performance of various orders of finite-elements, we eliminate any time-discretization effect by setting $\Delta t= 10^{-4}$. Furthermore, we use a tolerance of $10^{-12}$ for the adaptive Lanczos (cf. Eq.~\ref{eq:AdaptiveLanczos}) in order to eliminate any Krylov subspace projection error influencing the spatial discretization error. We repeat the previous numerical studies by exciting the lithium hydride molecule with a delta-kick (see Sec. ~\ref{sec:LiH}), and the methane molecule with a Gaussian electric field (see Sec.
~\ref{sec:CH4}). Figures ~\ref{fig:LiHTime} and ~\ref{fig:CH4Time} show the relative error in the dipole moment against the normalized computational time, for three different orders of finite-elements. The
normalization of the computational time is done with respect to the longest time among the various meshes under consideration. As is evident, the relative computational efficiency afforded by higher-order finite-elements improves as the desired accuracy is increased. In particular, for a relative accuracy of $10^{-3}$, HEX125SPEC outperforms HEX8 and HEX27 by factor $150-200$ and $10-18$, respectively. This underscores the efficacy of higher-order finite-elements for RT-TDDFT calculations, an aspect which had, heretofore, remained unexplored for RT-TDDFT.  
\begin{figure}[h!]
    \begin{minipage}{.48\textwidth}
        \begin{center}
            \includegraphics[scale=0.6]{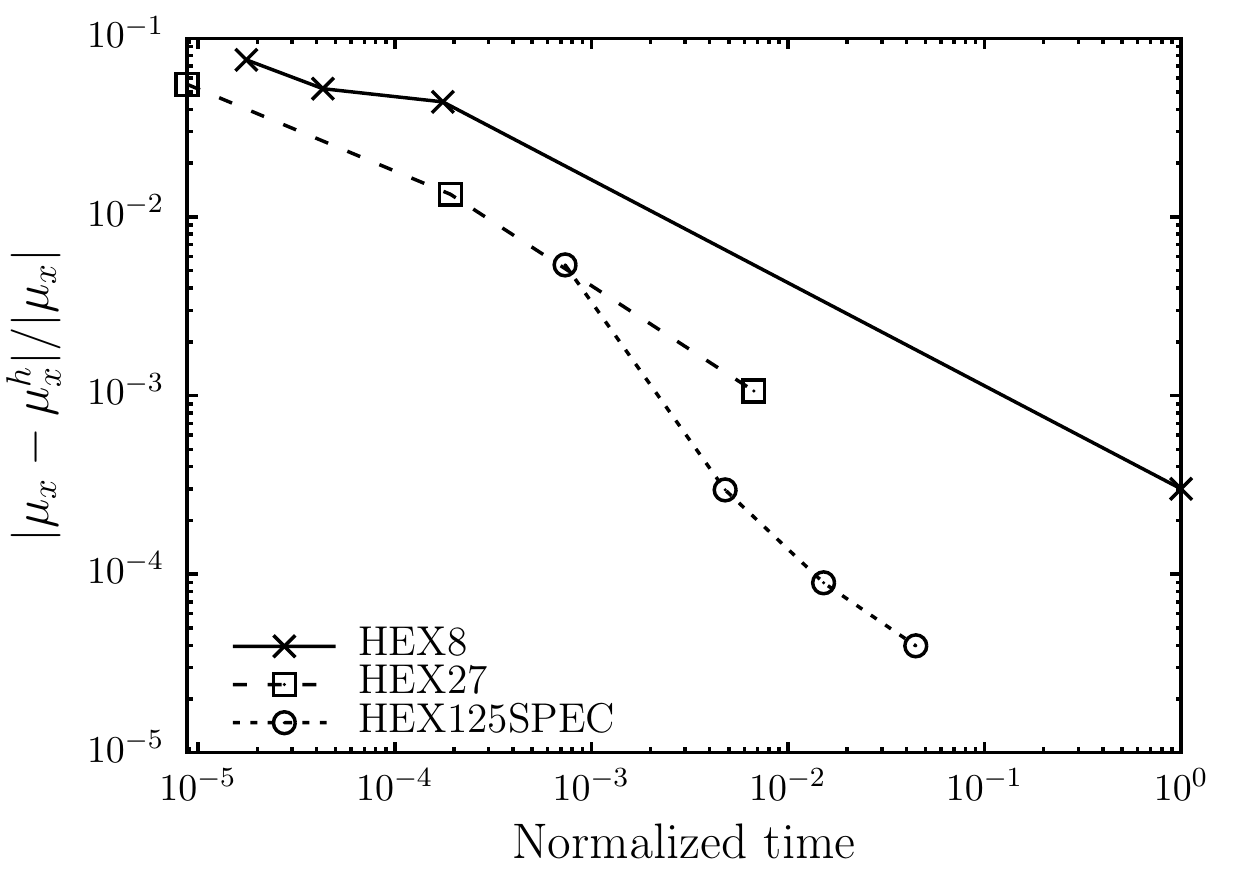}
              \caption{\small Computational efficiency of various orders finite-elements for LiH}
                \label{fig:LiHTime}
            \end{center}
    \end{minipage}%
    \hfill
    \begin{minipage}{.48\textwidth}
        \begin{center}
            \includegraphics[scale=0.6]{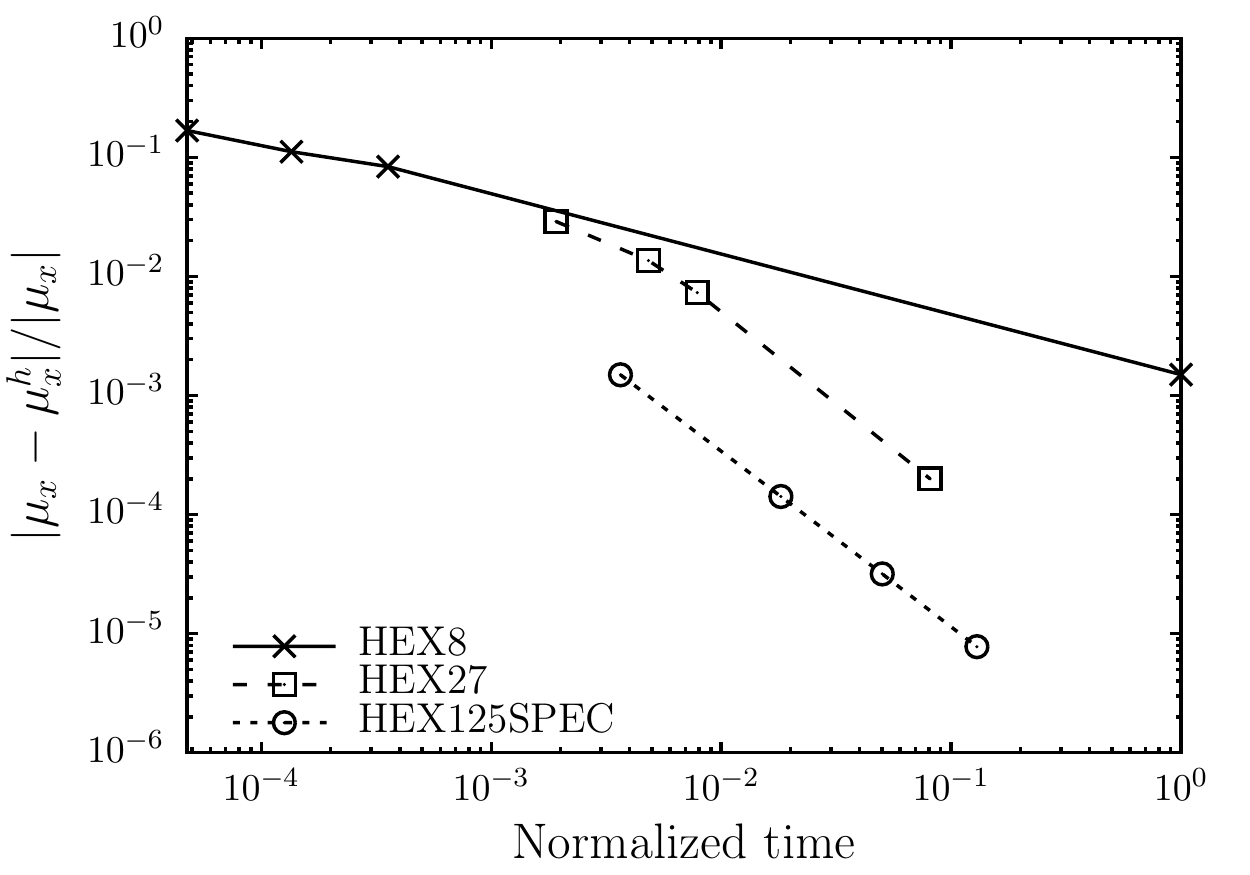}
              \caption{\small Computational efficiency of various orders finite-elements for $\text{CH}_4$}
                \label{fig:CH4Time}
            \end{center}
    \end{minipage}
\end{figure}

\subsection{Other materials systems} \label{sec:LargeSystems}
In this section, we investigate the accuracy and computational efficiency afforded by higher-order finite-elements for other materials systems, in both pseudopotential and all-electron RT-TDDFT calculations. We use $\text{Al}_{2}$, $\text{Al}_{13}$, and $\text{Mg}_2$ as the benchmark metallic systems for pseudopotential calculations. Furthermore, we use Buckminsterfullerene ($\text{C}_{60}$) as our benchmark insulating system for pseudopotential calculations. For the all-electron case, we use methane and benzene as our benchmark systems. Additionally, for the all-electron calculations we provide comparison, in the absorption spectrum, with their pseudopotential counterparts. For all the above systems under consideration, except $\text{Mg}_2$, we use weak electric fields to excite them. For $\text{Mg}_2$, we use a strong laser pulse to study the efficacy of higher-order finite-elements for nonlinear response. Table~\ref{tab:Params} lists the important simulation parameters, for all the benchmark systems under consideration. For pseudopotential systems, we also provide comparison, wherever possible, against calculations based on a finite-difference discretization, by employing the same propagator (i.e., second-order Magnus) and simulation details (as listed in Table~\ref{tab:Params}). To this end, we use the Octopus ~\cite{Octopus2006} software package to perform the finite-difference based calculations.

We now briefly discuss about the choice of domain sizes in our calculations. Typically, one needs a larger domain for RT-TDDFT calculations than ground-state calculations, so as to avoid reflection  at  the  domain  boundaries. For finite-elements, owing to adaptive meshing capability, choosing a large enough domain has little bearing on its computational expense. However, for finite-difference, wherein Octopus uses a uniform mesh, the use of large domain sizes can significantly effect its computational cost. In  order  to  obtain  a  suitable grid in Octopus, we first obtain the optimal grid-spacing and domain size that achieves an accuracy of 10 meV in the  ground-state  energy  per  atom, commensurate  with the accuracy targeted in the finite-element discretization. We then increase the domain size until it achieves $<10$ meV accuracy in the excitation energies (defined in Sec~\ref{sec:Al2}). The calculation based on the resulting Octopus mesh is considered as the point of comparison (for both accuracy and efficiency) against the corresponding finite-elements based calculation. We add that, while dealing with uniform mesh, a typical workaround to the large domain requirement is to use a smaller domain with absorbing  boundary conditions. Hence, to better assess the effects of absorbing boundary conditions, we employ them in finite-difference based calculations for some of the benchmark systems discussed below.    
\begin{table} [h!]
\caption{\small Simulation details for both pseudopotential (PSP) and all-electron (AE) benchmark systems: Type of the electric field $\mathbf{E}_0(t)$; time-step ($\Delta t$ in a.u.); tolerance for Krylov subspace projection error ($\epsilon$ , cf. Eq.~\ref{eq:AdaptiveLanczos}); total duration of simulation ($T$ in fs)} 
\begin{threeparttable}
\begin{tabular}{M{0.25\columnwidth}M{0.3\columnwidth}M{0.12\columnwidth}M{0.12\columnwidth}M{.12\columnwidth}}
\hline 
\hline
    System & Field type & $\Delta t$ & $\epsilon$ & $T$ \\ \hline
    $\text{Al}_2$ (PSP) & Weak-Gaussian\tnote{1} & $0.05$ & $10^{-8}$ & $10$ \\ \hline
    $\text{Al}_{13}$ (PSP) & Weak-Gaussian\tnote{1} & $0.05$ & $10^{-8}$ & $10$ \\ \hline
    $\text{C}_{60}$ (PSP) & Weak-Gaussian\tnote{1} & $0.05$ & $10^{-8}$ & $10$ \\ \hline
    $\text{Mg}_2$(PSP)& Strong-Sinusoidal\tnote{2} & $0.025$ & $10^{-8}$ & $25.33$ \\ \hline
    $\text{CH}_{4}$ (PSP) & Weak-Gaussian\tnote{1} & $0.05$ & $10^{-8}$ & $10$ \\ \hline
    $\text{CH}_{4}$ (AE) & Weak-Gaussian\tnote{1} & $0.025$ & $10^{-8}$ & $10$ \\ \hline
    $\text{C}_6\text{H}_{6}$ (PSP) & Weak-Gaussian\tnote{1} & $0.05$ & $10^{-8}$ & $10$ \\ \hline
    $\text{C}_6\text{H}_{6}$ (AE) & Weak-Gaussian\tnote{1} & $0.025$ & $10^{-8}$ & $10$ \\ \hline \hline
\end{tabular}
   \begin{tablenotes}
     \item[1] $\mathbf{E}_0(t)=\kappa e^{(t-t_0)^2/\omega^2}\hat{x}$, with $\kappa=2\times10^{-5}$, $t_0=3.0$, and $\omega=0.2$ (all in a.u.). \\ 
     \item[2] $\mathbf{E}_0(t) = \kappa \text{sin}^2(\pi/T)\text{sin}(\omega t)\hat{x}$, with $\kappa = 0.01$, $\omega=0.03$, $T=5\times(2\pi/\omega)$ (all in a.u.).  
   \end{tablenotes}
   \end{threeparttable}
   \label{tab:Params}
\end{table}

\subsubsection{Pseudopotential calculations: $\text{Al}_2$}\label{sec:Al2}
We consider an aluminum dimer ($\text{Al}_2$) of bond-length $4.74$ a.u.\,. In order to generate a suitable mesh, we use an adaptive HEX64SPEC finite-elements discretization that follows the coarsening rate obtained from Eq.~\ref{eq:OptimalMeshSize} and is commensurate with an accuracy of $10$ meV
 in the ground-state energy per atom. We use a cubical domain of length 60 a.u. to ensure that the wavefunctions decay to zero, and thereby, avoid any reflection effects. We excite the ground-state using the simulation parameters listed in Table~\ref{tab:Params}. We use the Fourier transform of the dipole moment to obtain the dynamic polarizability, $\alpha_{a,b}(\omega)$, where $a$ is the index of the electric field's polarization direction and $b$ is the index of the measurement
direction
of the dipole. Subsequently, we obtain the absorption spectrum (dipole strength function), $S(\omega)$, given by $S(\omega)=\frac{2\omega}{3\pi}\text{Tr}\left[\text{Im}[\boldsymbol{\alpha}(\omega)]\right]$.
The peaks in the absorption spectrum correspond to the excitation
energies. We also assess the performance of higher-order finite-elements by comparing against the finite-difference scheme of Octopus~\cite{Octopus2006}.  
In order to highlight the effects of domain size for the finite-difference mesh, we use two cubical domains of sizes 38 a.u and 46 a.u, both with a grid-spacing of 0.2 a.u.\,. Furthermore, to understand the effect of absorbing boundary conditions, we perform
an additional finite-difference calculation on the 38 a.u. mesh with a negative imaginary potential (NIP) near the boundaries. In particular, we use a potential of the following form
\[
V_{NIP}(\bx)=
	 \begin{cases} 
      0 & |x|\leq L \\
      -i\, \eta\, \text{sin}^2\left(\frac{2\pi (x-L)}{L}\right) & L < |x| \leq L + \Delta L 
   	\end{cases}
\]
with $\eta = 0.4$, $L=18.0$ and $\Delta L = 1.0$ (all in atomic units). For clarity, we refer to the three finite-difference calculations, namely, with domain size 46 a.u., with domain size 38 a.u., and with domain size 38 a.u. along with NIP absorbing boundary condition as FD-46.0, FD-38.0, and FD-38.0-ABS, respectively. 
We use the simulation details, namely, time-step, duration of propagation, choice of propagator, and tolerance for Krylov subspace, are same as those used for the finite-element case. Fig.
~\ref{fig:Al2Abs} compares absorption spectrum obtained from finite-elements against finite-difference. We have used a Gaussian window of the form $g(t) = e^{-\alpha t^2}$, with $\alpha = 0.005$ a.u., in the Fourier transform of the dipole moment to artificially broaden the peaks. As is evident from the figure, we get good agreement with the finite-difference based results for the domain size of 46 a.u.\,. The
finite-difference calculation with domain size 38 a.u., with and without the absorbing boundary condition, provides qualitatively different
results with two peaks around 5~eV. We attribute this discrepancy to possible 
reflection effects from the boundary, as a domain size of 38 a.u. may not be sufficient to avoid finite-domain size effects. Furthermore, comparing FD-38 and FD-38-ABS curves, we observe that the use of NIP based absorbing boundary condition, on its own, hardly improves the answer. This suggests that, for the system under consideration, one cannot rely, solely, on absorbing boundary conditions to avoid reflection effects, and hence, must use a larger domain. Table ~\ref{tab:FEFDAl2}
compares the first two excitation peaks, the degrees of freedom and the total computational time for the finite-element and the finite-difference (46 a.u. domain size) based calculations. As is evident from the table, both finite-element and finite-difference based results agree to within $10$ meV in the excitation energies. Furthermore, in terms of computational efficiency, we observe a $\sim65$-fold speedup for finite-elements over finite-difference. We remark that this superior efficiency
for the finite-elements is largely attributed to fewer 
degrees of freedom that one can afford in finite-elements due to adaptive resolution of the mesh, as opposed to a uniform mesh in finite-difference. We underline this by noting that while finite-difference requires over $12$ million degrees of freedom, the finite-elements require only 31,411 degrees of freedom to attain similar accuracies.
\begin{figure}[h!]
    \begin{center}
        \includegraphics[scale=0.7]{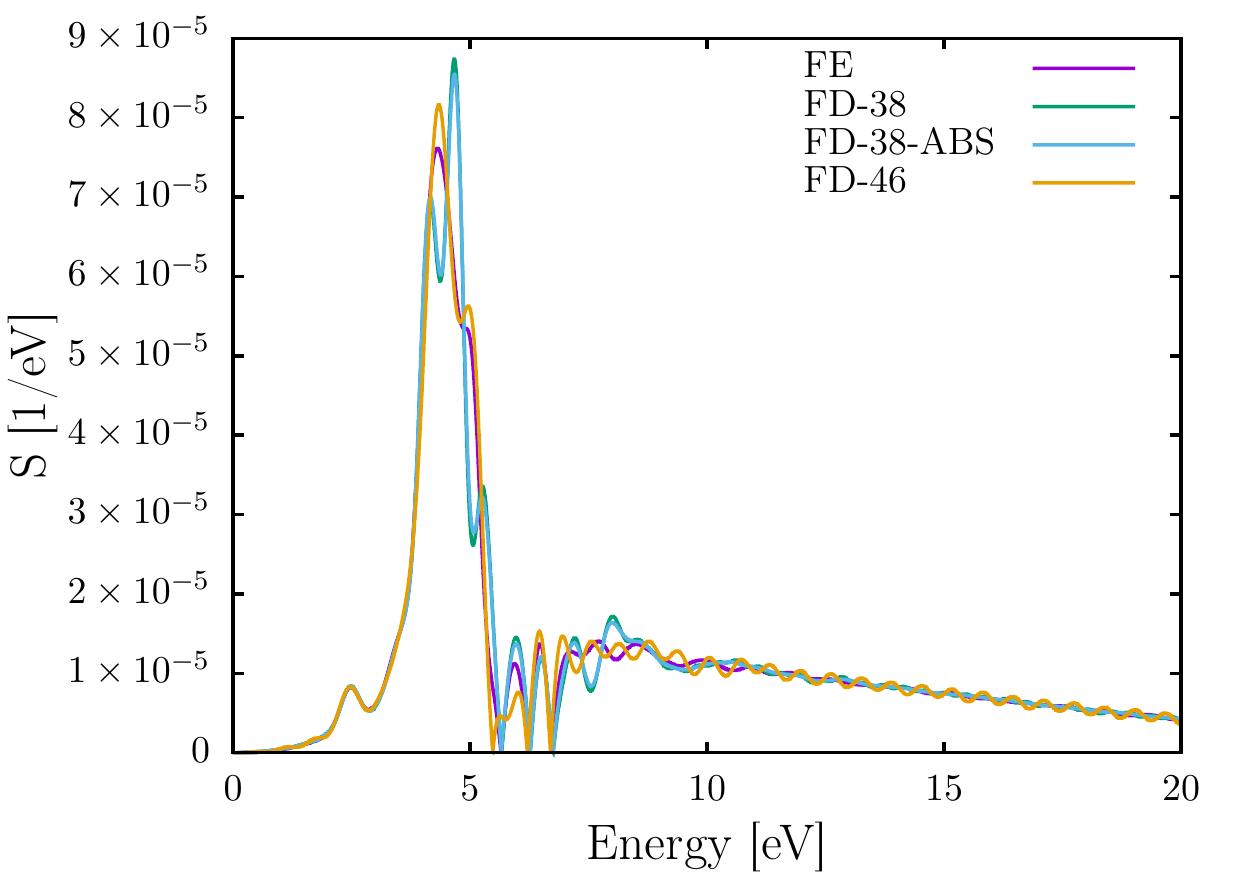}
            \caption{\small Absorption spectra for $\text{Al}_2$}
            \label{fig:Al2Abs}
    \end{center}
\end{figure}

\begin{table}[h!]
    \caption{\small Comparison of finite-element (FE) and finite-difference (FD) for $\text{Al}_2$: First and second excitation energies ($E_1$, $E_2$, respectively, in eV), degrees of freedom (DoF), and total computation CPU time (in CPU hours).} 
\begin{tabular}{M{0.1\columnwidth}M{0.15\columnwidth}M{0.15\columnwidth}M{0.25\columnwidth}M{.25\columnwidth}}
\hline 
\hline
    Method & $E_1$ & $E_2$ & DoF & CPU Hrs\\ \hline
    FE & $2.477$ & $4.325$ & $31,411$ & $2.11$ \\
    FD & $2.486$ & $4.332$ & $12,326,391$ & $138.8$\\ \hline
\end{tabular}
\label{tab:FEFDAl2}
\end{table}

\subsubsection{Pseudopotential calculations: $\text{Al}_{13}$}\label{sec:Al13}
We now consider a 13 atom aluminum cluster with an icosahedral symmetry. We use the same characteristic finite-element mesh as that of $\text{Al}_2$ but with a cubical domain of length 70 a.u., to avoid reflection effects. We excite the system from its ground-state using the parameters listed in Table~\ref{tab:Params}. We, once again, provide a comparative study
against finite-difference based calculation by using a uniform cubical mesh of size 56 a.u. and grid-spacing 0.2 a.u.\,. Fig.~\ref{fig:Al13Abs} compares absorption spectrum obtained from finite-elements against finite-difference. We have used the same Gaussian window as in the case of $\text{Al}_2$. As is evident from the figure, the peaks for both finite-element and finite-difference are in good agreement. Table ~\ref{tab:FEFDAl13} compares the first two excitation peaks, degrees of freedom, and the total computational time for the finite-element and the finite-difference based calculations. Both the methods agree to within 10 meV in the first two excitation energies. In terms of computational efficiency, the finite-elements attain an $\sim8$-fold savings in the computational
time against finite-difference, once again, attributed to the fewer degrees of freedom in finite-elements owing to adaptive resolution of the mesh. In particular, the finite-elements afford $\sim 30$-fold fewer degrees of freedom as compared to finite-difference. 
\begin{figure}[h!]
    \begin{center}
        \includegraphics[scale=0.7]{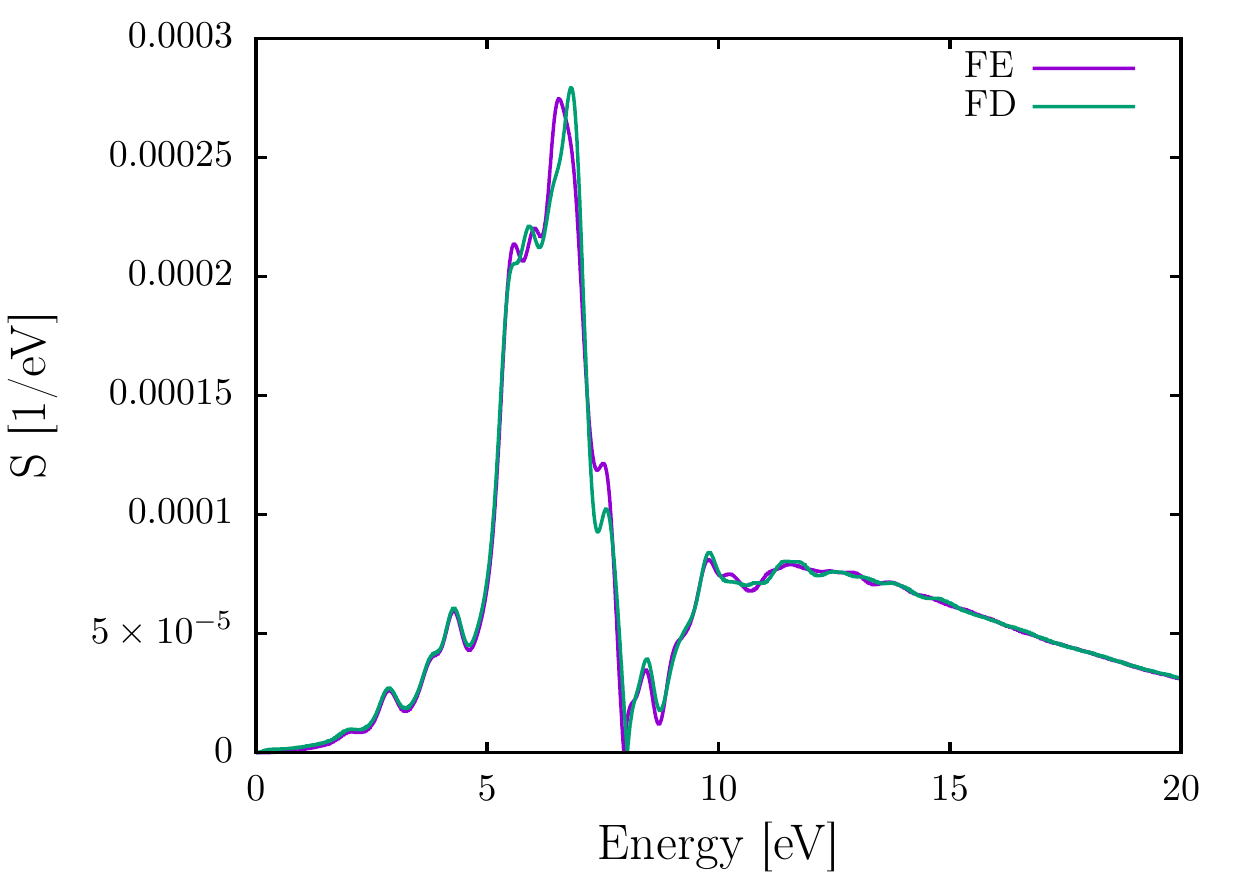}
            \caption{\small Absorption spectra for $\text{Al}_{13}$.}
            \label{fig:Al13Abs}
    \end{center}
\end{figure}

\begin{table}[h!]
    \caption{\small Comparison of finite-element (FE) and finite-difference (FD) for $\text{Al}_{13}$: First and second excitation energies ($E_1$, $E_2$, respectively, in eV), degrees of freedom (DoF), and total computation CPU time (in CPU hours).} 
\begin{tabular}{M{0.1\columnwidth}M{0.15\columnwidth}M{0.15\columnwidth}M{0.25\columnwidth}M{.25\columnwidth}}
\hline 
\hline
    Method & $E_1$ & $E_2$ & DoF & CPU Hrs\\ \hline
    FE & $2.876$ & $4.280$ & $698,782$ & $82.2$ \\
    FD & $2.880$ & $4.282$ & $22,188,041$ & $624.6$\\ \hline
\end{tabular}
\label{tab:FEFDAl13}
\end{table}

\subsubsection{Pseudopotential calculations: Buckminsterfullerene}\label{sec:Fullerene}
In this example, we consider the Buckminsterfullerene molecule comprising of 60 carbon atoms (240 electrons) packed into the shape of a buckyball. As with $\text{Al}_2$, we use an adaptive HEX64SPEC finite-elements discretization that follows the coarsening rate obtained from Eq.~\ref{eq:OptimalMeshSize} and is commensurate with an accuracy of $10$ meV
 in the ground-state energy per atom. We use a cubical domain of length 50 a.u. to eliminate any reflection effects from the boundaries.
We use the simulation parameters listed in Table~\ref{tab:Params} to excite the system from its ground-state. As with previous cases, we also assess the performance of higher-order finite-elements by comparing
against finite-difference based method, as implemented in the Octopus package. We assess the effects of domain size for the finite-difference mesh, by using two cubical domains of sizes 30 and 36 a.u., both with a grid-spacing of 0.15 a.u.\,. Furthermore, we study the effect of absorbing boundary conditions by performing an additional finite-difference calculation on the 30 a.u. mesh with a negative imaginary potential (NIP) near the boundaries. 
We use an NIP of the same form as used in $\text{Al}_2$  (see Sec. ~\ref{sec:Al2}), albeit with $L=14.0$ a.u.\,. We denote these three finite-difference calculations, namely, with domain size 36 a.u., with domain size 30 a.u., and with domain size 30 a.u. along with NIP absorbing boundary condition as FD-36, FD-30, and FD-30-ABS, respectively. 
Figure ~\ref{fig:FullereneAbs} shows the absorption spectrum obtained from finite-element and the three different finite-difference based calculations. We have used a same Gaussian window of the form $g(t)=e^{-\alpha t^2}$, with $\alpha=0.01$ a.u., to artificially broaden the peaks. As is evident from the figure, there is good agreement between the finite-element and FD-36 for all the excitation peaks. On the other hand, while FD-30 and FD-30-ABS have good agreement with finite-elements for the first two peaks, they differ for the rest, possibly because of reflection effects. Furthermore, comparing FD-30 and FD-30-ABS, we
remark that the use of NIP based absorbing boundary condition is not improving the absorption spectrum. This, once again, indicates that one cannot always dispense with the need for a larger domain by, solely, using absorbing boundary conditions. Table ~\ref{tab:FEFDFullerene} compares the first two excitation peaks, degrees of freedom, and the computational time for finite-elements against that of FD-36. Both finite-element and FD-36 based results
match within $30$ meV in the first two peaks. Furthermore, the excitation energies are also in good agreement with results presented in~\cite{Kawashita2009} (the first two excitation peaks, as we estimate from the absorption spectrum reported in~\cite{Kawashita2009}, are $\sim5.6$ eV and $\sim11.5$ eV, respectively.). In terms of computational efficiency, finite-elements attain a $\sim3$-fold speedup over FD-36. This higher efficiency of the finite-elements, is once again, attributed to a $\sim 9$-fold fewer degrees of freedom required by the finite-elements against that of finite-difference. 
\begin{figure}[h!]
    \begin{center}
        \includegraphics[scale=0.7]{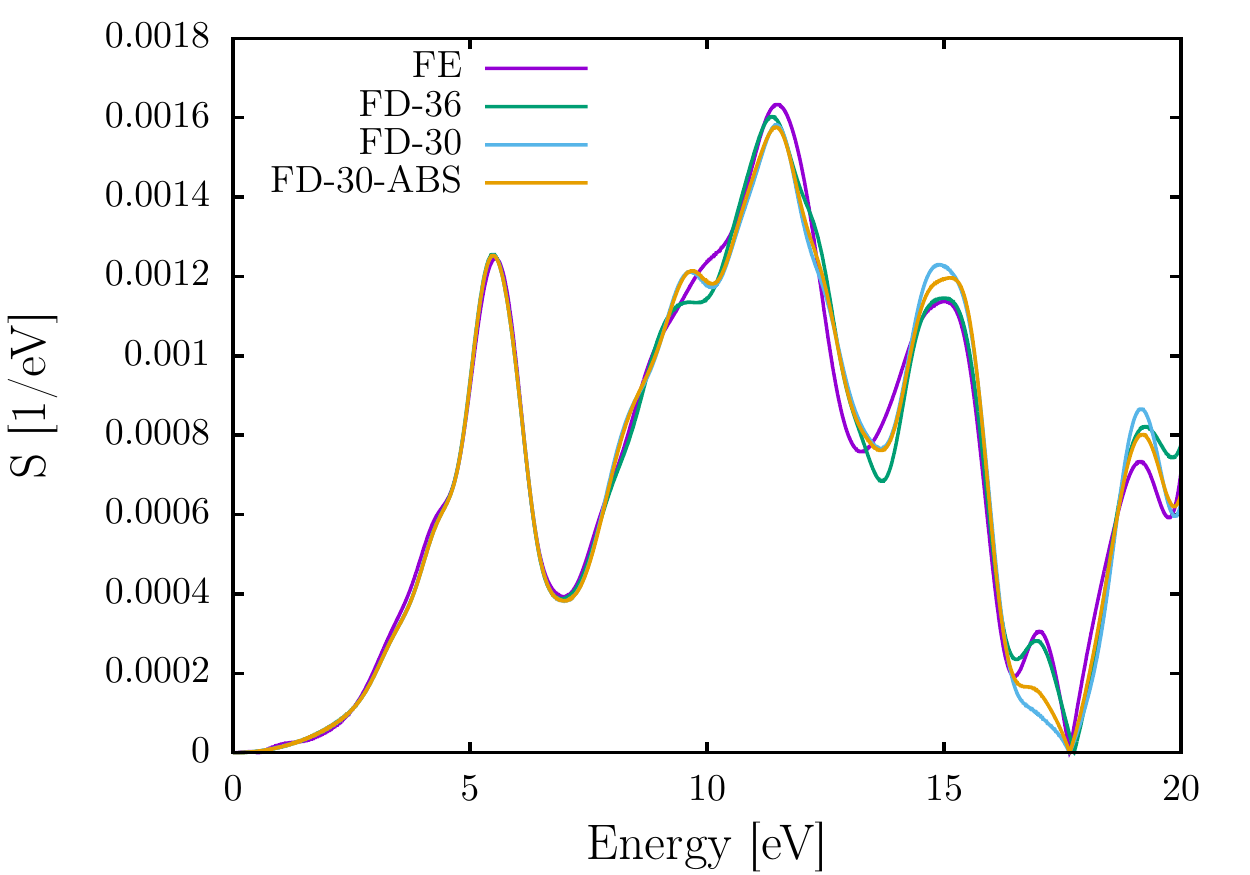}
            \caption{\small Absorption spectra of Buckminsterfullerene}
            \label{fig:FullereneAbs}
    \end{center}
\end{figure}

\begin{table}[h!]
    \caption{\small Comparison of finite-element (FE) and finite-difference (FD) for $\text{C}_{60}$: First and second excitation energies ($E_1$, $E_2$, respectively, in eV), degrees of freedom (DoF), and total computation CPU time (in CPU hours).} 
\begin{tabular}{M{0.1\columnwidth}M{0.15\columnwidth}M{0.15\columnwidth}M{0.25\columnwidth}M{.25\columnwidth}}
\hline 
\hline
    Method & $E_1$ & $E_2$ & DoF &CPU Hrs\\ \hline
    FE & $5.499$ & $11.412$ & $1,548,073$ & $5,200$ \\
    FD & $5.476$ & $11.439$ & $13,997,521$ &$15,361$\\ \hline
\end{tabular}
\label{tab:FEFDFullerene}
\end{table}

\subsubsection{Pseudopotential calculations: $\text{Mg}_2$}\label{sec:Mg2}
In this example, we study the higher harmonic generation in a magnesium dimer with bond-length of $4.74$ a.u.\,. Unlike the previous examples, we use a strong laser pulse to excite the system from its ground-state (see Table~\ref{tab:Params} for the simulation details). We use an adaptive HEX125SPEC mesh with the coarsening rate determined by Eq.~\ref{eq:OptimalMeshSize}. Furthermore, we use a cubical domain of length 100 a.u. to eliminate any reflection effects from the boundaries. We obtain the dipole power spectrum, $P(\omega)$, of the system by taking the imaginary part of the Fourier transform of the acceleration of the dipole moment, $\mu(t)$. To elaborate, $P(w)=\text{Im}\left(\int_0^T e^{-i\omega t}\frac{d^2}{dt^2}\mu(t)\,dt\right)$. Theoretically, for a system with spatial inversion symmetry, only odd multiples of the frequency of the exciting laser pulse must be emitted. We verify this in Figure ~\ref{fig:Mg2Power} wherein the peaks in the power spectrum coincide with odd
harmonics. Furthermore, we observe that the decay of the intensity of the peaks flattens beyond the $13$-$th$ harmonic, which corroborates well with the plateau phenomenon, typically observed in experiments ~\cite{Brabec2000}. We emphasize that despite the large domain size used in this calculation, we require only $\sim 60,000$ basis functions. This underlines the efficacy of higher-order finite-elements for even nonlinear regime in RT-TDDFT.   

\begin{figure}[h!]
    \begin{center}
        \includegraphics[scale=0.7]{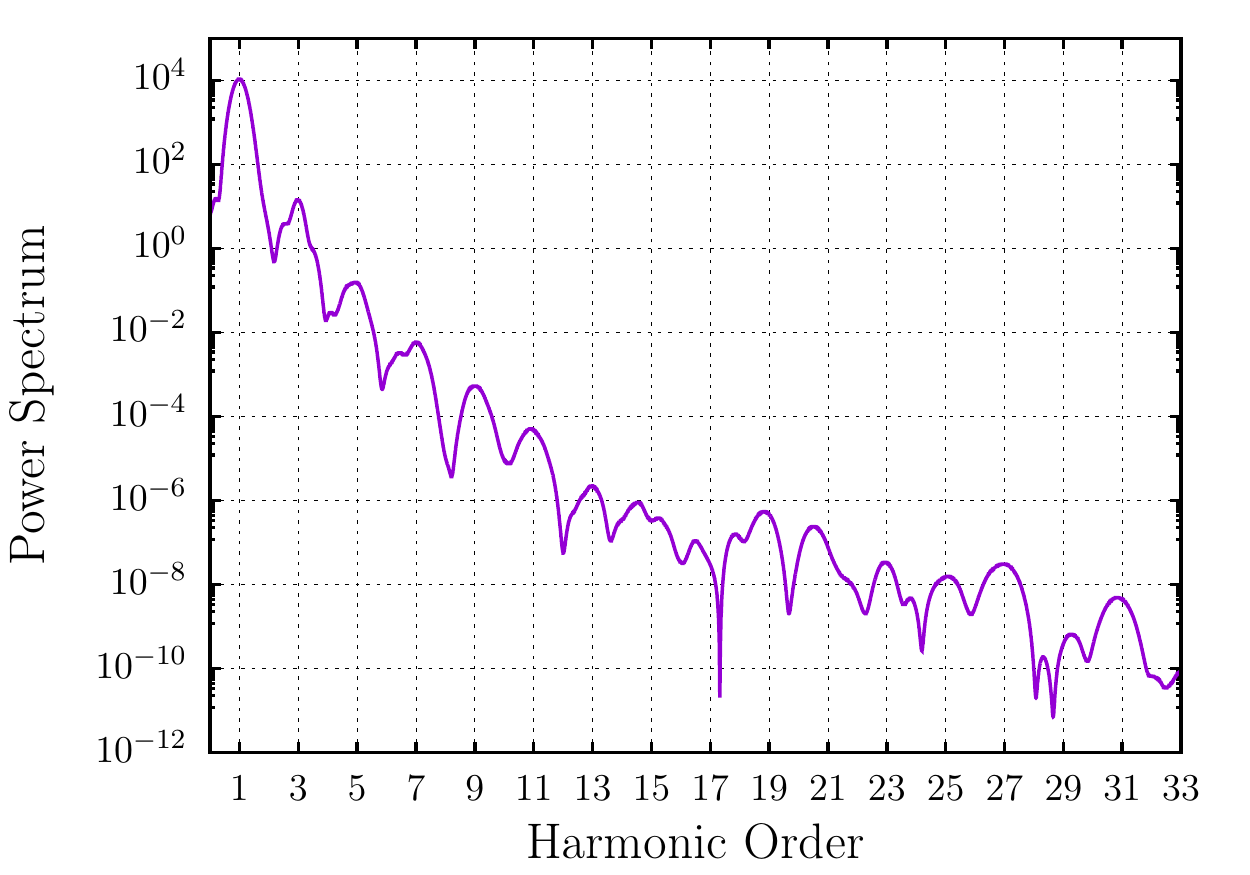}
            \caption{\small Dipole power spectrum of $\text{Mg}_2$}
            \label{fig:Mg2Power}
    \end{center}
\end{figure}

\subsubsection{All-electron calculations: Methane ($\text{CH}_4$)}\label{sec:CH4AE}
We now examine the competence of higher-order finite-elements for all-electron RT-TDDFT calculations by providing a comparative study with its pseudopotential counterpart. In this example, we consider a methane molecule with the same geometry as described in Sec. ~\ref{sec:CH4}. We use HEX64SPEC and HEX125SPEC elements for the pseudopotential and all-electron case, respectively. For both all-electron and pseudopotential cases, we use the same mesh adaption strategy
as used in all previous examples. For both the meshes, we use a large cubical domain of length 40 a.u., so as to eliminate reflection from the boundaries. Both the systems are excited from their respective ground-states using the simulation details listed in Table~\ref{tab:Params}. The absorption spectra for both the calculations are shown in Figure ~\ref{fig:MethaneAbs}. We used the same Gaussian window as in the case of Buckminsterfullerene (see Sec. ~\ref{sec:Fullerene}), to artificially broaden the peaks. As evident from the figure, we obtain remarkable agreement between the all-electron and pseudopotential results, i.e., the two curves are almost
identical. Table ~\ref{tab:AEPSPMethane} we list the first two excitation peaks, degrees of freedom, and total computational time for both the calculations. The first two excitation
peaks agree to within $10$ meV. We remark that the all-electron calculation requires $\sim100\text{x}$ more computational time as compared to the pseudopotential case. This large computational expense for the all-electron calculation stems primarily from the need of a highly refined mesh near the nuclei, so as to accurately capture the sharp variations in the electronic fields near the nuclei. This refinement has two major consequences: a) an increase in the degrees of freedom; and b) increase in $\norm{\psi\al^h}_{\Hone}$
, which in turn, warrants a smaller time-step (cf. Eq. ~\ref{eq:FullDiscreteDipoleMoment}) as well as a larger Krylov subspace to achieve the prescribed accuracy. In particular, for the case of methane, we required $\sim4\text{x}$ degrees of freedom and $\sim10\text{x}$ the size of the Krylov subspace as compared to that of the pseudopotential case. We emphasize that while finite-elements are expensive for the all-electron calculation, they provide the desired accuracy and offer systematic convergence (see Sec. ~\ref{sec:LiH}). Moreover, one can mitigate the need of a refined mesh for the all-electron calculation by using an enriched finite-element basis, wherein the standard (classical) finite-element basis are augmented with numerical atom-centered basis~\cite{Pask2011, Pask2012, Pask2017, Kanungo2017}. This idea has successfully attained $100-300\text{x}$ speedup over the standard (classical) finite-elements for ground-state DFT calculations~\cite{Kanungo2017}, and can be extended to RT-TDDFT to further the capabilities of
finite-elements.

\begin{figure}[h!]
    \begin{center}
        \includegraphics[scale=0.7]{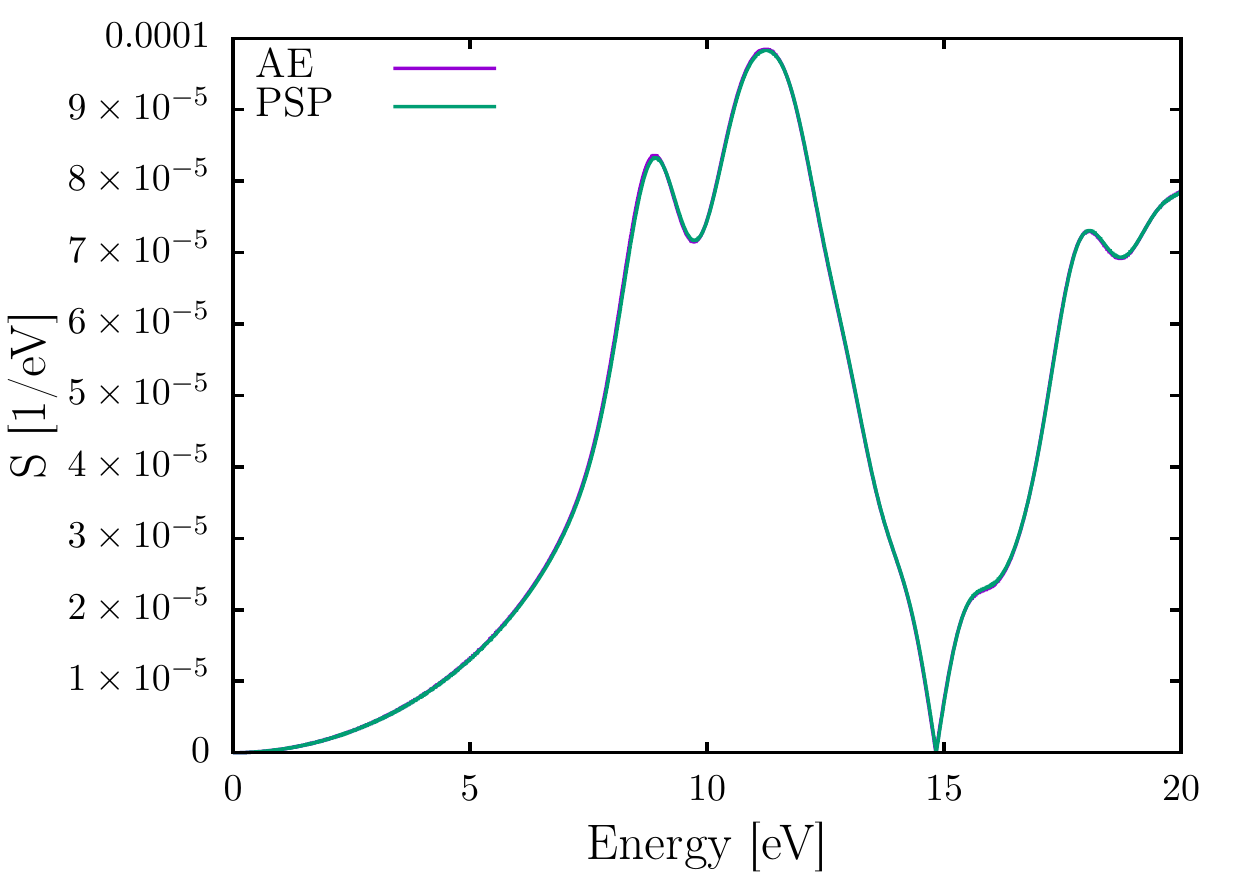}
            \caption{\small Absorption spectra of methane}
            \label{fig:MethaneAbs}
    \end{center}
\end{figure}

\begin{table}[h!]
    \caption{\small Comparison of all-electron (AE) and pseudopotential (PSP) calculations for methane: First and second excitation energies ($E_1$, $E_2$, respectively, in eV), degrees of freedom (DoF), and total computation CPU time (in CPU hours).} 
\begin{tabular}{M{0.1\columnwidth}M{0.15\columnwidth}M{0.15\columnwidth}M{0.25\columnwidth}M{.25\columnwidth}}
\hline 
\hline
    Method & $E_1$ & $E_2$ & DoF &CPU Hrs\\ \hline
    AE & $8.898$ & $11.238$ & $348,289$ & $13,653$ \\
    PSP & $8.907$ & $11.244$ & $80,185$ &$145$\\ \hline
\end{tabular}
\label{tab:AEPSPMethane}
\end{table}

\subsubsection{All-electron calculations: Benzene}\label{sec:BenzeneAE}
In this example, we perform similar comparative studies between all-electron and pseudopotential calculations for benzene molecule. As with the methane molecule, we use HEX64SPEC and HEX125SPEC finite-elements for the pseudopotential and all-electron calculation, respectively. Furthermore, we use the same characteristic mesh features (i.e., refinement near the nuclei, coarsening rate, simulation domain), in both the meshes, as their counterparts in the methane calculation (see Sec.~\ref{sec:CH4AE}). The simulation details, for both the cases, are listed in Table~\ref{tab:Params}. Figure ~\ref{fig:BenzeneAbs} compares the absorption spectra from the all-electron and pseudopotential calculations. Both the spectra compares well with thre results presented in~\cite{Lehtovaara2011}, in terms of first two excitation peaks (the first two excitation peaks, as we estimate from the absorption spectrum reported in~\cite{Lehtovaara2011}, are $\sim6.6$ eV and $\sim10$ eV, respectively). We remark that while there is qualitative agreement between the pseudpotential and all-electron calculations, quantitatively the predictions from all-electron and pseudopotential calculations differ. In particular, the first two excitation
peaks (see Table ~\ref{tab:AEPSPBenzene}) differ up to $\sim 0.2$ eV. This suggests that one ought to carefully test for the transferability of the pseudopotential approximation used, to provide reliable quantitative predictions from RT-TDDFT calculations. We take note that a more careful comparison of pseudopotential and all-electron calculations warrants a scan through a range of pseudopotential approximation. Nevertheless, the objective of this exercise is to highlight the fact that
finite-elements, by treating both pseudopotential and all-electron calculations on an equal footing, allows for a robust tool for such transferability studies.
\begin{figure}[h!]
    \begin{center}
        \includegraphics[scale=0.7]{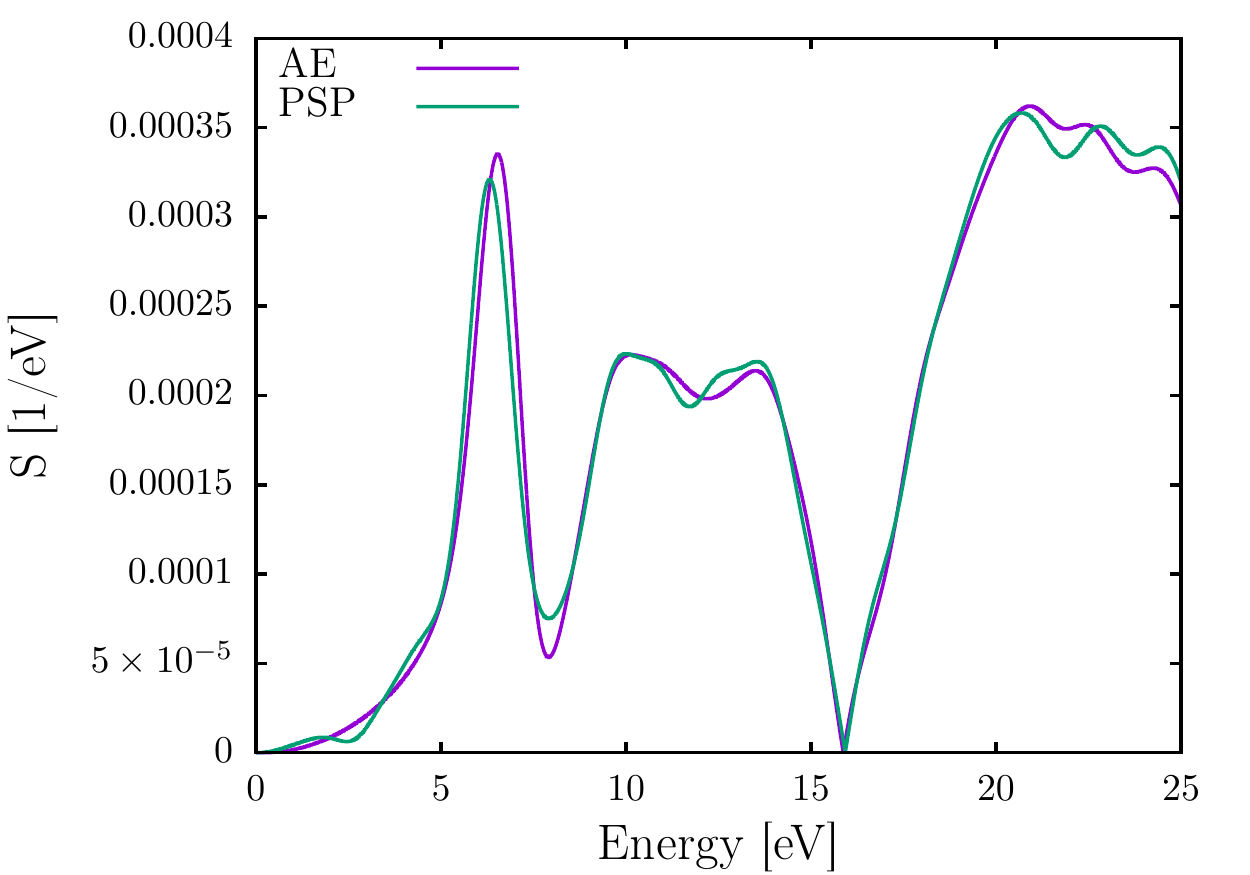}
            \caption{\small Absorption spectra of benzene}
            \label{fig:BenzeneAbs}
    \end{center}
\end{figure}

\begin{table}[h!]
    \caption{\small Comparison of all-electron (AE) and pseudopotential (PSP) calculations for benzene: First and second excitation energies ($E_1$, $E_2$, respectively, in eV), degrees of freedom (DoF), and total computation CPU time (in CPU hours).} 
\begin{tabular}{M{0.1\columnwidth}M{0.15\columnwidth}M{0.15\columnwidth}M{0.25\columnwidth}M{.25\columnwidth}}
\hline 
\hline
    Method & $E_1$ & $E_2$ & DoF &CPU Hrs\\ \hline
    AE & $6.521$ & $10.131$ & $989,649$ & $153,600$ \\
    PSP & $6.316$ & $10.007$ & $257,473$ & $1,574$\\ \hline 
\end{tabular}
\label{tab:AEPSPBenzene}
\end{table}

\subsection{Scalability} \label{sec:Scalability}
Lastly, we demonstrate the parallel scalability (strong scaling) of the proposed finite-element basis in Figure ~\ref{fig:Scalability}. We choose the Buckminsterfullerene molecule containing $\sim3.5$ million degrees of freedom (number of basis functions) as our fixed benchmark system and report the relative speedup with respect to the wall time on 24 processors. The use of any number of processors below 24 was unfeasible owing to the memory requirement posed by the system. As is evident from the figure, the
scaling is in good agreement with the ideal linear scaling behavior up to 384 processors, at which we observe a parallel efficiency of 86.2$\%$. However, we observe a deviation from linear scaling behavior at 768 processors with a parallel efficiency of 74.2$\%$. This is attributed to the fact that, at 768 processors, the number of degrees of freedom possessed by each processor falls below 5000, which is low to achieve good parallel scalability.  
\begin{figure}[h!] 
  \centering
  \includegraphics[scale=0.7]{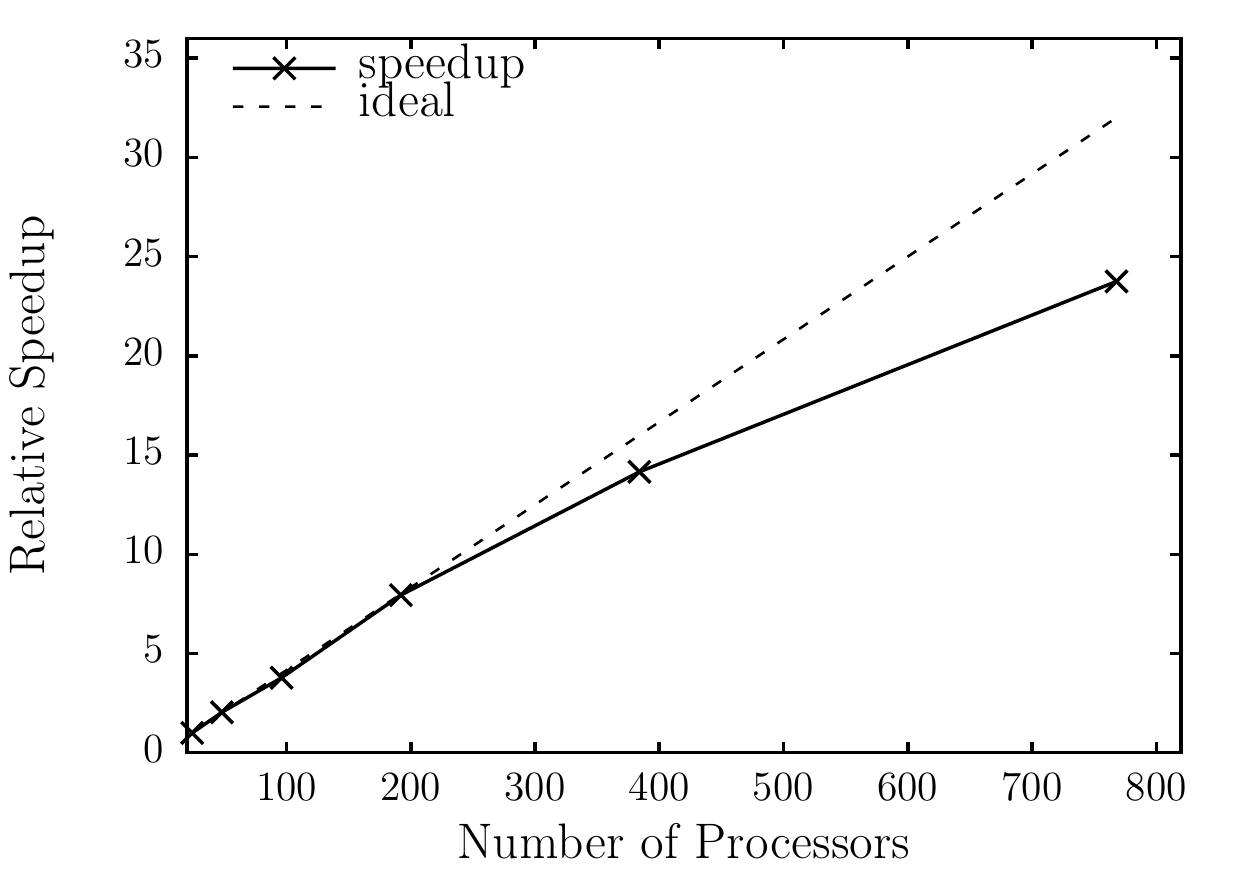}
  \caption{\small Parallel scalability of the higher-order finite-element implementation.}
  \label{fig:Scalability}
\end{figure}

\section{Summary} \label{sec:Summary}
In summary, we have investigated the accuracy, computational efficiency and scalability of higher-order finite-elements for the RT-TDDFT problem, for both pseudopotential and all-electron calculations. We presented an efficient \emph{a priori} spatio-temporal scheme guided by the discretization errors in the time-dependent Kohn-Sham orbitals, in the context of second-order Magnus propagator. In particular, we used the knowledge of the ground-state electronic fields to determine an efficient adaptively resolved finite-element mesh. This adaptive resolution is crucial in affording the use of large simulation domains without significant increase in the number of basis functions, and hence, allows us to circumvent the use of any artificial absorbing boundary conditions. A key aspect of the finite-element discretization in this work is the use of higher-order spectral finite-elements, which while providing a better conditioned basis also renders the overlap matrix diagonal when combined with special quadrature rules for numerical integration. This, in turn, enabled an efficient construction of the Magnus propagator (or any exponential time-integrator) for finite-element discretization. Furthermore, we employed an adaptive Lanczos subspace projection to evaluate the action of the Magnus propagator, defined as exponential of a matrix, on the Kohn-Sham orbitals.

We demonstrated the accuracy of the proposed approach through numerical convergence studies on both pseudopotential and all-electron systems, where we obtained close to optimal rates of convergence with respect to both spatial and temporal discretization, as determined by our error estimates. The computational efficiency afforded by using higher-order finite-element discretization was established, where a staggering $10-100$ fold speedup was obtained on benchmark systems by using a fourth-order finite-element in comparison to linear and quadratic finite-elements. Furthermore, we assessed the accuracy and efficiency afforded by our approach against the finite-difference based method of Octopus software package, for pseudopotential calculations. Across all the benchmark systems considered, we obtained good agreement in the absorption spectrum with calculations using the Octopus package. In terms of computational efficiency, we obtained $3-60$ fold speedup over finite-difference, which is largely attributed to the adaptive spatial resolution afforded by our approach. We also demonstrated the efficacy of finite-elements, especially its efficient handling of large domains, for nonlinear response by studying the higher harmonic generation under a strong electric field.  We also demonstrated the competence of higher-order finite-elements for the all-electron RT-TDDFT calculations. This underscores the versatility of finite-elements in handling both pseudopotential and all-electron calculations on an equal footing. Lastly, in terms of parallel scalability, we obtained good parallel efficiency up to 768 processors for a benchmark system comprising of the Buckminsterfullerene molecule containing $\sim 3.5$ million basis functions.

Thus, the proposed approach offers a computationally efficient, systematically improvable, and scalable basis for RT-TDDFT calculations, applicable to both pseudopotential and all-electron cases. We remark that, for the all-electron case, the need for a highly refined mesh near the nuclei increases the computational cost, as observed from the reported studies. For systems with heavier atoms, the mesh requirements become even more exacting. However, this can be alleviated by augmenting the finite-element basis with numerical atom-centered basis. This idea has been successfully used for ground-state DFT~\cite{Pask2011, Pask2012, Pask2017, Kanungo2017}, and its extension to RT-TDDFT is currently being investigated. Further, assessing the transferability of pseudopotentials for electron dynamics, enabled by the unified treatment of all-electron and pseudoptential calculations, is another interesting direction for future investigation.  


\acknowledgements
We are grateful for the support of Army Research Office through Grant number W911NF-15-1-0158, under the auspices of which the mathematical formulation, error analysis and numerical implementation was developed, and the pseudopotential studies were conducted. We also gratefully acknowledge the support from the Department of Energy, Office of Basic Energy Sciences, under grant number DE-SC0017380, for supporting the all-electron studies. This work used the Extreme Science and Engineering Discovery Environment (XSEDE), which is supported by National Science Foundation Grant number ACI-1053575. This research used resources of the National Energy Research Scientific Computing Center, a DOE Office of Science User Facility supported by the Office of Science of the U.S. Department of Energy under Contract No. DE-AC02-05CH11231. We also acknowledge the Advanced Research Computing at University of Michigan for providing additional computing resources through the Flux computing platform, part of which were performed using the computing cluster constructed from the DURIP Grant number W911NF-18-1-0242.      

\onecolumngrid
\section*{Appendix: Derivation of spatial and time discretization error estimates} \label{sec:appendix}
In this section we provide the detailed derivation of the spatial and time discretization error estimates presented in the main text (i.e., Eqs. ~\ref{eq:SemiDiscreteErrorH1GS} and ~\ref{eq:PsiTimeDiscreteError}).  

\subsection{Notations, assumptions and preliminaries} \label{sec:appendA}
For a  bounded closed domain $\Omega$ and bounded time interval $[0,T]$, we denote $\Omega_T=\Omega\times[0,T]$. For any two complex-valued functions $f(\br,t), g(\br,t):\Omega_T\rightarrow \mathbb{C}$, the inner product $(f,g)(t) = \int_{\Omega}{f(\br,t)g^\dag(\br,t)\;d\br}$, where $g^\dag(\br,t)$ denotes the complex conjugate
of $g(\br,t)$. Correspondingly, the norm $||f||_{L^2(\Omega)}(t) = \sqrt{(f,f)(t)}$. Thus, we extend the definition of the standard $L^2(\Omega)$ and $H^1(\Omega)$ spaces to define
\begin{subequations} \label{eq:Spaces}
    \begin{equation}
        L^2(\Omega_T)=\left\{f(\br,t) \, \middle| \, ||f||_{L^2(\Omega)}(t) \leq \infty, \quad \forall t \in [0,T] \right\}\,,
    \end{equation}
    \begin{equation}
        H^1(\Omega_T)=\left\{f(\br,t) \, \middle| \, f,\frac{\partial{f}}{\partial t},Df \in L^2(\Omega_T) \right\}\,,
    \end{equation}
    \begin{equation}
        H_0^1(\Omega_T)=\left\{f(\br,t) \, \middle| \, f\in H^1(\Omega_T), f(\br,t)\vert_{\partial \Omega}=0, \quad \forall t \in [0,T]\right\}\,,
    \end{equation}
\end{subequations}
where $Df$ denotes the spatial partial derivatives of $f$, and $\partial \Omega$ denotes the boundary of $\Omega$.
Additionally, we define two more spaces relevant to the Poisson problem (Eq. ~\ref{eq:Poisson}),
\begin{subequations} \label{eq:SpacesPoisson}
    \begin{equation}
        H^1_Z(\Omega_T)=\left\{f(\br,t) \, \middle| \, f \in H^1(\Omega_T), f(\br,t)\vert_{\partial \Omega}=\sum_{I=1}^{N_a}\frac{Z_I}{|\br-\bR_I|} \quad \forall t \in [0,T] \right\}
    \end{equation}
    \begin{equation}
        H^1_{-Z}(\Omega)=\left\{f(\br) \, \middle| \, f \in H^1(\Omega), f(\br)\vert_{\partial \Omega}=\sum_{I=1}^{N_a}\frac{-Z_I}{|\br-\bR_I|} \right\}.
    \end{equation}
\end{subequations}\\
For conciseness of notation, in all our subsequent discussion, we drop the argument $t$ from the inner product as well as all the $L^p$ and $H^1$ norms. Thus, any occurrence of $(.\,,\,.)$, $||.||_{L^p(\Omega)}$, and $||.||_{H^1(\Omega)}$ are to treated as time-dependent, unless otherwise specified.\\

We list certain weak assumptions that we invoke throughout our error-estimates. 
\begin{enumerate}[label=$\mathcal{A}$\arabic*]
    \item The time-dependent Kohn-Sham orbitals and their spatial derivatives are bounded and have a compact support on $\Omega$, which, in turn, is a large but a bounded subset of $\mathbb{R}^3$. To elaborate, $\psi\al\in H^1_0(\Omega_T)\cap L^{\infty}(\Omega_T)$. \label{assm:psiBound}
    \item The nuclear potential (in the all-electron case), due to the use of regularized nuclear charge distribution $b(\br;\bR)$ (defined in Eq. ~\ref{eq:Nuclear_poisson}), is bounded, i.e., $V_N^{ae} \in L^{\infty}(\mathbb{R}^3)$. \label{assm:VNAllElectron} 
    \item The local part of the pseudopotential is bounded, i.e., $V_{psp}^{loc} \in L^{\infty}(\mathbb{R}^3)$. \label{assm:VPSPLocal}
    \item The short-ranged potentials appearing in the nonlocal part of the pseudopotential are bounded, i.e., $\delta V^I_l \in L^{\infty}(\Omega)$. \label{assm:VPSPNonLocal}
    \item The exchange-correlation potential and its derivative with respect to density are both bounded, i.e., $V_{XC}[\rho], V_{XC}'[\rho] \in L^{\infty}(\mathbb{R}^3), \forall t \in [0,T]$. \label{assm:VXC}
    \item The external field is bounded, i.e., $V_{field} \in L^{\infty}(\mathbb{R}^3), \forall t \in[0,T]$. \label{assm:VField}
    \item The induced operator (or matrix) norm of the Kohn-Sham Hamiltonian and the Laplace operator are equivalent, i.e., $\exists$ time-independent bounded constants $C_1, C_2$ such that: \\
    $C_1 \norm{\nabla^2\phi}_{\Ltwo} \leq \norm{H_{KS}\phi}_{\Ltwo} \leq  C_2 \norm{\nabla^2\phi}_{\Ltwo} \,, \forall \phi \in H^1_0(\Omega), \,\forall t \in [0,T]$. \label{assm:NormEqui} 
    \item The first and second time-derivatives of the Kohn-Sham potential are bounded, i.e., $\norm{\frac{d}{dt}V_{KS}(t)}_{\Ltwo} \leq C_1$ and $\norm{\frac{d^2}{dt^2}V_{KS}(t)}_{\Ltwo} \leq C_2\,, \forall t \in [0,T]$, where $C_1, C_2$ are time-independent bounded constants. \label{assm:VKSTimeDerBound}
\end{enumerate}
We remark that while the validity of~\ref{assm:psiBound} and~\ref{assm:NormEqui} are apparent in the case of pseudopotential calculations, for the all-electron case, it is reasonable to assume the same owing to the use of regularized nuclear charge distribution $b(\br;\bR)$.
Using these assumptions, we derive certain formal bounds that will subsequently be used in deriving the error estimates. To this end, given two different densities $\rho_{\Psi_1}(\br,t)$ and $\rho_{\Psi_2}(\br,t)$ defined by the set of orbitals $\Psi_1=\{\psi_{1,1},\psi_{1,2},\ldots,\psi_{1,N_e}\}$ and $\Psi_2=\{\psi_{2,1},\psi_{2,2},\ldots,\psi_{2,N_e}\}$, respectively, we seek to bound
$\norm{V_{KS}[\rho_{\Psi_1}]\psi_{1,\alpha}-V_{KS}[\rho_{\Psi_2}]\psi_{2,\alpha}}_{\Ltwo}$ in terms of $(\psi_{1,\alpha}-\psi_{2,\alpha})$ and $(\rho_{\Psi_1}-\rho_{\Psi_2})$. We remark that all the subsequent results hold $\forall \alpha \in \{1,2,\ldots,N_e\}$, unless otherwise specified. Moreover, the constants $C$, its subscripted forms (i.e., $C_1$, $C_2$, etc.), and primed forms ($C'$), that appear subsequently, are positive and bounded.

To begin with, we note, through straightforward use of Cauchy-Schwarz and Sobolev inequalities, that
\begin{subequations} \label{eq:RhoInequality}
\begin{equation} \label{eq:RhoInequality1}
        \norm{\rho_{\Psi_1}-\rho_{\Psi_2}}_{\Lone} \leq C \sum_{\alpha=1}^{N_e}\norm{\psi_{1,\alpha}-\psi_{2,\alpha}}_{\Ltwo}  \,,
\end{equation}
\begin{equation} \label{eq:RhoInequality2}
        \norm{\rho_{\Psi_1}-\rho_{\Psi_2}}_{\Ltwo} \leq C \sum_{\alpha=1}^{N_e}\norm{\psi_{1,\alpha}-\psi_{2,\alpha}}_{\Hone}  \,.
\end{equation}
\end{subequations}
Furthermore, for the convolution integral of $\rho$ and $\frac{1}{|\br|}$, denoted by $|\br|^{-1}*\rho=\int_{\Omega}\rho(\bx)\frac{1}{|\br-\bx|}d\bx$, we have
\begin{subequations} \label{eq:HartreeConv}
\begin{equation} \label{eq:HartreeConv1}
    \norm{|\br|^{-1}*\rho}_{\Linf} \leq C\norm{|\br|^{-1}}_{\Ltwo} \norm{\rho}_{\Ltwo}\,,
\end{equation}
\begin{equation} \label{eq:HartreeConv2}
    \norm{|\br|^{-1}*\rho}_{\Ltwo} \leq C\norm{|\br|^{-1}}_{\Ltwo} \norm{\rho}_{\Lone}\,,
\end{equation}
\end{subequations}
where we have used the Young's inequality along with the fact that $|\br|^{-1} \in L^2(\Omega)$. 

We now bound $\norm{V_{KS}[\rho_{\Psi_1}]\psi_{1,\alpha}-V_{KS}[\rho_{\Psi_2}]\psi_{2,\alpha}}_{\Ltwo}$, by decomposing $V_{KS}$ into its Hartree ($V_H$), nuclear ($V_N$), exchange-correlation ($V_{XC}$) and field ($V_{field}$) components, and bounding each of the components. For the Hartree potential, we have, for $\forall v \in H_0^1(\Omega_T)$, 
\begin{equation} \label{eq:HartreeDotProd}
    \left(V_H[\rho_{\Psi_1}]\psi_{1,\alpha}-V_H[\rho_{\Psi_2}]\psi_{2,\alpha},v\right) = \left(V_H[\rho_{\Psi_1}](\psi_{1,\alpha}-\psi_{2,\alpha}),v\right) + \left(V_H[\rho_{\Psi_1}-\rho_{\Psi_2}]\psi_{2,\alpha},v\right) \,.
\end{equation}
Thus, using result of Eq. ~\ref{eq:HartreeConv2} along with the fact that $\psi_{2,\alpha}\in L^{\infty}(\Omega_T)$ (from ~\ref{assm:psiBound}) and  $V_H[\rho_{\Psi_1}] \in L^{\infty}(\Omega_T)$ (from Eq.~\ref{eq:HartreeConv1}), it follows that
\begin{equation} \label{eq:HartreeInequality}
\begin{split}
    \modulus{\left(V_H[\rho_{\Psi_1}]\psi_{1,\alpha}-V_H[\rho_{\Psi_2}]\psi_{2,\alpha},v\right)} 
    & \leq C\left(\norm{\psi_{1,\alpha}-\psi_{2,\alpha}}_{\Ltwo}\norm{v}_{\Ltwo} + \norm{\rho_{\Psi_1}-\rho_{\Psi_2}}_{\Lone}\norm{v}_{\Ltwo}\right)\,.
\end{split}
\end{equation}

Next, for the exchange-correlation potential, we use the mean value theorem to note that 
\begin{equation} \label{eq:VXCMeanValue}
    V_{XC}[\rho_{\Psi_1}]\psi_{1,\alpha}-V_{XC}[\rho_{\Psi_2}]\psi_{2,\alpha} = (V_{XC}[\rho_{X}]+2 \chi\al^2 V'_{XC}[\rho_{X}])(\psi_{1,\alpha}-\psi_{2,\alpha})\,,
\end{equation}
where $\rho_X$ is defined by the orbitals $\chi\al=\lambda\al\psi_{1,\alpha}+(1-\lambda\al)\psi_{2,\alpha},\text{~for~some~} \lambda\al \in [0,1]$. Using the above relation, we have, $\forall v \in H_0^1(\Omega_T)$,  
\begin{equation} \label{eq:VXCInequality}
    \begin{split}
        \modulus{\left(V_{XC}[\rho_{\Psi_1}]\psi_{1,\alpha}-V_{XC}[\rho_{\Psi_2}]\psi_{2,\alpha},v\right)} &= \modulus{\left((V_{XC}[\rho_{X}]+2 \chi\al^2 V'_{XC}[\rho_{X}])(\psi_{1,\alpha}-\psi_{2,\alpha}),v\right)} \\ 
        & \leq \norm{V_{XC}[\rho_{X}]+2 \chi\al^2 V'_{XC}[\rho_{X}]}_{\Linf}\norm{\psi_{1,\alpha}-\psi_{2,\alpha}}_{L^2(\Omega)} \norm{v}_{\Ltwo}\\
        & \leq C\norm{\psi_{1,\alpha}-\psi_{2,\alpha}}_{L^2(\Omega)}\norm{v}_{\Ltwo}\,,
    \end{split}
\end{equation}
where we have used the boundedness assumption on $V_{XC}$ and $V'_{XC}$ (assumption ~\ref{assm:VXC}).

Similarly, using the boundedness assumptions on $V_N^{ae}$ (\ref{assm:VNAllElectron}), $V_N^{psp}$( \ref{assm:VPSPLocal}, ~\ref{assm:VPSPNonLocal}), and $V_{field}$ (\ref{assm:VField}) it is easy to observe, $\forall v \in H_0^1(\Omega_T)$,  
\begin{equation} \label{eq:VNAllElectronInequality}
    \begin{split}
        \modulus{\left(V_N^{ae}\psi_{1,\alpha}-V_N^{ae}\psi_{2,\alpha},v\right)} & \leq C \norm{\psi_{1,\alpha}-\psi_{2,\alpha}}_{\Ltwo}\norm{v}_{\Ltwo}\,.
    \end{split}
\end{equation}
\begin{equation} \label{eq:VPSPInequality}
    \begin{split}
    \modulus{\left(V_N^{psp}\psi_{1,\alpha}-V_N^{psp}\psi_{2,\alpha},v\right)} & \leq C\norm{\psi_{1,\alpha}-\psi_{2,\alpha}}_{\Ltwo}\norm{v}_{\Ltwo} \,.   
    \end{split}
\end{equation}
\begin{equation} \label{eq:VFieldInequality}
\begin{split}
\modulus{\left(V_{field}\psi_{1,\alpha}-V_{field}\psi_{2,\alpha},v\right)} \leq C \norm{\psi_{1,\alpha}-\psi_{2,\alpha}}_{\Ltwo}\norm{v}_{\Ltwo} \,.
\end{split}
\end{equation}

We now define the weak solution of the TDKS equation (Eq.~\ref{eq:TDKS}) as follows: 
given an initial state $\psi\al(\br,0) \in H_0^1(\Omega)$, we seek $\psi\al(\br,t) \in H_0^1(\Omega_T)$ such that
\begin{equation} \label{eq:WeakSolution}
    i\left(\frac{\partial \psi\al}{\partial t}, v\right)=\frac{1}{2}\left(\nabla \psi\al,\nabla v \right) + \left(V_{KS}[\rho]\psi\al,v \right), \quad \forall v \in H_0^1(\Omega_T)\,, \, \text{and} ~ \forall t \in [0,T]\,.
\end{equation}
Similarly, the weak solutions to the Poisson problems defined in Eq. ~\ref{eq:Poisson} are defined to be $V_H(\br,t)\in H^1_Z(\Omega_T)$, and $V_N^{ae}(\br,\bR) \in H^1_{-Z}(\Omega)$, satisfying, 
\begin{subequations} \label{eq:WeakSolutionPoisson}
    \begin{equation} \label{eq:WeakSolutionHartree}
        (\nabla V_H,\nabla v)=4\pi(\rho,v), \quad \forall v \in H_0^1(\Omega_T), \, \text{and} ~ \forall t \in [0,T]
    \end{equation}
    \begin{equation} \label{eq:WeakSolutionNuclear}
        \left(\nabla V_N^{ae},\nabla v\right)=4\pi\left(b,v\right), \quad \forall v \in H_0^1(\Omega)\,.
    \end{equation}
\end{subequations}

\subsection{Derivation of spatial discretization error estimate} \label{sec:appendB}
We denote $X^{h,p} \in H^1(\Omega)$ to be the finite-dimensional space of dimension $n^h$, spanned by finite-element basis functions of order $p$. Further, we denote $X_0^{h,p}=X^{h,p} \cap H_0^1(\Omega)$. We now define the semi-discrete solution, $\psi\al^h(\br,t)$, to Eq. ~\ref{eq:WeakSolution} as follows: given an initial state $\psi\al^h(\br,0) \in X_0^{h,p}$, we seek $\psi\al^h(\br,t) \in X_0^{h,p} \times [0,T]$ such that
\begin{equation} \label{eq:SemiDiscreteWeakSolution}
    i\left(\frac{\partial \psi\al^{h}}{\partial t}, v^h\right)=\frac{1}{2}\left(\nabla \psi\al^h,\nabla v^h \right) + \left(V_{KS}^h[\rho^h]\psi\al^h,v^h \right), \quad \forall v^h \in X_0^{h,p}\times [0,T], \, \text{and} ~ \forall t \in [0,T] \,,
\end{equation}
where $\rho^h(\br,t)=\sum\limits_{\alpha=1}^{N_e}{\modulus{\psi\al^h(\br,t)}^2}$ and $V_{KS}^h[\rho^h](\br,t)=V_H^h[\rho^h](\br,t)+V_N^h(\br;\bR)+V_{XC}[\rho^h](\br,t)+V_{field}(\br,t)$. 

We now elaborate on the different terms appearing in the expression for $V_{KS}^h[\rho^h](\br,t)$. First, to define appropriate boundary conditions for $V_H^h[\rho^h](\br,t)$ and $V_N^h(\br)$, we introduce the function $f^h(\br;\bR) = \sum_{j=1}^{n_h}q_j N_j(\br)$, with 
\begin{equation*}
q_j=\begin{cases}
\sum_{I=1}^{N_a}\frac{Z_I}{\modulus{\br_j-\bR_I}},&~\text{if}~j^{th}~\text{node~(positioned at}~\br_j)~\text{is a boundary node}\\  
0,\,&~\text{otherwise}\,,
\end{cases}
\end{equation*}
as an interpolation of the boundary conditions of Eqs.~\ref{eq:Poisson} into $X^{h,p}$. This allows us to define the discrete counterpart of the weak solution described in Eq. ~\ref{eq:WeakSolutionHartree} as $V_H^h[\rho^h](\br,t) = V_{H,0}^h[\rho^h](\br,t) + f^h(\br;\bR)$, with $V_{H,0}^h[\rho^h](\br,t) \in X^{h,p}_0\times [0,T]$, such that
\begin{equation} \label{eq:SemiDiscreteWeakSolutionHartree}
    (\nabla V_{H,0}^h,\nabla v^h)=4\pi(\rho^h,v^h)-(\nabla f^h,\nabla v^h), \quad \forall v^h \in X_0^{h,p}\times [0,T]\,, \text{and} ~ \forall t \in [0,T] \,. 
\end{equation}
Similarly, we define the discrete analog of the weak solution defined in Eq. ~\ref{eq:WeakSolutionNuclear} as $V_N^{ae,h}(\br;\bR)=V_{N,0}^{ae,h}(\br;\bR)-f^h(\br;\bR)$, with $V_{N,0}^h(\br;\bR)\in X^{h,p}_0$, such that 
\begin{equation} \label{eq:SemiDiscreteWeakSolutionNuclear}
    (\nabla V_{N,0}^{ae,h},\nabla v^h)=4\pi(b,v^h)+(\nabla f^h,\nabla v^h), \quad \forall v^h \in X_0^{h,p} \,.
\end{equation}

For the pseudopotential case, $V_N^h(\br;\bR)$ is same as the continuous function $V_N^{psp}(\br;\bR)$.

We now introduce the concept of Ritz projection, $\mathcal{P}_h$, which will be used in subsequent error estimates. The Ritz projection $\mathcal{P}_h:H_0^1(\Omega_T)\rightarrow X_0^{h,p}\times [0,T]$ is defined through the following Galerkin orthogonality condition,  
\begin{equation} \label{eq:Ritz}
    \left(\nabla (\psi - \mathcal{P}_h\psi),\nabla v^h\right) = 0, \quad \forall \psi \in H^1_0(\Omega_T), \,\forall v^h \in X_0^{h,p}\times [0,T],\, \text{and} ~ \forall t \in [0,T] \,.
\end{equation}
This allows us to use some standard finite-element error estimates ~\cite{Ciarlet2002} to bound $\norm{\psi-\mathcal{P}_h\psi}_{\Ltwo}$. 

In order to prove the bound of Eq. ~\ref{eq:SemiDiscreteErrorH1GS}, we, first, present a general case with no assumptions on the initial orbitals $\psi\al(\br,0)$. We then present the special case of the initial orbitals being ground-state Kohn-Sham orbitals, as a corollary to the general case. Furthermore, we note that the an error estimate for $\norm{\psi\al-\psi\al^h}_{\Hone}$, in turn, requires an estimate for $\norm{\psi\al-\psi\al^h}_{\Ltwo}$. Therefore, in our subsequent analysis we report estimates for both $\norm{\psi\al-\psi\al^h}_{\Ltwo}$ $\norm{\psi\al-\psi\al^h}_{\Hone}$. We emphasize that, although the numerical studies presented in this work have used hexagonal elements, the following results apply to other shapes of finite element, and hence, in our analysis we denote the mesh using the generic term `triangulation'~\cite{Ciarlet2002}. In particular, we take a triangulation $\mathcal{T}^{h,p}$ of $p^{th}$ order finite-elements covering the domain $\Omega$.
\begin{prop} \label{prop:SemiDiscrete}
    Assuming uniqueness and existence of the solution to Eqs. ~\ref{eq:WeakSolution} and ~\ref{eq:SemiDiscreteWeakSolution}, we obtain the following bounds on the finite-element semi-discrete approximation error to the Kohn-Sham orbitals: 

    \begin{subequations} \label{eq:SemiDiscreteErrorAppendix}
        \begin{equation} \label{eq:SemiDiscreteErrorL2}
            \begin{split}
                \sum_{\alpha=1}^{N_e}\norm{\psi\al-\psi\al^h}_{\Ltwo} (t) &\leq C_{1} e^{C_2 t} (t+1) \sum_e h_e^{p+1}\sum_{\alpha=1}^{N_e}\left(\modulus{\psi\al}_{p+1,\Omega_e}(s_{1,\alpha}) +  \modulus{\psi\al}_{p+1,\Omega_e}(s_{2,\alpha})  +  \modulus{\psi\al}_{p+3,\Omega_e}(s_{2,\alpha})\right) \\
                & \quad + C_{1} e^{C_2 t} t \sum_e h_e^{p+1} \left(\modulus{V_H[\rho^h]}_{p+1,\Omega_e}(s_3) + \modulus{V_N}_{p+1,\Omega_e}\right) + e^{C_2 t} \sum_{\alpha=1}^{N_e}\norm{\psi\al-\psi\al^h}_{\Ltwo}(0)\,, 
            \end{split}
    \end{equation}
        \begin{equation} \label{eq:SemiDiscreteErrorH1}
            \begin{split}
                \sum_{\alpha=1}^{N_e}\norm{\psi\al-\psi\al^h}_{\Hone} (t) & \leq  C_{3} e^{C_2 t} (t+1) \sum_e h_e^{p}\sum_{\alpha=1}^{N_e}\left(\modulus{\psi\al}_{p+1,\Omega_e}(s_{1,\alpha}) + \modulus{\psi\al}_{p+1,\Omega_e}(s_{2,\alpha})  +  \modulus{\psi\al}_{p+3,\Omega_e}(s_{2,\alpha})\right) \\
        & \quad + C_{3} e^{C_2 t} t \sum_e h_e^{p} \left(\modulus{V_H[\rho^h]}_{p+1,\Omega_e}(s_3) + \modulus{V_N}_{p+1,\Omega_e}\right)\\
        & \quad + C_{3} e^{C_2t} h_{min}^{-1}\sum_{\alpha=1}^{N_e}\norm{\psi\al-\psi\al^h}_{\Ltwo}(0) \,,
            \end{split}
    \end{equation}
\end{subequations}
    where $e$ denotes a finite-element of mesh size $h_e$ and cover $\Omega_e$ in the triangulation $\mathcal{T}^{h,p}$, $h_{min}$ represents the smallest element in the triangulation $\mathcal{T}^{h,p}$, and $\modulus{.}_{p,\Omega_e}$ is the semi-norm in $H^{p}(\Omega_e)$. The arguments $s_{1,\alpha},s_{2,\alpha},~\text{and}~s_3$ are defined as
    \begin{equation} \label{eq:SArg}
    \begin{gathered}
        s_{1,\alpha} = \argmax_{0 \leq s \leq t}\norm{\psi\al-\mathcal{P}_h\psi\al}_{\Ltwo}(s), \quad  s_{2,\alpha} = \argmax_{0 \leq s \leq t}\norm{\frac{\partial{\psi\al}}{\partial t} - \mathcal{P}_h\frac{\partial{\psi\al}}{\partial t}}_{\Ltwo}(s),\text{~and~} \\
        \quad s_3 = \argmax_{0 \leq s \leq t}\norm{{V_H}[\rho^h]-V_H^h[\rho^h]}_{\Ltwo}(s)\,.
    \end{gathered}
    \end{equation}
\end{prop}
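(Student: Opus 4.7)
The plan is to prove both bounds by a standard finite-element energy argument combined with a Ritz-projection splitting, adapted to the nonlinear Schr\"odinger-type structure of the TDKS system. I introduce the decomposition $\psi\al - \psi\al^h = \eta\al + \theta\al$, with $\eta\al := \psi\al - \mathcal{P}_h\psi\al$ and $\theta\al := \mathcal{P}_h\psi\al - \psi\al^h \in X_0^{h,p}$. Because $\mathcal{P}_h$ is time-independent, $\partial_t\eta\al = \partial_t\psi\al - \mathcal{P}_h\partial_t\psi\al$; the pieces $\eta\al$ and $\partial_t\eta\al$ are then controlled directly by the classical Bramble--Hilbert element-wise interpolation estimates, so only $\theta\al$ requires genuine work.

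Next I would derive the error equation by subtracting Eq.~\ref{eq:SemiDiscreteWeakSolution} from Eq.~\ref{eq:WeakSolution} restricted to $v^h \in X_0^{h,p}$ and using the Galerkin orthogonality of Eq.~\ref{eq:Ritz} to eliminate $(\nabla\eta\al,\nabla v^h)$. Choosing $v^h = \theta\al$ and extracting the imaginary part kills the (real) kinetic quadratic form $\tfrac{1}{2}(\nabla\theta\al,\nabla\theta\al)$ and yields $\tfrac{1}{2}\tfrac{d}{dt}\|\theta\al\|_{\Ltwo}^2 \leq |(V_{KS}[\rho]\psi\al - V_{KS}^h[\rho^h]\psi\al^h,\theta\al)| + \|\partial_t\eta\al\|_{\Ltwo}\|\theta\al\|_{\Ltwo}$. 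I would split the potential difference as $V_{KS}[\rho]\psi\al - V_{KS}^h[\rho^h]\psi\al^h = (V_{KS}[\rho]\psi\al - V_{KS}[\rho^h]\psi\al^h) + (V_H[\rho^h]-V_H^h[\rho^h])\psi\al^h + (V_N-V_N^h)\psi\al^h$: the first bracket is controlled by the Lipschitz-type bounds of Eqs.~\ref{eq:HartreeInequality}--\ref{eq:VFieldInequality} together with the density estimate~\ref{eq:RhoInequality1}, while the last two pieces are the standard $L^2$ finite-element Poisson errors for Eqs.~\ref{eq:SemiDiscreteWeakSolutionHartree}--\ref{eq:SemiDiscreteWeakSolutionNuclear}, producing the semi-norms $|V_H[\rho^h]|_{p+1,\Omega_e}$ and $|V_N|_{p+1,\Omega_e}$.

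Summing the resulting inequalities over $\alpha$ and using $\|\psi_\beta-\psi_\beta^h\|_{\Ltwo}\leq\|\eta_\beta\|_{\Ltwo}+\|\theta_\beta\|_{\Ltwo}$ produces a coupled scalar differential inequality of the form $\tfrac{d}{dt}\sum_\alpha\|\theta\al\|_{\Ltwo}\leq C_2\sum_\alpha\|\theta\al\|_{\Ltwo} + \mathcal{R}(t)$, where $\mathcal{R}(t)$ lumps together the projection errors of $\psi\al$, $\partial_t\psi\al$, $V_H[\rho^h]$ and $V_N$ evaluated at the worst times $s_{1,\alpha}$, $s_{2,\alpha}$ and $s_3$. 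Applying Gronwall's inequality and then replacing these projection errors by the Bramble--Hilbert semi-norm bounds reproduces Eq.~\ref{eq:SemiDiscreteErrorL2}. For Eq.~\ref{eq:SemiDiscreteErrorH1} I combine the direct interpolation estimate $\|\eta\al\|_{\Hone}\leq C h_e^{p}|\psi\al|_{p+1,\Omega_e}$ with the inverse inequality $\|\theta\al\|_{\Hone}\leq C h_{min}^{-1}\|\theta\al\|_{\Ltwo}$, available because $\theta\al$ lives in the finite-element space; this inverse estimate is also what produces the extra $h_{min}^{-1}$ in front of the initial-data term in~\ref{eq:SemiDiscreteErrorH1}.

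The main obstacle is the nonlinear and nonlocal coupling of all $N_e$ orbitals through $\rho^h$ inside $V_{KS}^h[\rho^h]$: this prevents an orbital-by-orbital estimate and forces one to close the Gronwall loop at the level of the aggregate $\sum_\alpha\|\theta\al\|_{\Ltwo}$, with the cross-coupling between $\alpha$ and $\beta$ absorbed via Eq.~\ref{eq:RhoInequality1}. A more delicate bookkeeping point is recovering the $|\psi\al|_{p+3,\Omega_e}$ contribution in $\mathcal{R}(t)$: it comes from bounding the projection error of $\partial_t\psi\al = -iH_{KS}\psi\al$, where the kinetic piece $\nabla^2\psi\al$ effectively raises the Sobolev index by two, while assumptions~\ref{assm:VNAllElectron}--\ref{assm:VField} keep the $V_{KS}\psi\al$ contribution at index $p+1$. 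Tracking these indices consistently, together with the $e^{C_2 t}$ growth produced by Gronwall and the initial-data mismatch $\sum_\alpha\|\psi\al-\psi\al^h\|_{\Ltwo}(0)$, yields the precise forms of both~\ref{eq:SemiDiscreteErrorL2} and~\ref{eq:SemiDiscreteErrorH1}.
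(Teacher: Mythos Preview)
Your proposal is correct and follows essentially the same approach as the paper: the Ritz-projection splitting $\psi\al-\psi\al^h=\eta\al+\theta\al$, the choice $v^h=\theta\al$ with extraction of the imaginary part to kill the kinetic quadratic form, the decomposition of $V_{KS}[\rho]\psi\al-V_{KS}^h[\rho^h]\psi\al^h$ into a Lipschitz part and the Poisson discretization errors, the summed Gronwall argument, the use of $\partial_t\psi\al=-iH_{KS}\psi\al$ to generate the $|\psi\al|_{p+3,\Omega_e}$ term, and the inverse estimate on $\theta\al$ for the $H^1$ bound all match the paper's proof. The only cosmetic difference is that the paper phrases the projection-error bounds via Ce\'a's lemma rather than Bramble--Hilbert directly.
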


\begin{proof}
Taking $v=v^h \in X_0^{h,p} \times [0,T]$ in Eq. ~\ref{eq:WeakSolution} (continuous solution) and subtracting it from Eq. ~\ref{eq:SemiDiscreteWeakSolution} (semi-discrete solution), we get
\begin{equation} \label{eq:DiffWeakSolution1}
            i\left(\frac{\partial\left(\psi\al-\psi\al^h\right)}{\partial t}, v^h\right) = 
            \frac{1}{2}\left(\nabla\left(\psi\al-\psi\al^h\right), \nabla v^h\right) + \left(V_{KS}[\rho]\psi\al-V_{KS}^h[\rho^h]\psi\al^h,v^h\right),\, \forall v^h \in X_0^{h,p}\times [0,T] \,.
\end{equation}
We rewrite $\psi\al - \psi\al^h = (\psi\al - \mathcal{P}_h\psi\al) + (\mathcal{P}_h\psi\al - \psi\al^h)$ and derive bounds on each of the terms. For simpler notation, we use $u\al = \psi\al - \mathcal{P}_h\psi\al$ and $w\al= (\mathcal{P}_h\psi\al - \psi\al^h)$. 
Thus, using $\psi\al-\psi\al^h = u\al + w\al$, we rewrite Eq. ~\ref{eq:DiffWeakSolution1} as
\begin{equation} \label{eq:DiffWeakSolution2}
        i\left(\frac{\partial w\al}{\partial t}, v^h\right) = -i\left(\frac{\partial u\al}{\partial t}, v^h\right) + \frac{1}{2}\left(\nabla u\al, \nabla v^h\right) + \frac{1}{2}\left(\nabla w\al, \nabla v^h\right) + \left(V_{KS}[\rho]\psi\al-V_{KS}^h[\rho^h]\psi\al^h,v^h\right)\,.
\end{equation}
Taking $v^h=w\al$, we have
\begin{equation} \label{eq:DiffWeakSolution3}
        i\left(\frac{\partial w\al}{\partial t}, w\al\right) = -i\left(\frac{\partial u\al}{\partial t}, w\al\right) + \frac{1}{2}\left(\nabla u\al, \nabla w\al\right) + \frac{1}{2}\left(\nabla w\al, \nabla w\al\right) + \left(V_{KS}[\rho]\psi\al-V_{KS}^h[\rho^h]\psi\al^h,w\al\right)\,.
\end{equation}
Noting that
    \begin{equation}
       \frac{1}{2}\frac{d}{dt}\norm{w\al}_{\Ltwo}^2 =\text{Re}\left\{\left(\frac{\partial}{\partial t}w\al,w\al\right)\right\}\,,
    \end{equation}
and comparing the imaginary parts of Eq. ~\ref{eq:DiffWeakSolution3}, we have
    \begin{equation} \label{eq:DiffWeakSolution4}
        \begin{split}
            \frac{1}{2}\frac{d}{dt}\norm{w\al}_{\Ltwo}^2 &= -\text{Re}\left\{\left(\frac{\partial u\al}{\partial t}, w\al\right)\right\} + 
            \frac{1}{2} \text{Im}\left\{\left(\nabla u\al,\nabla w\al\right)\right\} + \frac{1}{2}\text{Im}\left\{\left(\nabla w\al,\nabla w\al\right)\right\} \\
            & \quad + \text{Im}\left\{\left(V_{KS}[\rho]\psi\al-V_{KS}^h[\rho^h]\psi\al^h,w\al\right)\right\}
        \end{split}
    \end{equation}
In the above equation, we note that $(\nabla u\al,\nabla w\al)=0$, as a consequence of Eq. ~\ref{eq:Ritz}. Furthermore, $(\nabla w\al,\nabla w\al)$ is real. Thus, Eq. ~\ref{eq:DiffWeakSolution4} simplifies to,
\begin{equation} \label{eq:DiffWeakSolution5}
            \begin{split}
                \frac{1}{2}\frac{d}{dt}\norm{w\al}_{\Ltwo}^2 & = -\text{Re}\left\{\left(\frac{\partial u\al}{\partial t}, w\al\right)\right\} +\text{Im}\left\{\left(V_{KS}[\rho]\psi\al-V_{KS}^h[\rho^h]\psi\al^h,w\al\right)\right\} \\
                & \leq \modulus{\left(\frac{\partial u\al}{\partial t}, w\al\right)} + \modulus{\left(V_{KS}[\rho]\psi\al-V_{KS}^h[\rho^h]\psi\al^h,w\al\right)}\,.
            \end{split}
\end{equation}
\\%
We now decompose $V_{KS}$ into its components to rewrite the second term on the right of the above equation as 
\begin{equation} \label{eq:VKSPsiDiff}
        \begin{split}
            \left(V_{KS}[\rho]\psi\al-V_{KS}^h[\rho^h]\psi\al^h,w\al)\right) &=
            \left(V_{XC}[\rho]\psi\al-V_{XC}[\rho^h]\psi\al^h,w\al\right) + \left(V_{H}[\rho]\psi\al-V_{H}[\rho^h]\psi\al^h,w\al\right) \\
            & \quad + \left(\left(V_{H}[\rho^h]-V_{H}^h[\rho^h]\right)\psi\al^h,w\al\right) + \left(V_{N}\psi\al-V_{N}\psi\al^h,w\al\right) \\
			& \quad + \left(V_{N}\psi\al^h-V_{N}^h\psi\al^h,w\al\right) + \left(V_{field}\psi\al-V_{field}\psi\al^h,w\al\right).
        \end{split}
    \end{equation}
We note that the term $\left(V_{N}\psi\al^h-V_{N}^h\psi\al^h,w\al\right)$, on the right side of the above equation, is relevant only in the all-electron case (i.e., zero for the pseudopotential case as $V_N=V_N^h$).
Combining the results from Eqs. ~\ref{eq:HartreeInequality}, ~\ref{eq:VXCInequality}, ~\ref{eq:VNAllElectronInequality},~\ref{eq:VPSPInequality}, and ~\ref{eq:VFieldInequality}, with $v=w\al$, and using the fact that $\psi^h \in L^{\infty}(\Omega)$, it is straightforward to show that
\begin{equation} \label{eq:VKSInequality}
    \begin{split}
        \modulus{\left(V_{KS}[\rho]\psi\al-V_{KS}^h[\rho^h]\psi\al^h,w\al)\right)} &\leq 
        C_0\norm{\psi\al-\psi\al^h}_{\Ltwo}\norm{w\al}_{\Ltwo} + C_1 \left(\norm{\psi\al-\psi\al^h}_{\Ltwo} + \norm{\rho-\rho^h}_{\Lone}\right)\norm{w\al}_{\Ltwo}\\ 
        & \quad + C_2\norm{V_H[\rho^h]-V_H^h[\rho^h]}_{\Ltwo}\norm{w\al}_{\Ltwo} + C_3\norm{V_N-V_N^h}_{\Ltwo}\norm{w\al}_{\Ltwo}  \,. 
    \end{split}
\end{equation}
Using the above result in Eq. ~\ref{eq:DiffWeakSolution5}, we obtain
\begin{equation} \label{eq:TimeDerivativeW1}
    \begin{split}
        \frac{d}{dt}\norm{w\al}_{\Ltwo} &\leq \norm{\frac{\partial u\al}{\partial t}}_{\Ltwo} + C_0\norm{\psi\al-\psi\al^h}_{\Ltwo} + C_1 \left(\norm{\psi\al-\psi\al^h}_{\Ltwo} + \norm{\rho-\rho^h}_{\Lone}\right) \\
        & \quad + C_2\norm{V_H[\rho^h]-V_H^h[\rho^h]}_{\Ltwo} + C_3\norm{V_N-V_N^h}_{\Ltwo} \\
        &\leq \norm{\frac{\partial u\al}{\partial t}}_{\Ltwo} + C_0\norm{\psi\al-\psi\al^h}_{\Ltwo} \\
        & \quad + C_2\norm{V_H[\rho^h]-V_H^h[\rho^h]}_{\Ltwo} + C_3\norm{V_N-V_N^h}_{\Ltwo} + C_4 \sum_{\beta=1}^{N_e} \norm{\psi\be-\psi\be^h}_{\Ltwo}\,,
    \end{split}
\end{equation}
where we have used Eq. ~\ref{eq:RhoInequality1} in the second line to simplify the term involving $\norm{\rho-\rho^h}_{\Lone}$. Summing the above equation over all index $\alpha$, we have
\begin{equation} \label{eq:TimeDerivativeWSum1}
    \begin{split}
        \frac{d}{dt}\sum_{\alpha=1}^{N_e}\norm{w\al}_{\Ltwo} &\leq \sum_{\alpha=1}^{N_e}\left(\norm{\frac{\partial u\al}{\partial t}}_{\Ltwo} + C_5\norm{\psi\al-\psi\al^h}_{\Ltwo}\right)
        + C_6\norm{V_H[\rho^h]-V_H^h[\rho^h]}_{\Ltwo} + C_7\norm{V_N-V_N^h}_{\Ltwo} \\
        &\leq \sum_{\alpha=1}^{N_e}\left(\norm{\frac{\partial u\al}{\partial t}}_{\Ltwo} + C_5\norm{u\al}_{\Ltwo} + C_5\norm{w\al}_{\Ltwo}\right) \\
        &\quad + C_6\norm{V_H[\rho^h]-V_H^h[\rho^h]}_{\Ltwo} + C_7\norm{V_N-V_N^h}_{\Ltwo}\,, 
    \end{split}
\end{equation}
where in the second line we have split $\psi\al-\psi\al^h$ into $u\al$ and $w\al$. Now, integrating the above equation, gives 
\begin{equation} \label{eq:WSum0}
    \begin{split}
        \sum_{\alpha=1}^{N_e}\norm{w\al}_{\Ltwo}(t) &\leq \sum_{\alpha=1}^{N_e}\norm{w\al}_{\Ltwo}(0) + C_5\int_0^t \sum_{\alpha=1}^{N_e}\norm{w\al}(s)\,ds 
        + C_5 \int_0^t \sum_{\alpha=1}^{N_e}\left(\norm{\frac{\partial u\al}{\partial t}}_{\Ltwo}(s) + \norm{u\al}_{\Ltwo}(s)\right) \,ds \\ 
        & \quad + C_8\int_0^t \left(\norm{V_H[\rho^h]-V_H^h[\rho^h]}_{\Ltwo}(s) + \norm{V_N-V_N^h}_{\Ltwo}\right)\,ds \,.
    \end{split}
\end{equation}
Noting that $u\al=\psi\al-\mathcal{P}_h\psi\al$, $\frac{\partial u\al}{\partial t}=\frac{\partial \psi\al}{\partial t}-\mathcal{P}_h \frac{\partial \psi\al}{\partial t}$, and using the definitions of $s_{1,\alpha}$, $s_{2,\alpha}$, and $s_3$ (cf. Eq. ~\ref{eq:SArg}), we can simplify the above equation as
\begin{equation} \label{eq:WSum1}
    \begin{split}
        \sum_{\alpha=1}^{N_e}\norm{w\al}_{\Ltwo}(t) &\leq \sum_{\alpha=1}^{N_e}\norm{w\al}_{\Ltwo}(0) + C_5\int_0^t \sum_{\alpha=1}^{N_e}\norm{w\al}(s)\,ds 
        + C_5 t \sum_{\alpha=1}^{N_e}\left(\norm{\frac{\partial u\al}{\partial t}}_{\Ltwo}(s_{2,\alpha}) + \norm{u\al}_{\Ltwo}(s_{1,\alpha})\right) \\ 
        & \quad + C_8 t \left(\norm{V_H[\rho^h]-V_H^h[\rho^h]}_{\Ltwo}(s_3) + \norm{V_N-V_N^h}_{\Ltwo}\right) \,.
    \end{split}
\end{equation}
Invoking the Gr\"onwall's inequality on the above equation yields
\begin{equation} \label{eq:WSum2}
    \begin{split}
        \sum_{\alpha=1}^{N_e}\norm{w\al}_{\Ltwo} (t) &\leq e^{C_5 t}\left[\sum_{\alpha=1}^{N_e}\norm{w\al}_{\Ltwo}(0) + C_5 t \sum_{\alpha=1}^{N_e}\left(\norm{\frac{\partial u\al}{\partial t}}_{\Ltwo}(s_{2,\alpha}) + \norm{u\al}_{\Ltwo}(s_{1,\alpha})\right)\right] \\ 
        & \quad + C_8 e^{C_5 t} t \left(\norm{V_H[\rho^h]-V_H^h[\rho^h]}_{\Ltwo}(s_3) + \norm{V_N-V_N^h}_{\Ltwo}\right) \,.
    \end{split}
\end{equation}
Noting that $\norm{w\al}_{\Ltwo}(0)\leq\norm{\psi\al-\psi\al^h}_{\Ltwo}(0)$, we rewrite the above equation as 
\begin{equation} \label{eq:WSum3}
    \begin{split}
        \sum_{\alpha=1}^{N_e}\norm{w\al}_{\Ltwo} (t) &\leq C_5 e^{C_5 t} t \sum_{\alpha=1}^{N_e}\left(\norm{\frac{\partial u\al}{\partial t}}_{\Ltwo}(s_{2,\alpha}) + \norm{u\al}_{\Ltwo}(s_{1,\alpha})\right) \\ 
        & \quad + C_8 e^{C_5 t} t \left(\norm{V_H[\rho^h]-V_H^h[\rho^h]}_{\Ltwo}(s_3) + \norm{V_N-V_N^h}_{\Ltwo}\right) + e^{C_5t}\sum_{\alpha=1}^{N_e}\norm{\psi\al-\psi\al^h}_{\Ltwo} (0)\,.
    \end{split}
\end{equation}
Bounds on the terms involving $\norm{u\al}_{\Ltwo}$, $\norm{\frac{\partial u\al}{\partial t}}_{\Ltwo}$, $\norm{V_H[\rho^h]-V_H^h[\rho^h]}_{\Ltwo}$, and $\norm{V_N-V_N^h}_{\Ltwo}$, can now be obtained using the Ce\'a's lemma ~\cite{Ciarlet2002}--- a standard finite-element error estimates. The Ce\'a's lemma, in simple terms, is stated as follows. Let $\phi \in H^1(\Omega_T)$ and $\phi^h \in V^h \subseteq X^{h,p}$. If $y=\phi-\phi^h$ satisfies the following Galerkin orthogonality condition,
\begin{equation} \label{eq:GalerkinOrtho}
(\nabla y,\nabla v^h)(t)=0,\quad \forall v^h \in V^h ~\text{and}~ \forall t \in [0,T]\,,
\end{equation}
then 
\begin{subequations}\label{eq:Cea}
\begin{equation} \label{eq:CeaL2}
\norm{y}_{\Ltwo} \leq C \sum_e h_e^{p+1}\modulus{\phi}_{p+1,\Omega_e}, \quad \text{and}
\end{equation}
\begin{equation} \label{eq:CeaH1}
\norm{y}_{\Hone} \leq C \sum_e h_e^{p}\modulus{\phi}_{p+1,\Omega_e}\,.
\end{equation}
\end{subequations}
By definition of Ritz projection (Eq. ~\ref{eq:Ritz}), $y=u\al=\psi\al-\mathcal{P}_h\psi\al$ satisfies the Eq. ~\ref{eq:GalerkinOrtho}. Further, taking the time-derivative of Eq. ~\ref{eq:Ritz}, it is easy to verify that $y=\frac{\partial u\al}{\partial t}=\frac{\partial \psi\al}{\partial t} - \mathcal{P}_h\frac{\partial \psi\al}{\partial t}$ also satisfies the Eq. ~\ref{eq:GalerkinOrtho}.
Thus, applying the Ce\'a's lemma (Eq. ~\ref{eq:CeaL2}) to $u\al$ and $\frac{\partial u\al}{\partial t}$ yields
\begin{equation} \label{eq:UiInequality}
   \norm{u\al}_{\Ltwo} \leq C\sum_e h_e^{p+1}\modulus{\psi\al}_{p+1,\Omega_e}, \quad \text{and}
\end{equation}
\begin{equation} \label{eq:RitzTimeDerivativeBound}
    \norm{\frac{\partial u\al}{\partial t}}_{\Ltwo}\leq C\sum_e h_e^{p+1} \modulus{{\frac{\partial \psi\al}{\partial t}}}_{p+1,\Omega_e}\,.
\end{equation}
We further simplify the above inequality, by using Eq. ~\ref{eq:TDKS}
\begin{equation} \label{eq:RitzTimeDerivativeBound2}
    \begin{split}
        \norm{\frac{\partial u\al}{\partial t}}_{\Ltwo} & \leq C\sum_e h_e^{p+1} \modulus{{\frac{\partial \psi\al}{\partial t}}}_{p+1,\Omega_e} = C \sum_e h_e^{p+1} \modulus{-\frac{1}{2}\nabla^2\psi\al + V_{KS}[\rho]\psi\al}_{p+1,\Omega_e}\\
        & \leq C \sum_e h_e^{p+1} \left(\modulus{\psi\al}_{p+3,\Omega_e}+\modulus{(V_H + V_N + V_{XC} + V_{field})\psi\al}_{p+1,\Omega_e}\right)\\
        & \leq C \sum_e h_e^{p+1} \left(\modulus{\psi\al}_{p+3,\Omega_e}+\modulus{\psi\al}_{p+1,\Omega_e}\right)\,, 
    \end{split}
\end{equation}
 which follows from the definition of the $|.|_{p+3}$ semi-norm and the boundedness assumptions on $V_N$, $V_{XC}$, and $V_{field}$ (assumptions ~\ref{assm:VNAllElectron}--\ref{assm:VField}). 
Lastly, it is straightforward to observe that both $y=V_H[\rho^h]-V_H^h[\rho^h]$ and $y=V_N-V_N^h$ satisfy Eq. ~\ref{eq:GalerkinOrtho} (take the difference of Eqs. ~\ref{eq:SemiDiscreteWeakSolutionHartree} and ~\ref{eq:WeakSolutionHartree}; and Eqs. ~\ref{eq:SemiDiscreteWeakSolutionNuclear} and ~\ref{eq:WeakSolutionNuclear}, respectively). Thus, once again, applying the Ce\'a's lemma (Eq. ~\ref{eq:CeaL2}), we get 
\begin{subequations} \label{eq:PoissonDiscreteInequality}
\begin{equation} \label{eq:HartreeDiscreteInequality}
    \norm{V_H[\rho^h]-V_H^h[\rho^h]}_{\Ltwo} \leq C \sum_e h_e^{p+1} \modulus{V_H[\rho^h]}_{p+1,\Omega_e}\,.
\end{equation}
\begin{equation} \label{eq:VNDiscreteInequality}
    \norm{V_N-V_N^h}_{\Ltwo} \leq C \sum_e h_e^{p+1} \modulus{V_N}_{p+1,\Omega_e}\,.
\end{equation}
\end{subequations}
Using Eqs. ~\ref{eq:UiInequality}, ~\ref{eq:RitzTimeDerivativeBound2}, and ~\ref{eq:PoissonDiscreteInequality} in Eq. ~\ref{eq:WSum3}, we have
\begin{equation} \label{eq:WSum4}
    \begin{split}
        \sum_{\alpha=1}^{N_e}\norm{w\al}_{\Ltwo} (t) &\leq C_9 e^{C_5 t} t \sum_e h_e^{p+1}\sum_{\alpha=1}^{N_e}\left(\modulus{\psi\al}_{p+1,\Omega_e}(s_{1,\alpha}) +  \modulus{\psi\al}_{p+1,\Omega_e}(s_{2,\alpha})  +  \modulus{\psi\al}_{p+3,\Omega_e}(s_{2,\alpha})\right) \\
        & \quad + C_9 e^{C_5 t} t \sum_e h_e^{p+1} \left(\modulus{V_H[\rho^h]}_{p+1,\Omega_e}(s_3) + \modulus{V_N}_{p+1,\Omega_e}\right) + e^{C_5t} \sum_{\alpha=1}^{N_e}\norm{\psi\al-\psi\al^h}_{\Ltwo}(0) \,. 
    \end{split}
\end{equation}
Finally, expressing $\psi\al-\psi\al^h=w\al+u\al$ and using the result of Eq. ~\ref{eq:UiInequality} in the above equation, we obtain
\begin{equation} \label{eq:PsiSumiLtwo}
    \begin{split}
        \sum_{\alpha=1}^{N_e}\norm{\psi\al-\psi\al^h}_{\Ltwo} (t) &\leq C \sum_e h_e^{p+1}\sum_{\alpha=1}^{N_e}\modulus{\psi\al}_{p+1,\Omega_e}(s_{1,\alpha}) \\
        & \quad + C_9 e^{C_5 t} t \sum_e h_e^{p+1}\sum_{\alpha=1}^{N_e}\left(\modulus{\psi\al}_{p+1,\Omega_e}(s_{1,\alpha}) +  \modulus{\psi\al}_{p+1,\Omega_e}(s_{2,\alpha})  +  \modulus{\psi\al}_{p+3,\Omega_e}(s_{2,\alpha})\right) \\
        & \quad + C_9 e^{C_5 t} t \sum_e h_e^{p+1} \left(\modulus{V_H[\rho^h]}_{p+1,\Omega_e}(s_3) + \modulus{V_N}_{p+1,\Omega_e}\right) + e^{C_5t} \sum_{\alpha=1}^{N_e}\norm{\psi\al-\psi\al^h}_{\Ltwo}(0)\\
        &\leq C_{10} e^{C_5 t} (t+1) \sum_e h_e^{p+1}\sum_{\alpha=1}^{N_e}\left(\modulus{\psi\al}_{p+1,\Omega_e}(s_{1,\alpha}) +  \modulus{\psi\al}_{p+1,\Omega_e}(s_{2,\alpha})  +  \modulus{\psi\al}_{p+3,\Omega_e}(s_{2,\alpha})\right) \\
        & \quad + C_{10} e^{C_5 t} t \sum_e h_e^{p+1} \left(\modulus{V_H[\rho^h]}_{p+1,\Omega_e}(s_3) + \modulus{V_N}_{p+1,\Omega_e}\right) + e^{C_5 t} \sum_{\alpha=1}^{N_e}\norm{\psi\al-\psi\al^h}_{\Ltwo}(0) \,. 
    \end{split}
\end{equation}
This concludes the proof of Eq. ~\ref{eq:SemiDiscreteErrorL2}.

In order to derive estimates for $\sum_{\alpha=1}^{N_e}\norm{\psi\al-\psi\al^h}_{\Hone}(t)$, we use the inverse estimate ~\cite{BrennerScott2007} for $w\al = (\mathcal{P}_h\psi\al-\psi\al^h) \in X_0^{h,p}$ to obtain  
\begin{equation} \label{eq:WkH1}
    \norm{w\al}_{H^1(\Omega)}(t) \leq C h_{min}^{-1}\norm{w\al}_{L^2(\Omega)}(t)\,. 
\end{equation}
%
%
Additionally, applying the Ce\'a's lemma (Eq. ~\ref{eq:CeaH1}) on $u\al=(\psi\al-\mathcal{P}_h\psi\al)$, we have
\begin{equation}\label{eq:UkH1}
    \norm{u\al}_{\Hone}(t) \leq C \sum_e h_e^p\modulus{\psi\al}_{p+1,\Omega_e} (t)\,.
\end{equation}
Combining Eqs. ~\ref{eq:WkH1} and ~\ref{eq:UkH1}, we get
\begin{equation} \label{eq:PsiSumHone1}
    \begin{split}
        \sum_{\alpha}^{N_e}\norm{\psi\al-\psi\al^h}_{\Hone} (t) & \leq \sum_{\alpha=1}^{N_e}\left(\norm{u\al}_{\Hone}(t)+\norm{w\al}_{\Hone}(t)\right) \\
        & \leq C_{11} \sum_e h_e^p\sum_{\alpha=1}^{N_e} \modulus{\psi\al}_{p+1,\Omega_e}(t) + C_{12} h_{min}^{-1} \sum_{\alpha=1}^{N_e}\norm{w\al}_{\Ltwo}(t)\,.
    \end{split}
\end{equation}
Finally, using the inequality obtained in Eq. ~\ref{eq:WSum4} in the above equation and using the fact that $h_e/h_{min} \leq C$ for all the elements in $\mathcal{T}^{h,p}$, yields
\begin{equation}
    \begin{split}\label{eq:PsiSumHone2}
        \sum_{\alpha}^{N_e}\norm{\psi\al-\psi\al^h}_{\Hone} (t) & \leq 
        C_{11} \sum_e h_e^p\sum_{\alpha=1}^{N_e} \modulus{\psi\al}_{p+1,\Omega_e}(s_{1,\alpha}) \\
        & \quad + C_{13} e^{C_5 t} t \sum_e h_e^{p}\sum_{\alpha=1}^{N_e}\left(\modulus{\psi\al}_{p+1,\Omega_e}(s_{1,\alpha}) + \modulus{\psi\al}_{p+1,\Omega_e}(s_{2,\alpha})  +  \modulus{\psi\al}_{p+3,\Omega_e}(s_{2,\alpha})\right) \\
        & \quad + C_{13} e^{C_5 t} t \sum_e h_e^{p} \left(\modulus{V_H[\rho^h]}_{p+1,\Omega_e}(s_3) + \modulus{V_N}_{p+1,\Omega_e}\right) + C_{12} e^{C_5t} h_{min}^{-1}\sum_{\alpha=1}^{N_e}\norm{\psi\al-\psi\al^h}_{\Ltwo}(0)\\
        & \leq C_{14} e^{C_5 t} (t+1) \sum_e h_e^{p}\sum_{\alpha=1}^{N_e}\left(\modulus{\psi\al}_{p+1,\Omega_e}(s_{1,\alpha}) + \modulus{\psi\al}_{p+1,\Omega_e}(s_{2,\alpha})  +  \modulus{\psi\al}_{p+3,\Omega_e}(s_{2,\alpha})\right) \\
        & \quad + C_{14} e^{C_5 t} t \sum_e h_e^{p} \left(\modulus{V_H[\rho^h]}_{p+1,\Omega_e}(s_3) + \modulus{V_N}_{p+1,\Omega_e}\right) + C_{14} e^{C_5t} h_{min}^{-1}\sum_{\alpha=1}^{N_e}\norm{\psi\al-\psi\al^h}_{\Ltwo}(0)
        \,.
    \end{split}
\end{equation}
%
This concludes the proof for Eq. ~\ref{eq:SemiDiscreteErrorH1}.
\end{proof}

\begin{corollary} \label{corr:SemiDiscreteErrorGS}
    If the initial orbitals $\psi\al(\br,0)$ are obtained from a ground-state DFT calculation, wherein ~\cite{Motamarri2013} 
    \begin{equation} \label{eq:GSInequality}
        \norm{\psi\al-\psi\al^h}_{\Ltwo}(0) \leq C\sum_e h_e^{p+1}\left(\modulus{\psi\al}_{p+1,\Omega_e} + \modulus{V_H[\rho^h]}_{p+1,\Omega_e} + \modulus{V_N}_{p+1,\Omega_e}\right)\,,
    \end{equation}
    the results of Proposition ~\ref{prop:SemiDiscrete} can be simplified, $\forall t \in [0,T]$, to 
    \begin{subequations} \label{eq:SemiDiscreteErrorGSAppendix}
        \begin{equation} \label{eq:SemiDiscreteErrorL2GSAppendix}
            \begin{split}
                \sum_{\alpha=1}^{N_e}\norm{\psi\al-\psi\al^h}_{\Ltwo} (t) &\leq C'_{1} e^{C_2 t} (t+1) \sum_e h_e^{p+1}\sum_{\alpha=1}^{N_e}\left(\modulus{\psi\al}_{p+1,\Omega_e}(s_{1,\alpha}) +  \modulus{\psi\al}_{p+1,\Omega_e}(s_{2,\alpha})  +  \modulus{\psi\al}_{p+3,\Omega_e}(s_{2,\alpha})\right) \\
                & \quad + C'_{1} e^{C_2 t} (t+1) \sum_e h_e^{p+1} \left(\modulus{V_H[\rho^h]}_{p+1,\Omega_e}(s_3) + \modulus{V_N}_{p+1,\Omega_e}\right) \,. 
            \end{split}
    \end{equation}
        \begin{equation} \label{eq:SemiDiscreteErrorH1GSAppendix}
            \begin{split}
                \sum_{\alpha}^{N_e}\norm{\psi\al-\psi\al^h}_{\Hone} (t) & \leq  C'_{3} e^{C_2 t} (t+1) \sum_e h_e^{p}\sum_{\alpha=1}^{N_e}\left(\modulus{\psi\al}_{p+1,\Omega_e}(s_{1,\alpha}) + \modulus{\psi\al}_{p+1,\Omega_e}(s_{2,\alpha})  +  \modulus{\psi\al}_{p+3,\Omega_e}(s_{2,\alpha})\right) \\
                & \quad + C'_{3} e^{C_2 t} (t+1) \sum_e h_e^{p} \left(\modulus{V_H[\rho^h]}_{p+1,\Omega_e}(s_3) + \modulus{V_N}_{p+1,\Omega_e}\right)\\
            \end{split}
    \end{equation}
\end{subequations}
\end{corollary}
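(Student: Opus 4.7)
\textbf{Proof proposal for Corollary \ref{corr:SemiDiscreteErrorGS}.} The plan is to start from the estimates already proved in Proposition \ref{prop:SemiDiscrete} (Eqs.~\ref{eq:SemiDiscreteErrorL2} and \ref{eq:SemiDiscreteErrorH1}) and simply use the ground-state initial-error hypothesis (Eq.~\ref{eq:GSInequality}) to dispose of the remaining boundary term $\sum_\alpha\norm{\psi\al-\psi\al^h}_{\Ltwo}(0)$, showing that it is of the same order as the other terms already present on the right-hand side. No new continuous-to-discrete analysis is needed; the work is purely algebraic bookkeeping. Since the bulk of Proposition \ref{prop:SemiDiscrete} has already done the heavy lifting (Gr\"onwall, Ritz projection, C\'ea's lemma, inverse estimate), this corollary amounts to recognizing that the hypothesis (Eq.~\ref{eq:GSInequality}) provides an $O(h_e^{p+1})$ bound of precisely the same structure as the spatial-discretization terms.

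For the $L^2$ estimate (Eq.~\ref{eq:SemiDiscreteErrorL2GSAppendix}), I would substitute Eq.~\ref{eq:GSInequality} directly into the $e^{C_2 t}\sum_\alpha\norm{\psi\al-\psi\al^h}_{\Ltwo}(0)$ term appearing in Eq.~\ref{eq:SemiDiscreteErrorL2}. This produces $C\, e^{C_2 t}\sum_e h_e^{p+1}\sum_\alpha(\modulus{\psi\al}_{p+1,\Omega_e}(0) + \modulus{V_H[\rho^h]}_{p+1,\Omega_e}(0) + \modulus{V_N}_{p+1,\Omega_e})$, whose prefactor structure already matches the pre-existing terms. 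Because $0\in[0,t]$, each time-zero semi-norm is trivially dominated by the corresponding maximum over $[0,t]$, which in turn can be bounded by the $s_{1,\alpha}$, $s_{2,\alpha}$, $s_3$ values used in Proposition \ref{prop:SemiDiscrete} (enlarging $C$ to $C'_1$ if necessary to account for the difference between an argmax of one quantity and another at the same $s$). Collecting the $t$ and the absorbed initial term gives the $(t+1)$ prefactor in the conclusion, finishing the $L^2$ part.

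For the $H^1$ estimate (Eq.~\ref{eq:SemiDiscreteErrorH1GSAppendix}), the only subtlety is the factor $h_{min}^{-1}$ that multiplied the initial-state term in Eq.~\ref{eq:SemiDiscreteErrorH1}, coming from the inverse estimate in Eq.~\ref{eq:WkH1}. Substituting Eq.~\ref{eq:GSInequality} there produces a contribution of the form $C\, e^{C_2 t} h_{min}^{-1}\sum_e h_e^{p+1}(\cdots)$. The key observation is to write $h_{min}^{-1} h_e^{p+1} = h_e^{p}\,(h_e/h_{min})$ and invoke the quasi-uniformity/shape-regularity of the triangulation $\mathcal{T}^{h,p}$ (already used once at the end of the proof of Proposition \ref{prop:SemiDiscrete}), which guarantees $h_e/h_{min}\leq C$ uniformly. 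This reduces the extra contribution to $C\, e^{C_2 t}\sum_e h_e^{p}(\cdots)$, which matches the structure of the existing $h_e^{p}$ terms and can again be merged into them at the price of a larger constant $C'_3$ and of converting the $t$ prefactor into $(t+1)$. I expect the main obstacle (really the only one) to be the bookkeeping of the various argmax times $s_{1,\alpha},s_{2,\alpha},s_3$ when reconciling the time-zero semi-norms produced by Eq.~\ref{eq:GSInequality} with the argmax-evaluated semi-norms in the proposition; this is resolved simply by noting that all such quantities are uniformly controlled by their maxima over $[0,t]$ and by enlarging the generic constants accordingly.
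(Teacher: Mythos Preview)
Your proposal is correct and follows the same route the paper takes: the paper offers no separate proof for the corollary, treating it as an immediate consequence of substituting the ground-state bound (Eq.~\ref{eq:GSInequality}) into Eqs.~\ref{eq:SemiDiscreteErrorL2} and \ref{eq:SemiDiscreteErrorH1}, with the $h_{min}^{-1}h_e^{p+1}\to h_e^{p}$ conversion via $h_e/h_{min}\leq C$ already invoked at the end of the proof of Proposition~\ref{prop:SemiDiscrete}. Your identification of the argmax-time bookkeeping as the only residual wrinkle is apt; the paper likewise glosses over it by folding everything into generic constants.
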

The last equation concludes the proof of Eq. ~\ref{eq:SemiDiscreteErrorH1GS}.

\subsection{Derivation of time discretization error estimate} \label{sec:appendC}
Before proceeding to the proof for Eq. ~\ref{eq:PsiTimeDiscreteError}, we note that for an exponential operator of the form $e^{\bL(t)}$, the partial derivative with respect to $t$ is given by ~\cite{Blanes2009},
\begin{equation} \label{eq:DExpDt}
     \frac{\partial}{\partial t} e^{\bL(t)}=  \text{dexp}_{\bL(t)}\left(\dot{\bL}(t)\right)e^{\bL(t)},
\end{equation}
where $\text{dexp}_{\bX}(\bY) = \sum_0^{\infty}\frac{1}{(k+1)!}\text{ad}_{\bX}^k(\bY)$. The operator $\text{ad}_{\bX}^k(\bY)$ is defined recursively as 
\begin{equation} \label{eq:AdXY}
    \text{ad}_{\bX}^k(\bY)=\text{ad}_{\bX}\left(\text{ad}_{\bX}^{k-1}(\bY)\right)\,,
\end{equation}
with $\text{ad}_{\bX}^1(\bY)=\bX\bY - \bY\bX$, and $\text{ad}_{\bX}^0(\bY)=\bY$. 

We now present the proof for Eq. ~\ref{eq:PsiTimeDiscreteError} in the following Proposition. In the following analysis, we assume each time interval $[t_{n-1},t_{n}]$ to be of length $\Delta t$. Moreover, for simpler terminology, we term $e^{\bAtilde_n}$ (cf. Eq. ~\ref{eq:ApproxMagnus}) as the second-order Magnus propagator without explicitly spelling out the mid-point integration rule invoked in it.  
\begin{prop} \label{prop:TimeDiscrete}
    For a second-order Magnus propagator with a mid-point integration rule, we obtain the following bound for the time-discretization error in $\psi\al^h$ 
    \begin{equation} \label{eq:PsiTimeDiscreteErrorAppendix}
        \norm{\psi\al^h(t_n)-\psi\al^{h,n}}_{\Ltwo} \leq C (\Delta t)^2 t_n \max_{0\leq t \leq t_n}\norm{\psi\al^h(t)}_{\Hone}\,,
    \end{equation}
\end{prop}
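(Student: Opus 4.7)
My plan is to reduce the global time-discretization error to a sum of local one-step errors via telescoping, and to bound each local error by $O((\Delta t)^3\,\norm{\psi\al^h}_{\Hone})$ using the Magnus expansion together with a Peano-kernel analysis of the midpoint quadrature. Both the exact semi-discrete propagator $\bar{\bU}_k = e^{\bA_k}$ (where $\bA_k$ is the full Magnus operator of Eq.~\ref{eq:MagnusExpansion} restricted to $[t_{k-1},t_k]$) and the approximate propagator $\tilde{\bU}_k = e^{\bAtilde_k}$ are unitary on $\mathbb{C}^{n_h}$, since $\bHbar(t)$ is Hermitian and $\bAtilde_k$ is skew-Hermitian. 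Assuming the initial data match, the standard telescoping identity
\begin{equation*}
\bpsibar\al(t_n) - \bpsibar\al^n \;=\; \sum_{k=1}^{n}\Big(\prod_{j=k+1}^{n}\tilde{\bU}_j\Big)\bigl(\bar{\bU}_k - \tilde{\bU}_k\bigr)\bpsibar\al(t_{k-1})
\end{equation*}
combined with unitarity of the trailing product gives $\norm{\bpsibar\al(t_n) - \bpsibar\al^n}_{2} \le \sum_{k=1}^{n}\norm{(\bar{\bU}_k - \tilde{\bU}_k)\bpsibar\al(t_{k-1})}_{2}$, which in turn controls $\norm{\psi\al^h(t_n)-\psi\al^{h,n}}_{\Ltwo}$ because $\bM^{-1/2}$ is bounded with mesh-independent constants, thanks to the GLL-diagonal mass matrix (Sec.~\ref{sec:SpectralFE}).

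The key local estimate is $\norm{(\bar{\bU}_k - \tilde{\bU}_k)\bpsibar\al(t_{k-1})}_{2} \le C(\Delta t)^3 \,\norm{\psi\al^h(t_{k-1})}_{\Hone}$. For this I would use the variation-of-parameters identity
\begin{equation*}
e^{\bA_k} - e^{\bAtilde_k} \;=\; \int_0^1 e^{(1-s)\bAtilde_k}\bigl(\bA_k - \bAtilde_k\bigr)\,e^{s\bA_k}\,ds,
\end{equation*}
apply both sides to $\bpsibar\al(t_{k-1})$, and use $L^2$-unitarity of $e^{(1-s)\bAtilde_k}$ to reduce to bounding $\norm{(\bA_k - \bAtilde_k)\bchi}_{2}$ with $\bchi = e^{s\bA_k}\bpsibar\al(t_{k-1})$. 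I would then split $\bA_k - \bAtilde_k$ into two contributions. The first is the midpoint quadrature error on the leading Magnus term, $-i\!\int_{t_{k-1}}^{t_k}\!\bHbar(\tau)\,d\tau + i\Delta t\,\bHbar(t_{k-1}+\tfrac{\Delta t}{2})$: odd moments vanish by symmetry, so a Peano-kernel argument together with assumption~\ref{assm:VKSTimeDerBound} (boundedness of $\ddot V_{KS}$) yields $C(\Delta t)^3$ in the $L^2\to L^2$ operator norm. The second is the Magnus tail, whose leading piece is $\tfrac{1}{2}\!\int_{t_{k-1}}^{t_k}\!\!\int_{t_{k-1}}^{\tau}[\bHbar(\sigma),\bHbar(\tau)]\,d\sigma\,d\tau$. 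The crucial structural observation is that the kinetic part $-\tfrac{1}{2}\del^2$ cancels inside the commutator, so
\begin{equation*}
\bigl[\bHbar(\sigma),\bHbar(\tau)\bigr] = -\tfrac{1}{2}\bigl[\del^2,\,V_{KS}(\tau)-V_{KS}(\sigma)\bigr] + \bigl[V_{KS}(\sigma),V_{KS}(\tau)\bigr],
\end{equation*}
and the first commutator expands as the first-order differential operator $-\del\bigl(V_{KS}(\tau)-V_{KS}(\sigma)\bigr)\cdot\del(\cdot) - \tfrac{1}{2}\del^2\bigl(V_{KS}(\tau)-V_{KS}(\sigma)\bigr)(\cdot)$. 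Acting on $\bchi$ it is bounded by $C|\sigma-\tau|\,\norm{\bchi}_{\Hone}$ via Taylor-in-time together with boundedness of $\dot V_{KS}$ and its spatial derivatives; integration over the triangle $\{t_{k-1}\le\sigma\le\tau\le t_k\}$ supplies the needed $(\Delta t)^3$ smallness, while the residual $[V_{KS}(\sigma),V_{KS}(\tau)]$ (nonzero only through the nonlocal pseudopotential projectors) is of the same order using assumption~\ref{assm:VPSPNonLocal}.

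Finally, I would close the estimate with an $\Hone$-stability bound $\norm{\bchi}_{\Hone} \le (1+C\Delta t)\,\norm{\psi\al^h(t_{k-1})}_{\Hone}$, obtained by a short Gr\"onwall argument on $\frac{d}{dt}\norm{\bchi}_{\Hone}^2$ using that $[\del,\bHbar] = \del V_{KS}$ is bounded. Summing the $n = t_n/\Delta t$ local bounds then produces
\begin{equation*}
\norm{\psi\al^h(t_n) - \psi\al^{h,n}}_{\Ltwo} \;\le\; C(\Delta t)^2 t_n \max_{0\le t\le t_n}\norm{\psi\al^h(t)}_{\Hone},
\end{equation*}
as claimed. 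The hard part is the local analysis of the Magnus tail: the $\Hone$ (rather than $H^2$) norm on the right arises entirely from the cancellation of the kinetic operator inside $[\bHbar(\sigma),\bHbar(\tau)]$, which turns an apparently second-order differential operator into a first-order one. A secondary technical point is the $\Hone$-stability of the intermediate propagator $e^{s\bA_k}$, but this follows cleanly once one notes that the kinetic and potential parts of $\bHbar$ interact only through a bounded commutator under assumption~\ref{assm:VKSTimeDerBound}.
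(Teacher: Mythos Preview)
Your proposal is essentially correct and shares the same telescoping backbone and the same key structural insight (that $[\bHbar(\sigma),\bHbar(\tau)]$ is only a first-order operator in space, which is why the final bound involves $\norm{\psi\al^h}_{\Hone}$ rather than an $H^2$ norm). However, the route you take for the local error differs from the paper's in one important respect.

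You apply the Duhamel-type identity directly to the full difference $e^{\bA_k}-e^{\bAtilde_k}$ and then split at the level of generators, $\bA_k-\bAtilde_k = (\text{quadrature defect}) + (\text{Magnus tail})$. The paper instead splits at the level of propagators, introducing the intermediate exponent $\bAbar_k=\int_{t_{k-1}}^{t_k}-i\bHbar\,d\tau$ and treating $e^{\bA_k}-e^{\bAbar_k}$ and $e^{\bAbar_k}-e^{\bAtilde_k}$ separately. For the truncation piece $e^{\bA_k}-e^{\bAbar_k}$ the paper does \emph{not} use your variation-of-parameters formula; it follows Hochbruck's auxiliary-ODE device, setting $\bxi\al^k(t)=e^{\bB_k(t)}\bpsibar\al(t_{k-1})$ with $\bB_k(t)=\int_{t_{k-1}}^{t}-i\bHbar\,d\tau$, differentiating $\bgamma\al^k=\bpsibar\al-\bxi\al^k$, and bounding $\norm{(\bHbar(\tau)-\bGbar_k(\tau))\bpsibar\al(\tau)}$ with $\bGbar_k=i\,\text{dexp}_{\bB_k}(\dot\bB_k)$. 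The payoff is that the commutator lands on the \emph{exact} semi-discrete state $\bpsibar\al(\tau)$, so $\norm{\bpsibar\al(\tau)}_{\bUbar}\sim\norm{\psi\al^h(\tau)}_{\Hone}$ enters directly, with no auxiliary stability estimate needed.

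In your approach the commutator lands on $\bchi=e^{s\bA_k}\bpsibar\al(t_{k-1})$, forcing you to prove an $\Hone$-stability bound for $e^{s\bA_k}$. Your sketch of this (``Gr\"onwall on $\tfrac{d}{dt}\norm{\bchi}_{\Hone}^2$ using $[\del,\bHbar]=\del V_{KS}$ bounded'') is the right idea but is stated for the wrong flow: $s$ is not physical time and $\tfrac{d}{ds}\bchi=\bA_k\bchi$, with $\bA_k$ the full (infinite) Magnus operator, not $-i\bHbar(t)$. One can still close this since $\bA_k=-i\!\int\bHbar\,d\tau+O((\Delta t)^3)$ and the leading part behaves like a frozen-Hamiltonian evolution, but it is more delicate than you indicate. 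The paper's auxiliary-ODE route sidesteps this issue entirely, which is its main advantage; your route is more elementary in that it uses a single integral identity for both error sources, at the cost of the extra stability lemma.
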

\begin{proof} 
To begin with, we introduce the following operators 
\begin{subequations} \label{eq:RS}
    \begin{equation} \label{eq:S}
        \bS_0^k=e^{\bA_k}e^{\bA_{k-1}} \ldots e^{\bA_1} = \prod_{l=0}^{k-1}e^{\bA_{k-l}} \quad  \text{for}\,\, 0 < k\leq n\,, \quad \bS_0^0 = I 
    \end{equation}
    \begin{equation} \label{eq:R}
        \bR_k^n=e^{\bAtilde_n}e^{\bAtilde_{n-1}} \ldots e^{\bAtilde_{k+1}} = \prod_{l=0}^{n-k-1}e^{\bAtilde_{n-l}} \quad \text{for}\,\, 0\leq k < n\,, \quad \bR_n^n = I \,.
    \end{equation}
\end{subequations}
To elaborate, $S_0^k$ denotes the exact Magnus propagator from $t_0$ to $t_k$, and $R_{k}^n$ denotes the second-order Magnus propagator from $t_k$ to $t_n$.
Let $\bpsi\al^h(t_n)$ and $\bpsi\al^{h,n}$ denote the vectors containing the finite-element expansion coefficients for $\psi\al^h(t_n)$ and $\psi\al^{h,n}$, respectively. Further, let $\bpsibar\al(t_n)=\bM^{1/2}\bpsi\al(t_n)$ and $\bpsibar\al^n=\bM^{1/2}\bpsi\al^n$. Thus, we can rewrite the time-discretization error in $\bpsibar\al(t_n)$ in terms of the following telescopic series,
    \begin{equation} \label{eq:PsiTimeDiscreteErrorTelescopic}
        \begin{split}
            \bpsibar\al(t_n)-\bpsibar\al^n &= \left(\bR_n^n\bS_0^n-\bR_0^n\bS_0^0\right)\bpsibar\al(0) = \sum_{k=1}^n\left(\bR_k^n\bS_0^{k}-\bR_{k-1}^n\bS_0^{k-1}\right)\bpsibar\al(0)\,. \\
        \end{split}
    \end{equation}
Noting that $\bS^{k}_0=e^{\bA_{k}}\bS^{k-1}_0$ and  $\bR_{k-1}^n=\bR_{k}^ne^{\bAtilde_{k}}$, we rewrite the above equation as
    \begin{equation}\label{eq:PsiTimeDiscreteErrorSplit}
        \begin{split}
            \bpsibar\al(t_n)-\bpsibar\al^n &= \sum_{k=1}^n\left(\bR_k^n\bS_0^{k}-\bR_{k-1}^n\bS_0^{k-1}\right)\bpsibar\al(0)
            = \sum_{k=1}^n\left(\bR_{k}^n e^{\bA_{k}} \bS_0^{k-1}-\bR_{k}^n e^{\bAtilde_{k}}\bS_0^{k-1}\right)\psibar\al(0)\\
            & = \sum_{k=1}^n\bR_{k}^n\left(e^{\bA_{k}}-e^{\bAtilde_{k}}\right)\bS_0^{k-1}\bpsibar\al(0)\\
            & = \sum_{k=1}^n\bR_{k}^n\left(e^{\bA_{k}}-e^{\bAtilde_{k}}\right)\bpsibar\al(t_{k-1})\,.
        \end{split}       
    \end{equation}
Since $\bR_k^n$ is a unitary operator, bounding $\left(\bpsibar\al(t_n)-\bpsibar\al^n\right)$ reduces to finding the bound on $\left(e^{\bA_k}-e^{\bAtilde_k}\right)\bpsibar\al(t_{k-1})$. To this end, we  extend the proof presented in ~\cite{Hochbruck2003} to the non-linear case of the TDKS equations. To begin with, we split $\left(e^{\bA_k}-e^{\bAtilde_k}\right)\bpsibar\al(t_{k-1})$ as   
\begin{equation} \label{eq:ak}
        \left(e^{\bA_k}-e^{\bAtilde_k}\right)\bpsibar\al(t_{k-1})=\left(e^{\bA_k}-e^{\bAbar_k}\right)\bpsibar\al(t_{k-1}) + \left(e^{\bAbar_k}-e^{\bAtilde_k}\right)\bpsibar\al(t_{k-1})\,, 
\end{equation}
where $\bAbar_k=\int_{t_{k-1}}^{t_k}-i\bHbar[\rho(t)]dt$. The two terms on the right hand side of the above equation denote the error due to truncation of the Magnus expansion and the time integral approximation, respectively.

In order to bound the error in the first term on the right side of Eq. ~\ref{eq:ak}, we introduce the following auxiliary function 
\begin{equation} \label{eq:xi}
        \bxi^k\al(t)=e^{\bB_k(t)}\bpsibar\al(t_{k-1}), \qquad \forall t \in [t_{k-1},t_k] \,,
\end{equation}
    where $\bB_k(t)=\int_{t_{k-1}}^{t}-i\bHbar[\rho(\tau)]d\tau$. We remark that $\bxi\al^k(t)$ denotes the time-evolution of $\bpsibar\al(t_{k-1})$ using the truncated the Magnus expansion, in the time interval $[t_{k-1},t_k]$.
Differentiating the above equation and using the result of Eq. ~\ref{eq:DExpDt} gives
\begin{equation} \label{eq:xiDifferentialEq}
        \dot{\bxi}\al^k(t)= \text{dexp}_{\bB_k(t)}(\dot{\bB}_k(t))e^{\bB_k(t)}\bxi\al^k(t_{k-1})=-i\bGbar_k(t)\bxi\al^k(t)\,, \qquad \forall t \in [t_{k-1},t_k],
\end{equation}
    where $\bGbar_k(t)=i\text{dexp}_{\bB_k(t)}(\dot{\bB}_k(t))$. We observe that $\bGbar_k$ is Hermitian. This can be proven as follows. First, note  that for two Hermitian (or skew-Hermitian) matrices $\bX$, $\bY$, the operator $\text{ad}_{\bX}(\bY)$ is skew-Hermitian. Second, owing to the Hermiticity of
    $\bHbar$, both $\bB_k(t)=\int_{t_{k-1}}^{t}-i\bHbar[\rho(\tau)]d\tau$ and $\dot{\bB}_k(t)=-i\bHbar(t)$ ($\forall t \in [t_{k-1},t_k]$) are skew-Hermitian. Thus, by expanding $\text{dexp}_{\bB_k(t)}(\dot{\bB}_k(t))$ and using the above two arguments, it can be shown that $\bGbar_k$ is Hermitian.
We now introduce the function $\bgamma\al^k(t) = \bpsibar\al(t)-\bxi\al^k(t),~\forall t \in[t_{k-1},t_k]$. It is important to note that
\begin{equation} \label{eq:gammatk}
    \bgamma\al^k(t_k)=\bpsibar\al(t_k)-\bxi\al^k(t_k)=\left(e^{\bA_k}-e^{\bAbar_k}\right)\bpsibar\al(t_{k-1})\,,
\end{equation}
where the second equality follows from the definition of $\bxi\al^k$ (Eq.~\ref{eq:xi}) and the fact that $\bB_k(t_k)=\int_{t_{k-1}}^{t_k}-i\bHbar[\rho(\tau)]d\tau = \bAbar_k$. Thus, the problem of bounding $\left(e^{\bA_k}-e^{\bAbar_k}\right)\bpsibar\al(t_{k-1})$ (the first term in Eq.~\ref{eq:ak}) reduces to bounding $\bgamma\al^k(t_k)$. To this end, we proceed, by first expressing the time-derivative of $\bgamma\al^k$ as
\begin{equation} \label{eq:GammaDifferentialEq}
        \dot{\bgamma}\al^k(t)= -i\bGbar_k(t)\bgamma\al^k(t)-i(\bHbar(t)-\bGbar_k(t))\bpsibar\al(t), \qquad \forall t\in [t_{k-1},t_k] \,,
\end{equation}
which follows from Eqs. ~\ref{eq:TDKSMatVecStandard} and ~\ref{eq:xiDifferentialEq}.
Now, taking the dot product with $\bgamma\al^k(t)^\dag$ on both sides yields
    \begin{equation} \label{eq:GammaDotProd}
        \bgamma\al^k(t)^\dag\dot{\bgamma}\al^k(t)= -i\bgamma\al^k(t)^\dag\bGbar_k(t)\bgamma\al^k(t) -i\bgamma\al^k(t)^\dag\left(\bHbar(t)-\bGbar_k(t)\right)\bpsibar\al(t)\,.
    \end{equation}
We note that $2\text{Re}\{\bgamma\al^k(t)^\dag\dot{\bgamma}\al^k(t)\}=\frac{d}{dt}\norm{\bgamma\al^k}^2$, where $\norm{.}$ represents the Euclidean norm of a vector. Further, we note $\bgamma\al^k(t)^\dag\bGbar_k(t)\bgamma\al^k(t)$ is real, owing to the Hermiticity of $\bGbar$. Thus, comparing the real parts of the above equation results in
    \begin{equation} \label{eq:GammaModDifferential}
        \frac{1}{2}\frac{d}{dt}\norm{\bgamma\al^k}^2=\text{Im}\{\bgamma\al^k(t)^\dag\left(\bHbar(t)-\bGbar_k(t)\right)\bpsibar\al(t)\}\,.
    \end{equation}
Consequently,
\begin{equation} \label{eq:GammaModDifferentialInequality}
    \frac{d}{dt}\norm{\bgamma\al^k}\leq \norm{\left(\bHbar(t)-\bGbar_k(t)\right)\bpsibar\al(t)}\,.
\end{equation}
Time integrating the above equation yields
\begin{equation} \label{eq:GammaModInequality}
    \norm{\bgamma\al^k}(t_k)=\norm{\bpsibar\al(t_k)-\bxi\al^k(t_k)}=\norm{\left(e^{\bA_k}-e^{\bAbar_k}\right)\bpsibar\al(t_{k-1})}\leq\int_{t_{k-1}}^{t_k}\norm{\left(\bHbar(\tau)-\bGbar_k(\tau)\right)\bpsibar\al(\tau)}d\tau\,,
\end{equation}
where we have used the result of Eq. ~\ref{eq:gammatk} along with the fact that $\norm{\bgamma\al^k}(t_{k-1})=\bpsibar\al(t_{k-1})-\bxi\al^k(t_{k-1}) = 0$ (by the definition of $\bxi\al^k(t)$, cf. Eq. ~\ref{eq:xi}). Thus, the problem of bounding $\norm{\left(e^{\bA_k}-e^{\bAbar_k}\right)\bpsibar\al(t_{k-1})}$ further simplifies to finding a bound for $\int_{t_{k-1}}^{t_k}\norm{\left(\bHbar(\tau)-\bGbar_k(\tau)\right)\bpsibar\al(\tau)}d\tau$. To this end, we use the fact that $\bGbar_k(\tau) = i\text{dexp}_{\bB_k(\tau)}(\dot{\bB}_k(\tau))$ and the definition of the operator $\text{dexp}_{\bX}(\bY)$, to obtain
    \begin{equation} \label{eq:HGDiff}
        \bHbar(\tau)-\bGbar_k(\tau) = -\frac{i}{2} [\bB_k(\tau),\dot{\bB}_k(\tau)] + h.o.t = -\frac{i}{2}\int_{t_{k-1}}^{\tau}[\bHbar(\tau),\bHbar(\sigma)]d\sigma + h.o.t. \,.
    \end{equation}
%
In order to bound $[\bHbar(\tau),\bHbar(\sigma)]$, we begin by rewriting $\bHbar$ in terms of $\bUbar$ and $\bVbar$, i.e., its kinetic and Kohn-Sham potential components. To elaborate, $\bUbar=\bM^{-1/2}\bU\bM^{-1/2}$ and $\bVbar=\bM^{-1/2}\bV\bM^{-1/2}$, with $U_{jk}=\int_{\Omega}\nabla N_j(\br).\nabla N_k(\br)\dr$ and $V_{jk}=\int_{\Omega}V_{KS}^h[\rho^h](\br,t)N_j(\br)N_k(\br)\dr$. Noting that $\bUbar$ is time-independent, we Taylor expand $\bHbar(\sigma)$ about $\tau$ to rewrite $[\bHbar(\tau),\bHbar(\sigma)]$ as 
    \begin{equation} \label{eq:HCommutator2}
        [\bHbar(\tau),\bHbar(\sigma)] = [\bHbar(\tau),\bVbar'(\tau)](\sigma-\tau) + \order((\sigma-\tau)^2)\,,
    \end{equation}
where $\bVbar'(\tau)=\frac{d}{dt}(\bVbar(t))\rvert_{\sigma=\tau}$. Thus, using the above relation in Eq. ~\ref{eq:HGDiff} we get 
    \begin{equation} \label{eq:HGPsiDiff}
        (\bHbar(\tau)-\bGbar_k(\tau))\bpsibar\al(\tau)= \frac{i}{4}\left([\bHbar,\bVbar'(\tau)]\bpsibar\al(\tau)\right)(\tau-t_{k-1})^2 + \order((\tau-t_{k-1})^3)\,.
    \end{equation}
We now invoke the boundedness assumption on $\bVbar'$ (assumption ~\ref{assm:VKSTimeDerBound}), and the norm equivalence of $\bUbar$ and $\bHbar$ (assumption ~\ref{assm:NormEqui}), to obtain 
\begin{equation} \label{eq:HGPsiDiffBound}
    \norm{(\bHbar(\tau)-\bGbar_k(\tau))\bpsibar\al(\tau)} \leq C (\tau-t_{k-1})^2 \norm{\bpsibar\al(\tau)}_{\bUbar} +  \order((\tau-t_{k-1})^3).
\end{equation}
%
Thus, substituting the above result into Eq. ~\ref{eq:GammaModInequality} provides the following bound
    \begin{equation} \label{eq:GammaModInequalitytk2}
        \norm{\left(e^{\bA_k}-e^{\bAbar_k}\right)\bpsibar\al(t_{k-1})} \leq C (\Delta t)^3 \max_{t_{k-1} \leq t \leq t_k} \norm{\bpsibar\al(t)}_{\bUbar}\,. 
    \end{equation}
This provides a bound for the first term (truncation error) on the right side of Eq. ~\ref{eq:ak}. In order to bound the second term on the right side of Eq. ~\ref{eq:ak}, i.e., the error due to mid-point quadrature rule, we begin with the following identity \begin{equation} \label{eq:expAbarAtilde}
    e^{\bAbar_k}-e^{\bAtilde_k} = \int_0^1 \frac{d}{dx}\left(e^{(1-x)\bAtilde_k}e^{x\bAbar_k}\right)\;dx= \int_0^1 e^{(1-x)\bAtilde_k}(\bAbar_k-\bAtilde_k)e^{x\bAbar_k}\,dx.
\end{equation}
Furthermore, we note that for a function $f(x)$ if $F_{1/2}$ denotes the midpoint approximation to $F = \int_a^b f(x)\; dx$, then $\modulus{F-F_{1/2}}\leq C (b-a)^3 f''(\eta)$, for some $\eta \in [a,b]$. Thus, for the mid-point integration rule, $\norm{\bAbar_k-\bAtilde_k}\leq C (\Delta t)^3\norm{\frac{d^2}{dt^2}(\bHbar)\rvert_{t'}}$ for some $t' \in [t_{k-1},t_k]$. Using this result along with the unitarity of the operators $e^{(1-x)\bAtilde_k}$ and $e^{x\bAbar_k}$, we obtain
   \begin{equation} \label{eq:expAbarAtildeBound}
       \norm{\left(e^{\bAbar_k}-e^{\bAtilde_k}\right)\bpsibar\al(t_{k-1})} \leq \norm{(\bAbar_k-\bAtilde_k)} \leq C(\Delta t)^3\norm{\frac{d^2}{dt^2}(\bHbar)\rvert_{t'}},\, \text{for some } t' \in [t_{k-1},t_k]\,.
   \end{equation}
Noting that $\frac{d^2}{dt^2}\bHbar = \frac{d^2}{dt^2}\bVbar, \,\forall t \in [t_{k-1},t_k]$, and invoking the boundedness assumption on $\frac{d^2}{dt^2}\bVbar$ (assumption ~\ref{assm:VKSTimeDerBound}), we get 
\begin{equation} \label{eq:MidPointBound}
        \norm{\left(e^{\bAbar_k}-e^{\bAtilde_k}\right)\bpsibar\al(t_{k-1})} \leq C (\Delta t)^3\,. 
\end{equation}
Thus, using the results of Eqs. ~\ref{eq:GammaModInequalitytk2} and ~\ref{eq:MidPointBound} in Eq.~\ref{eq:PsiTimeDiscreteErrorSplit} along with the unitarity of the operators $\bR_k^n$, yields
   \begin{equation} \label{eq:PsiTimeDiscreteErrorBound}
       \norm{\bpsibar\al(t_n)-\bpsibar\al^n} \leq C (\Delta t)^2 t_n \max_{t_0 \leq t \leq t_n} \norm{\bpsibar\al(t)}_{\bUbar}.
   \end{equation}
Finally, noting that the coefficient vectors for the spatial fields $\psi\al^h(\br,t_n)$ and $\psi\al^{h,n}(\br)$ are given by $\bM^{-1/2}\bpsibar\al^h(t_n)$ and $\bM^{-1/2}\bpsibar\al^{h,n}$, respectively, it is now trivial to arrive at Eq.~\ref{eq:PsiTimeDiscreteError} from the above equation.

%
%
\end{proof}
\twocolumngrid
\bibliography{ref}
\bibliographystyle{apsrev4-1}
\end{document}